\documentclass{article}
\usepackage[margin=1in]{geometry}
\usepackage{graphicx} \usepackage{amsmath, amssymb, amsthm}
\usepackage{physics, complexity, mathtools, booktabs}
\usepackage{tikz-cd}
\usepackage{stmaryrd}
\usepackage{mleftright}
\usepackage{subcaption}
\usepackage{appendix}
\usepackage{hyperref}
\usepackage{authblk}
\usepackage{aliascnt}
\usepackage[nameinlink,capitalize,noabbrev]{cleveref}
\usepackage{xspace}
\usepackage{enumerate}

\usepackage{xcolor} \usepackage{ninecolors}\NineColors{saturation=high}

\usepackage[
    backend=biber,
    style=alphabetic, 
    sorting=none
]{biblatex}
\hypersetup{
    colorlinks=true,
    linkcolor=blue3,
    linkbordercolor=blue3,
    filecolor=blue3,      
    urlcolor=black,
    urlbordercolor=cyan,
    citecolor=blue3,
    citebordercolor=cyan,
    pdfpagemode=FullScreen,
}
\usepackage[useregional]{datetime2}
\usepackage{datetime}
\newdateformat{monthdayyeardate}{\monthname[\THEMONTH]~\THEDAY, \THEYEAR}

\title{Natural proofs for quantum state preparation lower bounds}

\date{\monthdayyeardate\today}
\date{}

\newtheorem{theorem}{Theorem}[section]

\newtheorem{corollary}{Corollary}[theorem]

\newaliascnt{lemma}{theorem}
\newtheorem{lemma}[lemma]{Lemma}
\aliascntresetthe{lemma}

\crefname{lemma}{lemma}{lemmas}
\Crefname{lemma}{Lemma}{Lemmas}

\newtheorem*{remark}{Remark}

\theoremstyle{definition}
\newtheorem{definition}{Definition}[section]

\newtheorem{claim}{Claim}[section]
\crefname{claim}{Claim}{Claims}

\newtheorem{proposition}{Proposition}[section]

\newcommand{\mc}{\mathcal}

\newcommand{\mbb}{\mathbb}
\newcommand{\mr}{\mathrm}

\newcommand{\Cliff}{\textsf{Cliff}}
\newcommand{\AND}{\textsf{AND}}

\newcommand{\MH}{\textsf{MH}}

\newcommand{\AxCQ}[1]{\textsf{A}_{#1}\textsf{CQ}}

\newcommand{\PRF}{\textsf{PRF}}

\newcommand{\Samp}{\mathsf{Samp}}

\newcommand{\tnorm}[1]{\norm{#1}_1}
\newcommand{\enc}{\operatorname{enc}}
\newcommand{\Unif}{\operatorname{Unif}}
\newcommand{\Pois}{\operatorname{Pois}}

\newcommand{\eps}{\varepsilon}
\renewcommand{\epsilon}{\eps}

\newcommand{\wh}[1]{\widehat{#1}}
\newcommand{\sse}{\subseteq}

\newcommand{\zo}{\{0,1\}}

\newcommand{\sel}{\mathsf{sel}}
\renewcommand{\succ}{\mathsf{succ}}

\newcommand{\Expect}{\mathbb{E}} \newcommand{\ZZ}{\mathbb{Z}}
\newcommand{\NN}{\mathbb{N}}
\newcommand{\CC}{\mathbb{C}}

\newcommand{\RR}{\mathbb{R}}

\newcommand{\calB}{\mathcal{B}}
\newcommand{\calN}{\mathcal{N}}

\newcommand{\calP}{\mathcal{P}}

\newcommand{\calG}{\mathcal{G}}
\newcommand{\calD}{\mathcal{D}}

\newcommand{\calC}{\mathcal{C}}

\renewcommand{\abs}[1]{{\left| #1 \right|}}

\newcommand{\abra}[1]{{\left\langle #1 \right\rangle}}
\newcommand{\pbra}[1]{{\left( #1 \right)}}
\newcommand{\cbra}[1]{{\left\{ #1 \right\}}}
\newcommand{\sbra}[1]{{\left[ #1 \right]}}
\DeclarePairedDelimiter\ceil{\lceil}{\rceil}
\DeclarePairedDelimiter\floor{\lfloor}{\rfloor}

\newif\ifcommentsfootnote
\commentsfootnotetrue

\newif\ifcommentsinline
\commentsinlinetrue
\definecolor{cb-salmon-pink}{RGB}{255, 182, 119}
\ifcommentsinline
    \newcommand{\nat}[1]{\textcolor{blue}{\textbf{Nat: }}\textcolor{gray}{#1}}
    \newcommand{\hnote}[1]{\textcolor{red}{[\textbf{HY: #1}]}}
\else
    \newcommand{\nat}[1]{}
    \newcommand{\hnote}[1]{}
\fi

\ifcommentsfootnote
    \newcommand{\natf}[1]{\footnote{{\bf \color{violet}Nat:} {#1}}}
\else
    \newcommand{\natf}[1]{}
\fi

\newcommand{\TCZ}{\mathsf{TC}^0}

\newcommand{\amplistwords}{amplitude list\xspace}
\newcommand{\amplistswords}{amplitude lists\xspace}
 \newcommand{\al}{\mathsf{al}}

\newcommand{\matdesc}{\mathsf{mat}}
\newcommand{\Haar}{\mathrm{Haar}}
\newcommand{\PRAL}{PRAL\xspace}

\newcommand{\pralwords}{pseudorandom amplitude-list\xspace}
\newcommand{\Pralwords}{Pseudorandom amplitude-list\xspace}  

\newcommand{\TVD}{\mathsf{TVD}}

\usepackage{tcolorbox}
\usepackage{tikz}

\tcbset{
  problembox/.style={
    colback=black!1,
    colframe=black!45,
    boxrule=0.45pt,
    arc=1pt,
    left=6pt,
    right=6pt,
    top=5pt,
    bottom=5pt,
    before skip=10pt,
    after skip=10pt
  }
}

\begin{document}

\author{Christine Li}
\author{Natalie Parham}
\affil{Columbia University}
\maketitle

\begin{abstract}
We identify a barrier that helps explain why proving stronger quantum state-preparation lower bounds has been so difficult. 
In particular, we establish a quantum analogue of the Razborov–Rudich natural proofs barrier for state-preparation lower bounds.
We call a property of quantum states \emph{natural} if it holds for a sufficiently large fraction of Haar-random states and can be efficiently tested when given all of the state’s amplitudes. 
Under a standard cryptographic assumption, we show that no natural property can prove superpolynomial state-preparation lower bounds even against a fixed level of the Magic Hierarchy. 
We show that several existing state-preparation lower-bound techniques are natural in our sense, including arguments based on approximate Pauli degree, obstructions to being the unique ground state of a local Hamiltonian, and mutual information. 
\end{abstract}

\hypersetup{linktocpage}
\setcounter{tocdepth}{2}
{
\small
\tableofcontents
}
\newpage

\section{Introduction}

Understanding the limitations of quantum circuits is a fundamental goal in quantum computation. For state preparation, a circuit lower bound shows that a particular quantum state cannot be prepared within specified resource constraints, like time or space. Such bounds help identify which features of quantum states genuinely require complex quantum dynamics, rather than merely reflecting limitations of currently known preparation algorithms. These questions also connect to quantum many-body physics, where circuit complexity provides one way to classify phases of matter and patterns of entanglement~\cite{chen2010local}.

Despite their importance, quantum circuit lower bounds are notoriously difficult to prove. One of the most basic and successful techniques is the \emph{lightcone argument}. In a shallow circuit of two-qubit gates, information cannot spread throughout the entire system, restricting the correlations that the circuit can create from an initially unentangled state. For example, a constant-depth circuit of two-qubit gates cannot prepare the CAT state $\frac{1}{\sqrt{2}}(\ket{0^n} + \ket{1^n})$, since no qubit can be correlated with the entire system. But this argument is no longer effective once circuits are allowed enough depth or global operations. Beyond this regime, we have far fewer general techniques to certify that a state is difficult to prepare.

\paragraph{Why has it been so difficult to prove stronger lower bounds?}
 One clue comes from \textit{classical} circuit complexity.
Rosenthal showed that state-preparation lower bounds against constant-depth circuits with many-qubit fanout gates would imply breakthrough \emph{classical} circuit lower bounds ~\cite{rosenthal2024efficient}.
In particular, this would prove that some explicit Boolean function cannot be implemented in the circuit model $\TCZ$, a classical circuit class for which no superpolynomial lower bounds for explicit Boolean functions are known, even for depth 4.

Not only has proving $\TCZ$ lower bounds been a longstanding open problem for decades, but also classical complexity theory gives evidence that a broad class of natural approaches will not suffice. 
In their seminal work, Razborov and Rudich established the \emph{natural proofs barrier}, a ``meta-theorem'' showing that, under a cryptographic assumption, so called \emph{natural} proof techniques cannot establish lower bounds against sufficiently powerful classical circuits \cite{razborov1997naturalproofs}.
Under suitable cryptographic assumptions, this obstruction can apply even to $\TCZ$ circuits. 
Their result is unusual in that it places a limitation not on the circuits being studied, but on the \emph{kinds of arguments} that can prove lower bounds against them.

In this work we ask: \emph{are there analogous limitations on how we can prove quantum lower bounds?}
Their connection to classical circuit complexity discussed above suggest that strong quantum circuit lower bounds should be difficult to prove, 
but this does not help us to identify proof approaches for circuit lower bounds that are themselves subject to cryptographic barriers.
In this paper, we establish a large class of general proof approaches that cannot be useful for proving stronger state preparation lower bounds without breaking some cryptographic assumption. 

A general proof strategy is to first identify a property of the target state and then show that no state prepared by the allowed circuits has that property.
Following the style of Razborov and Rudich, we informally call a property of quantum states \emph{natural} if it satisfies the following two conditions:
\begin{itemize}
\item \emph{Constructivity:} Given a list of all the amplitudes of a state up to sufficiently high precision, a classical algorithm can decide efficiently whether the state has the property. Here, efficient means polynomial in the length of the \amplistwords (and thus may be exponential in the number of qubits).
    \item \emph{Largeness:} The property holds for a sufficiently large fraction of Haar-random states.
\end{itemize}
Finally, we call a property \emph{useful} against a circuit model $\calC$ if no state approximately preparable by $\calC$ has the property. We say that a lower-bound proof against $\calC$ is \emph{natural} if it can be recast as identifying a natural property that is useful against $\calC$.

We emphasize that constructivity is meant to capture a classical mathematical test of a state property, rather than an efficient physical procedure for verifying the property. The amplitude list is analogous to the truth table of a Boolean function in the formulation of classical natural proofs. This is a deliberately permissive notion: the algorithm may inspect all $2^n$ amplitudes and run in time polynomial in the list length.

\begin{tcolorbox}[problembox]
\textbf{Example: Pauli expectations.} 
As an example, we describe a lower bound against Clifford circuits that is a natural proof. 
Consider states with the property that the magnitudes of all nonidentity Pauli expectations are bounded away from one: $\max_{P \neq I} \abs{\bra{\psi} P \ket{\psi}}< 0.9$. This condition can be tested from the \amplistwords in polynomial time (constructivity) and holds for Haar-random states with high probability (largeness). 
In contrast, every state prepared by a Clifford circuit has many nonidentity Pauli expectations with absolute value one (usefulness). 
This example shows how a natural property can certify hardness for a restricted state preparation model: Clifford circuits. 
\end{tcolorbox}

\paragraph{Our result} 
Our main result places limits on the extent to which these natural-proof approaches can establish lower bounds against more powerful quantum circuit models.
Our barrier already applies to a restricted circuit model:
the \emph{magic hierarchy}, whose circuits alternate between
arbitrary Clifford circuits and constant-depth circuits of
one- and two-qubit gates~\cite{parham2025mh}. Its levels bound
the number of alternations.

\begin{theorem}[Informal, natural proofs barrier against state preparation, see \Cref{thm:main:natpfsMH:body}]
    Assuming that classically subexponentially secure pseudorandom functions\footnote{We say a family of functions with security parameter $\lambda$ is subexponentially-secure if there exists a constant $\alpha$ such that any distinguishing algorithm of size $2^{\lambda^\alpha}$ cannot distinguish the function from truly random with advantage better than $2^{-\lambda^\alpha}$.} can be implemented by polynomial-size, depth-$d$ threshold circuits, then no natural property can be used to prove that a state cannot be $e^{-n}$-approximately prepared by polynomial-size circuits in the $(8d+20)$th level of the magic hierarchy.
\end{theorem}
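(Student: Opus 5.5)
The argument follows Razborov--Rudich: we argue by contradiction and build a pseudorandom family of states that the magic hierarchy can prepare. The central object is a \emph{pseudorandom amplitude-list} (\PRAL) family. Fix a subexponentially secure $\PRF$ $\{f_k\}$ computable by poly-size depth-$d$ threshold circuits. With $f_k$ we define a state $\ket{\psi_k}$ on $n$ qubits. Its amplitude list is obtained by using the PRF outputs on each basis string $x\in\zo^n$ as random bits. These bits are used to sample an approximately Gaussian complex amplitude $\alpha_x$, for example a sum of uniform bits or an inverse-CDF discretization to $\mathrm{poly}(n)$ bits of precision. We then normalize. If the $f_k$ were replaced by a truly random function, the normalized list would be $e^{-n}$-close (after rounding) to a distribution close in total variation to Haar, because a normalized vector of i.i.d.\ complex Gaussians is exactly Haar.

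\textbf{Step 1 (preparability).} Show that $\ket{\psi_k}$ is $e^{-n}$-approximately preparable in level $8d+20$ of $\MH$. We rely on the fact from~\cite{parham2025mh} that a constant-depth threshold circuit can be simulated coherently by a constant number of magic-hierarchy levels. Clifford layers give free fanout and parity, and constant-depth layers give the local non-Clifford gates. Threshold/majority gates are implemented through the quantum-Fourier/phase-kickback counting trick, which costs a constant number of levels per threshold layer; this is where the factor of $8$ per depth layer comes from. The preparation proceeds as follows. Prepare $\sum_x\ket{x}$. Compute the discretized Gaussian sample $\ket{\alpha_x}$ into an ancilla in depth $O(d)$. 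Convert this ancilla value into amplitude, either through a controlled rotation plus postselection-free amplitude loading or through a phase together with a magnitude written via a per-$x$ rotation $\ket{0}\mapsto \alpha_x/M\ket{0}+\dots$. Finally uncompute. The normalization must be handled without amplitude amplification. One option is to choose the magnitude distribution bounded so that the rotation succeeds with constant probability, and to repeat in parallel. A cleaner option is to sample uniformly random phases and magnitudes from a distribution whose normalized version is still statistically close to Haar. The additive $20$ absorbs these constant-level overheads.

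\textbf{Step 2 (distinguisher).} Suppose a natural property $\Pi$ is useful against this $\MH$ level. By usefulness, every $\ket{\psi_k}$ fails $\Pi$. By largeness, a Haar-random state (and hence the truly-random-function version of the list) satisfies $\Pi$ with probability at least $2^{-O(n)}$. Take the security parameter $\lambda=n^{c}$ with $c$ large enough relative to $\alpha$. We build a distinguisher: given oracle access to $g$, it queries all $2^n$ inputs, forms the amplitude list in time $2^{O(n)}$, and runs the constructive test in time $\mathrm{poly}(2^n)$. Its advantage is at least $2^{-O(n)}$ up to the $e^{-n}$ approximation errors, and its running time is $2^{O(n)}\le 2^{\lambda^\alpha}$. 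This contradicts subexponential security.

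The main obstacle is Step 1, specifically turning a threshold-computed classical sample into an actual amplitude within constant $\MH$ depth and with $e^{-n}$ precision, while keeping the induced distribution close to Haar. We also need a robustness check that $\Pi$'s verdict is stable under $e^{-n}$ perturbations of the amplitude list. We would take this into the definition of constructivity, by testing rounded amplitudes, or we would bound the rounding error explicitly.
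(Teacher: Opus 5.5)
Your architecture matches the paper's: a PRF-driven Gaussian sampler, normalization, coherent $\TCZ$ simulation plus amplitude loading in \MH, and a $2^{O(n)}$-time reduction with the PRF security parameter polynomial in $n$. However, two load-bearing pieces are missing.

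First, nothing in your proposal yields level $8d+20$, and your accounting is inconsistent. In the paper, $8d=2\cdot 4d$: a depth-$d$ $\TCZ$ circuit is simulated in $\AxCQ{4d}$, and the loader calls the amplitude oracle and then its inverse, so an oracle in $\AxCQ{\sigma}$ costs $2\sigma+4$ levels. The $20$ comes from $\sigma=4d+8$, i.e.\ from a Gaussian sampler in $\AxCQ{8}$. With $8$ levels per threshold layer as you claim, your own compute--load--uncompute already costs about $16d$. Your candidate samplers also do not fit.
\begin{itemize}
\item A sum of $K=\mathrm{poly}(n)$ uniform bits deviates from a Gaussian at inverse-polynomial scale, which is detectable when all $N$ coordinates are visible. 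For example, the empirical fourth moment $N\sum_x\abs{\alpha_x}^4$ concentrates near $2-\Theta(1/K)$ rather than near $2$. The sampler must be exponentially accurate, as in the paper's requirements $\xi\le\eps/(16CN^{3/2}2^T)$ and $\eta_{\mathrm{samp}}\le\eps/(4N)$.
\item An inverse-CDF sampler, like a rotation-based loader, needs coherent evaluation of $\sqrt{-\ln u}$, $\cos$, $\sin$ or $\arcsin$ to exponential precision, and you give no level count for this.
\end{itemize}
The paper's design avoids all coherent transcendental arithmetic. The magnitude is $\min_{i\le K}\max\{U_{i,0},U_{i,1}\}$, where the Poisson count $K$ comes from comparing a uniform integer against hardwired thresholds. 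It has tail $e^{-\mu t^2}-e^{-\mu}$ and needs only comparisons, via parallel winner selection in $\AxCQ{8}$. The loader encodes $r_x$ through the product state $\ket{w}\propto\sum_j2^{-(j+1)/2}\ket{j}$ and an addressed bit lookup. The success test and coherent selection are hidden inside the layers of $U_v^\dagger$.

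Second, the statistical step is misdiagnosed. Stability of $\Pi$'s verdict is not available, and building it into constructivity would change the theorem. The tester is constrained only on $P_n^{\mathrm{yes}}\cup P_n^{\mathrm{no}}$, and a random-function state close to a YES-state need not lie in either set.

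What is needed is that the truncated lists themselves are close: $\TVD(\al_T(D_{\mathrm{rf}}),\al_T(\Haar_n))$ must be well below $2^{-cn}$ for the property's largeness constant $c$, which may exceed $1/\ln 2$. So the sampler precision must be tied to $c$; the paper takes $\lambda=\lceil cn\rceil+1$ and statistical error $2^{-\lambda-3}$. The $e^{-n}$ preparation error is irrelevant to the distinguisher and enters only through usefulness.

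Because nearby vectors can truncate differently, the paper couples to a normalized Gaussian $\hat g$. Each real coordinate of a Haar state has density $O(\sqrt N)$, so some coordinate lies within $\rho$ of one of the $O(2^T)$ truncation points with probability only $O(N^{3/2}2^T\rho)$. The same estimate bounds the failure probability of the reduction's exact reconstruction of $\al_T(\psi_H)$, which your Step 2 also needs.

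Finally, your loader's success probability $\norm{\alpha}^2/(NM^2)$ depends on the key. For keys with small $\norm{\alpha}$ you have not shown $\ket{\psi_k}$ is preparable, so ``every $\ket{\psi_k}$ fails $\Pi$'' is unjustified. The paper avoids this by defining the target as $\ket{0^n}$ whenever $\norm{v}<\gamma\sqrt N$.
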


We previously already knew that  state-preparation lower bounds
at fixed levels of the magic hierarchy imply lower bounds for
fixed-depth classical threshold circuits
\cite{rosenthal2024efficient,parham2025mh}, but this does not by itself imply our theorem. 
\cite{rosenthal2024efficient,parham2025mh} do not show any connection between any notion of a natural property of quantum states to the a natural property of Boolean functions. 
Our theorem avoids the need for such a correspondence by
placing the obstruction directly on properties of quantum states.

We emphasize that our cryptographic assumption is \emph{entirely classical}. It is also the same assumption used to get the \textit{classical} natural proofs barrier against $\TCZ$ circuits. 
A concrete  psuedorandom function (\PRF) construction makes the level of our barrier explicit. The construction of Naor and Reingold~\cite{naor2004number} admits polynomial-size, depth-four threshold circuits, as shown by Krause and Lucks~\cite{krause2002prfsintc0}. Assuming this construction is classically subexponentially secure\footnote{This security follows from a sufficiently strong subexponential version of the decisional Diffie-Hellman assumption, via the quantitative security reduction of Naor and Reingold~\cite{naor2004number}. This Diffie-Hellman assumption is actually secure against quantum adversaries, but we can still use it because we only need classical security.}
, we obtain the following corollary.
\begin{corollary}[Informal]
    If the Naor-Reingold PRF construction~\cite{naor2004number} is classically subexponentially secure, then no natural property is useful against polynomial-sized circuits in the $52$nd level of the Magic Hierarchy for state preparation. 
\end{corollary}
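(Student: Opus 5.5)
This corollary follows by instantiating the main theorem (\Cref{thm:main:natpfsMH:body}) with a concrete pseudorandom function family. The plan has three steps: identify the family and its threshold-circuit depth, verify that it meets the hypothesis of the main theorem, and then compute the resulting level of the Magic Hierarchy.

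\textbf{Depth.} I will take the Naor--Reingold PRF $f_{p,g,\vec a}(x) = g^{a_0\prod_{i: x_i=1} a_i} \bmod p$, truncated or hashed to a single output bit as in \cite{naor2004number}. Krause and Lucks \cite{krause2002prfsintc0} show that this function is computable by polynomial-size threshold circuits of depth $d=4$. The underlying reason is that iterated multiplication and modular exponentiation with a preprocessed key lie in $\TCZ$ at small depth. I will cite their construction directly. If the main theorem needs a Boolean-valued PRF on $\lambda$-bit keys with input length $m$, I will fix the parameters so that the key length is polynomial in $m$ and the input length is $m$. Any standard conversion to a Boolean output, such as taking the least significant bit or a Goldreich--Levin-style inner-product bit, adds at most constant depth. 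I will check that the Krause--Lucks depth bound already counts this output extraction, so that the depth is still $4$.

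\textbf{Security.} The hypothesis of the corollary is that this PRF is classically subexponentially secure. That is exactly the security notion in the footnote to the main theorem: there is an $\alpha$ such that no distinguisher of size $2^{\lambda^\alpha}$ achieves advantage better than $2^{-\lambda^\alpha}$. The security condition therefore holds as assumed, and no further argument is needed for it.

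\textbf{Level.} With $d=4$, the main theorem says that no natural property is useful against polynomial-size circuits at level $8d+20 = 32+20 = 52$ of the Magic Hierarchy. The approximation error $e^{-n}$ carries over unchanged, so the conclusion is exactly the statement of the corollary.

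\textbf{Main obstacle.} The one real point to check is the depth accounting. I must confirm that the circuit model in the main theorem, meaning the threshold gates allowed and whether negations or input literals cost an extra layer, matches the convention of \cite{krause2002prfsintc0}. If the conventions differ by a constant, the level would shift. In that case I would re-derive the depth-$4$ bound in the paper's convention, for example by absorbing negations into threshold weights, which costs no extra depth.
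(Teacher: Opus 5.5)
Your proposal is correct and follows the paper's argument: the corollary is \Cref{thm:main:natpfsMH:body} instantiated with the Krause--Lucks depth-$4$ threshold-circuit implementation of the Naor--Reingold PRF, giving level $8\cdot 4+20=52$. One small correction concerns the interface: the main theorem, via \Cref{def:subexp-prf}, takes PRFs with $m$-bit outputs $G_{m,k}:\zo^m\to\zo^m$, whose first $R$ output bits form the sampler tape, so you need multi-bit outputs rather than a single Boolean bit. You can keep Naor--Reingold's multi-bit output, or from a Boolean version $f_k$ set $G_{m,k}(x)=(f_k(x,i))_{i\in[m]}$ in parallel, which adds no depth and loses only a polynomial factor in the subexponential security bound.
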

For reference, \cite{parham2025mh} already showed that a lower bound against level $64$ of the magic hierarchy implies a lower bound against depth-4 $\TCZ$ circuits, which themselves encounter the natural proofs barrier . 

\paragraph{Unitary implementation.}
We also formulate natural proofs for unitary implementation,
replacing the amplitude list by a finite-precision description
of the full unitary matrix. Pseudorandom matrix descriptions
supported on efficiently implementable unitaries would yield
an analogous barrier
(\Cref{prop:unitary-natural-barrier}), but we do not obtain such a cryptographic construction here.

There is, however, a concrete reason that directly extending
our state construction is nontrivial.
Our proof first uses
a random function to produce statistically Haar-like amplitude
lists, and then replaces it by a PRF. Using the argument of Brakerski and Yuen \cite{brakerski2026scalablePRU}, we show that reproducing
this statistical stage for full unitary matrices, with efficient
Boolean-oracle implementations, would already resolve the
Aaronson-Kuperberg unitary synthesis problem
(\Cref{prop:statistical-matrices-synthesis}).

\subsection{Technical overview}
To obtain a natural proofs barrier, the key is to construct ensembles of states that admit small approximate preparation circuits, but whose complete finite-precision \amplistswords look Haar-random to the allowed classical algorithms. A useful natural property would reject these states, while largeness would make it accept a sufficiently large fraction of Haar-random states. Its recognition algorithm from constructivity would therefore distinguish the two ensembles, contradicting pseudorandomness.

\begin{definition}[\Pralwords (\PRAL) family, informal, see \Cref{def:PRAL}]
    A \emph{\pralwords (\PRAL) family} consists of pure-state ensembles that look Haar random even to a classical algorithm given the entire finite-precision \amplistwords and time polynomial in its length. The family supports any polynomially bounded amplitude precision.

\end{definition}
We call such a family $\eps$-approximately $\calC$ preparable if each of its states can be prepared to error $\eps$ by a polynomial-size circuit in $\calC$. Note that this requires a small preparation circuit for each state, not an efficient procedure for sampling the ensemble.

\begin{theorem}[Informal, see \Cref{thm:quantum-natural-proofs-barrier}]
    If a \PRAL family can be $\eps$-approximately prepared by polynomial-size circuits in $\calC$, then no natural property is $\eps$-useful against $\calC$ for state preparation. 
\end{theorem}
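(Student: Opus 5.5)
The plan is the standard Razborov–Rudich contradiction argument, transplanted to amplitude lists. Suppose, for contradiction, that $\mathcal{P}=\{\mathcal{P}_n\}$ is a natural property that is $\eps$-useful against $\calC$. Constructivity gives a classical algorithm $T$ and a polynomial precision $p(n)$. On input the $p(n)$-bit amplitude list of an $n$-qubit state, $T$ decides $\mathcal{P}_n$ in time $\mathrm{poly}(2^n p(n))$. Largeness says that a Haar-random state, encoded at that precision, satisfies $\mathcal{P}_n$ with probability at least some $\delta(n)$, where $\delta$ is at least inverse-polynomial in the list length (or as large as \Cref{def:PRAL} requires).

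Next I instantiate the \PRAL family at precision $p(n)$; the family supports any polynomially bounded precision. Since the family is $\eps$-approximately $\calC$-preparable, every state in its support is $\eps$-close to some state prepared by a polynomial-size circuit in $\calC$. Usefulness then says that no such state has $\mathcal{P}$, so every amplitude list in the support of the \PRAL ensemble is rejected by $T$. For this step I must check that the paper's notion of $\eps$-usefulness is phrased exactly as ``no state within distance $\eps$ of a $\calC$-preparable state has the property.'' If instead it refers to the encoded lists, I also need that rounding to precision $p(n)$ keeps the encoded state within the usefulness radius. I would absorb this by choosing $p(n)$ large enough that rounding error is $\ll\eps$, or by requiring the \PRAL definition to match the constructivity encoding exactly.

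Now consider $T$ as a distinguisher between the \PRAL ensemble and the (discretized) Haar ensemble. It accepts \PRAL lists with probability $0$ and Haar lists with probability at least $\delta(n)$. Its running time is polynomial in the list length, which is exactly the adversary class against which \Cref{def:PRAL} guarantees indistinguishability, with advantage below the largeness threshold. Its advantage is at least $\delta(n)$, contradicting the pseudorandomness of the \PRAL family.

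The main obstacle is bookkeeping rather than any new idea: the parameters of the three definitions must line up. The precision and encoding in constructivity must match the \PRAL encoding. The Haar reference distribution must be identical in largeness and in the \PRAL security game, both the rounding convention and any global-phase normalization. The largeness fraction must exceed the \PRAL distinguishing advantage at every sufficiently large $n$, or infinitely often, consistent with the quantifiers in the definitions. I would state the proof with the parameters matched explicitly and note that uniformity or nonuniformity of $T$ carries through unchanged.
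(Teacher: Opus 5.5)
Your proposal is correct and follows the paper's proof essentially step for step: use the constructivity algorithm $F$ at precision $T(n)=\tau(n)$ as the distinguisher; it rejects every supported PRAL state but accepts Haar states with probability at least $2^{-cn}$, which contradicts PRAL security. The paper settles the bookkeeping you flag in two ways. It sets $\lambda(n)=\lceil cn\rceil+1$, so that $2^{-cn}>2^{-\lambda(n)}$. It also uses a short diagonal argument: for infinitely many $n$, collect a supported state outside $P_n^{\mathrm{no}}$ into a single family. This converts the family-wise, eventually-holding usefulness condition into the statement that every supported state lies in $P_n^{\mathrm{no}}$ for all large $n$. No rounding issue arises, because usefulness refers to the exact states and constructivity refers to their lists $\al_T$.
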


The main technical task is therefore to construct a \PRAL family whose states can be approximately prepared within a fixed level of the magic hierarchy. 
Standard pseudorandom-state constructions do not immediately suffice for this purpose. 
For example, consider the binary phase state, $\frac{1}{\sqrt{2^n}} \sum_{x\in \zo^n} (-1)^{f_k(x)}\ket{x}$, for a pseudorandom function $f_k:\zo^n \to \zo$ with a randomly chosen key $k$. Assuming the function family is secure against quantum queries, this ensemble is indistinguishable from Haar random states to polynomial-time quantum adversaries given polynomially many copies~\cite{brakerski2019binary}.  However, the security we want here is against adversaries who can look at the entire \textit{amplitude list}. In this case, it is very easy to see that this state is not Haar random because all of its amplitudes have the same magnitude $2^{-n/2}$, whereas Haar random states have varying amplitude magnitudes.

We therefore follow the Gaussian-based framework of Brakerski and Shmueli~\cite{brakerski2020scalable}, generating approximate complex Gaussian entries and normalizing the resulting vector. Our task is to obtain the required pseudorandomness of the complete \amplistwords while keeping the preparation circuit
within a fixed level of the magic hierarchy.

\paragraph{1. Random amplitudes from a random function.} 
A Haar random state can be generated by choosing independent complex Gaussian amplitudes and normalizing. We mimic this process with a random function $H:\zo^n \to \zo^R$ by using the function value $H(x)$ as the random bits that determine the amplitude at index $x$. We set
\begin{align*}
    \ket{\psi_H} \propto \sum_x \Samp(H(x)) \ket{x},
\end{align*}
where $\Samp$ converts random bits into an approximate Gaussian sample. 
When $H$ is random, its values $H(x)$ are independent, so the amplitudes $v_x$ behave like independent Gaussian samples. After normalization, we show that the resulting truncated \amplistswords are statistically close to those of Haar-random states.

 This statistical random-function step is one point at which the state-preparation setting differs sharply from the unitary setting: as discussed in \Cref{ssec:unitary-synthesis}, an analogous statistically Haar-like construction for full unitary matrices with efficient oracle implementations would already imply unitary synthesis.

\paragraph{2. Preparing the target states in MH.}
The challenge is that knowing how $H(x)$ determines the amplitude $v_x$ does not immediately give a circuit that prepares $\ket{\psi_H} \propto \sum_x v_x \ket{x}$. 
Our state preparation has two conceptual components. A coherent sampler computes a numerical description of the amplitude $v_x$ associated with input $x$, and our amplitude loader turns those descriptions into the actual amplitudes of the state. We implement both within a fixed level of the magic hierarchy. 

\begin{enumerate}[i.]
    \item \textbf{Gaussian sampler} Given an index $x$, coherently compute the amplitude $\ket{x}\ket{0} \to \ket{x} \ket{v_x}$.
    We evaluate $H(x)$ and apply   $\Samp$ to obtain an encoding of $v_x = \Samp(H(x))$. 
    
    \item \textbf{Amplitude loader} Use coherent access to these computed values $v_x$ to prepare the normalized state $\ket{\psi_H} = \frac{1}{\norm{v}}\sum_{x} v_x \ket{x}$. 
\end{enumerate}

Our sampler generates Gaussian-like values using comparisons of random numbers. Our loader reconstructs each complex amplitude from its encoded magnitude and phase. We implement these operations so that, when $H$ can be evaluated in a fixed level of the magic hierarchy, the full preparation also remains in a fixed level. 

\paragraph{3. Replacing the random function with a PRF.}
Finally, we replace $H$ with a sufficiently secure PRF. We show how a test distinguishing the resulting \amplistswords could be used to distinguish the PRF from a random function: just compute the \amplistwords from the function, then run the test. The reduction constructs the classical \amplistwords directly from evaluations of the function; it does not simulate the quantum preparation circuit.

Constructing the entire \amplistwords takes $2^{O(n)}$ time, so ordinary polynomial-time PRF security is not enough for this reduction. We choose the PRF security parameter to be a sufficiently large polynomial in the state and precision parameters, so that the assumed subexponential security covers both this runtime and the required distinguishing advantage. When the PRF can additionally be evaluated by polynomial-size $\TCZ$ circuits, the preparation procedure from the previous step places the corresponding states in a fixed level of the magic hierarchy. This gives the required \PRAL family and completes the natural-proofs barrier.

\subsection{Related works}

\paragraph{Relation to scalable pseudorandom states.}
Our construction builds on the Gaussian-based framework of Brakerski
and Shmueli~\cite{brakerski2020scalable}, which generates approximate
Gaussian coordinates and prepares the corresponding normalized state
using quantum rejection sampling~\cite{ozols2013quantum}. Their goal is
polynomial-time state preparation, whereas ours is to realize the
sampling and amplitude-loading procedures within a fixed level of the
magic hierarchy. This requires an explicit finite Gaussian sampler and
an amplitude-loading implementation tailored to our restricted circuit
model.

The security notions also differ. A scalable pseudorandom state with
security parameter $\kappa$ is required to fool polynomial-time quantum
adversaries given polynomially many copies of the state, with $\kappa$
independent of the number of output qubits~\cite{brakerski2020scalable}.
A \PRAL family instead has to fool a classical algorithm that is given
the entire $T$-bit amplitude list. One way to convert the former access
model into the latter is tomography: reconstruct the state to sufficient
precision and then run the amplitude-list distinguisher.

Quantitatively, this requires a stronger guarantee than the standard
scalable-PRS definition. To reconstruct an $n$-qubit state accurately
enough to determine a $T$-bit amplitude list, with failure probability
small compared with a desired distinguishing advantage $2^{-\lambda}$,
a straightforward tomography reduction uses
$2^{O(n+T+\lambda)}$ copies and computation. Thus polynomial-time,
polynomial-copy security in $\kappa$ does not suffice when
$\kappa=\poly(n,T,\lambda)$. By contrast, a scalable PRS secure against,
say, $2^{\kappa^\alpha}$ quantum time and copies with advantage at most
$2^{-\kappa^\alpha}$ would suffice: choosing $\kappa$ to be a
sufficiently large polynomial in $n,T,\lambda$ covers the tomography
reduction while keeping the state-generation complexity polynomial.

The statistical random-function construction of Brakerski and Shmueli
already has enough quantitative strength for this route: the joint
distribution of $t$ copies is within $O(t/e^\kappa)$ trace distance of
$t$ copies of a Haar-random state~\cite{brakerski2020scalable}. Hence
even exponentially many copies can be supported by increasing
$\kappa$ only polynomially. Turning this into the corresponding
computational guarantee, however, would require the underlying
pseudorandom function to remain secure against the resulting
exponential-resource \emph{quantum} adversary.

Our proof avoids this stronger quantum assumption. Since the
distinguisher in our framework receives a classical amplitude list, we
construct that list directly from classical evaluations of the
underlying function. Consequently, the computational step of our
barrier requires only a classically subexponentially secure PRF.

\paragraph{Other mention of quantum natural proofs.}
An independent work by Chen et al.~\cite{chen2026finegrainedcomplexityquantumnatural} uses the term \emph{quantum natural proofs} in the QSETH setting, where natural properties are properties of Boolean functions whose constructivity may be quantum. Their framework is used to establish compression obliviousness from quantum-secure pseudorandom functions, whereas ours concerns properties of quantum states and state-preparation lower bounds.

\subsection{Discussion}

We know that various constant-time circuit models are equivalent and captured by the magic hierarchy. So our barrier also holds for other models at fixed depth. We chose to work within the magic hierarchy for convenience, and to compare with the levels in \cite{parham2025mh} where it is shown that the \emph{classical} natural proofs barrier is encountered.

\paragraph{Other preparation models.}
The magic hierarchy provides a convenient setting in which
to state a quantitative barrier, but the underlying obstruction
is not specific to this model. Any preparation model that can
simulate the circuits preparing the states in our \PRAL family inherits
the corresponding barrier. The simulations established
in~\cite{parham2025mh} therefore give consequences for
constant-depth circuits with fanout and for appropriate
bounded-round adaptive circuit models. We use the magic hierarchy
to track the required circuit resources and compare directly
with the earlier fixed-level connections to classical lower bounds.

\paragraph{Can the barrier extend to weaker circuit models?}
Our barrier applies whenever the states in our \PRAL family can be prepared within the circuit model of interest. Thus, extending the barrier to weaker models amounts to asking whether comparable \PRAL families can be prepared with fewer computational resources.

A particularly interesting target is $\QAC^0$. This class is contained in the more powerful constant-depth models captured by the magic hierarchy, but is conjectured to be strictly weaker. Proving lower bounds against $\QAC^0$ has attracted substantial recent attention, yet our understanding remains limited. With polynomially many unrestricted ancillas, the strongest known lower bounds only reach very small constant depths~\cite{padé2020depth2,rosenthal2021qac0,joshi2026qac0}; for arbitrary constant depth, known results require additional restrictions on the number of ancillas or circuit size~\cite{nadimpalli2024pauli,anshu2025computational}.

Recent work also suggests that existing techniques are approaching an interesting threshold. Anshu et al.~\cite{anshu2025computational} show that a depth-d $\QAC^0$ circuit computing PARITY needs at least $n^{1+ 3^{-d}}$ ancillas. They further observe that if one could strengthen this to be $n^{1+ \exp(-o(d))}$ that would already imply $\mathsf{PARITY}\notin\QAC^0$. This raises the possibility that further progress may require qualitatively new ideas. Is there a natural proofs barrier in this regime?

Recent work shows that pseudorandom unitaries, and hence pseudorandom states in the usual copy-access sense, can be implemented with high accuracy by polynomial-size $\QAC^0$ circuits~\cite{qac0-pru,grier-qac0-tc0} \footnote{The original construction incurred inverse polynomial implementation error from approximating polylogarithmic-width FANOUT gates. Subsequent work of Grier, Morris, and Wu gives exact polynomial-size $\QAC^0$ implementations of such fanout gates, removing this bottleneck from \cite{grier-qac0-tc0}}. 
What remains insufficient for our purposes is the security guarantee: the construction in \cite{qac0-pru} is secure only against quasipolynomial-time adversaries, whereas a distinguisher given the entire amplitude list may run in $2^{O(n)}$ time. Strengthening this security would therefore give a promising route to a natural-proofs barrier for $\QAC^0$.

\paragraph{Unitary natural proofs.}
It is natural to ask whether the same barrier can be established for
lower bounds on implementing unitaries. In
\Cref{sec:unitary-natural-proofs}, we formulate the corresponding
notion of a natural unitary property and identify the pseudorandom
matrix object that would imply such a barrier. Interestingly, the
unitary setting appears qualitatively more difficult than the state
setting: we show that reproducing the statistical random-function
stage of our state construction for full unitary matrices would
already imply a positive solution to the unitary synthesis problem,
via the relaxed-net argument of Brakerski and
Yuen~\cite{brakerski2026scalablePRU}.

\pdfbookmark[1]{Acknowledgments}{acknowledgments}
\paragraph*{Acknowledgments.}
We thank Henry Yuen, Miranda Christ, Tal Malkin, John Bostanci, and Ben Foxman for helpful discussions. We also thank Sabee Grewal for helpful feedback on this manuscript.

\pdfbookmark[1]{AI disclosure}{ai-disclosure}
\paragraph*{AI disclosure.}
The main ideas and structure of this work were entirely human-generated. Before using AI, we had a full version of the result, but our use of AI significantly improved it. Using ChatGPT, we were able to optimize our prepration of the states in the \PRAL family to a much lower circuit complexity, and generate some additional examples. We also used AI for assistance with improving the exposition, and checking arguments. We take full responsibility for the exposition and correctness of this manuscript.
 \section{Quantum natural proofs}
\subsection{Definitions}

\begin{definition}[State \amplistwords, $\al_T(\psi)$]\label{def:state-amp-list}
    Fix a signed binary encoding for real numbers in $[-1,1]$, and let $c_{\mathrm{enc}}$ be the number of sign and format bits used per complex amplitude. Let $\ket{\psi}=\sum_{x\in\zo^n}\alpha_x\ket{x}$ be an $n$-qubit state. For $T\in\NN$, let $\al_T(\psi)$ denote the binary string obtained by writing, for every $x\in\zo^n$, the real and imaginary parts of $\alpha_x$ truncated to $T$ fractional bits. Thus
    \[
        \al_T(\psi)\in \pbra{\zo^{2T+c_{\mathrm{enc}}}}^{2^n}\cong \zo^{2^n(2T+c_{\mathrm{enc}})}.
    \]
    We also identify $\al_T(\psi)$ with the vector in $\CC^{2^n}$ obtained by interpreting this binary string as truncated complex amplitudes.

\end{definition}

For a quantum circuit model $\calC$, write $\calC_{\poly}$ for its
polynomial-size restriction, namely, the class of circuit families
$\{C_n\}_{n\in\NN}$ such that $C_n\in\calC$ and there is a polynomial
$q$ satisfying $\abs{C_n}\leq q\pbra{n}$. Thus, $\calC$ specifies the structural
restrictions on the circuits, while $\calC_{\poly}$ additionally
restricts their size as a function of the number of qubits.

\begin{definition}[Natural promise property for quantum states]
\label{def:natural-promise-property}
    A promise property of $n$-qubit states is a pair of disjoint sets
    \begin{align*}
        P_n^{\mathrm{yes}},P_n^{\mathrm{no}}
        \sse S(\CC^{2^n})
    \end{align*}
    both invariant under global phase.
    The promise property
    $\{P_n^{\mathrm{yes}},P_n^{\mathrm{no}}\}_{n\in\NN}$
    is \emph{natural} if it satisfies:
    \begin{enumerate}
        \item \textbf{Constructivity}: There is a polynomial-time deterministic classical
        algorithm $F$ and a polynomial $\tau:\NN\to\NN$ such that, for
        every $T\geq\tau\pbra{n}$,
        \begin{align*}
            \ket{\psi}\in P_n^{\mathrm{yes}}
            &\implies F\pbra{1^n,\al_T\pbra{\psi}}=1,\\
            \ket{\psi}\in P_n^{\mathrm{no}}
            &\implies F\pbra{1^n,\al_T\pbra{\psi}}=0.
        \end{align*}
        Here, running time is measured in the total input length. We assume that $P_n^{\mathrm{yes}}$ is Haar-measurable so that the probability in the largeness condition is well-defined.
        \item \textbf{Largeness}: There is a constant $c>0$ such that,
        for all sufficiently large $n$,
        \begin{align*}
            \Pr_{\phi\sim\Haar_n}
            \sbra{\ket{\phi}\in P_n^{\mathrm{yes}}}
            \geq 2^{-cn}.
        \end{align*}

    \end{enumerate}

    For a quantum circuit model $\calC$, we define \emph{usefulness} against $\calC$ as follows.
    \begin{enumerate}
        \setcounter{enumi}{2}
        \item \textbf{Usefulness:} For a quantum circuit model $\calC$ and an error bound $\eps:\NN\to[0,1)$, the property is $\eps$-\emph{useful against} $\calC$ if every pure-state family $\{\ket{\psi_n}\}_{n\in\NN}$ that can be $\eps$-approximately prepared by $\calC$ is in $P_{n}^{\mr{no}}$. Formally, if there exists a circuit family $\{C_n\}_{n\in \NN} \in \calC$ with reduced states $\sigma_n$ on designated $n$-qubit output registers satisfying 
        \begin{align}
            \frac{1}{2}\tnorm{\ket{\psi_n}\bra{\psi_n} - \sigma_n} \leq \eps(n)
        \end{align}
        for every $n\in \NN$, then $\ket{\psi_n}\in P_n^{\mathrm{no}}$ for all sufficiently large $n$.
    \end{enumerate}
\end{definition}

While the natural property is a property of \emph{pure states}, the approximating states $\sigma_n$ prepared by $\calC$ may be \emph{mixed}. In this definition, each circuit acts on an all zero input, has a designated output register, and the remaining qubits can be traced out. In particular, the ancillary qubits need not be returned to $\ket{0}$.

\subsection{Quantum natural-proofs barrier from a \PRAL family}

To consider the ability of classical algorithms to distinguish between states, we give them a classical description of the state. Specifically, the algorithm receives $\al_T\pbra{\psi}$, in which the real and imaginary parts of every amplitude are truncated to $T$ bits of precision, as defined in \Cref{def:state-amp-list}.

\begin{definition}[Pseudorandom amplitude-list family]\label{def:PRAL}
    A family $\calD = \{D_{n,\lambda, T}\}_{n,T, \lambda \in\NN}$ of distributions on normalized $n$-qubit state vectors is a \emph{pseudorandom amplitude-list} (PRAL) family if, for every probabilistic polynomial-time classical distinguisher $A$, and every pair of polynomially bounded functions $\lambda, T:\NN\to\NN$,
    \begin{align*}
        \abs{
            \Pr_{\ket{\psi}\gets
                D_{n, \lambda(n), T(n)}}
                \sbra{
                    A\pbra{
                        1^n, 1^{\lambda(n)},
                        \al_{T(n)}(\psi)
                    }=1
                }
            -
            \Pr_{\ket{\phi}\sim\Haar_n}
                \sbra{
                    A\pbra{
                        1^n, 1^{\lambda(n)},
                        \al_{T(n)}(\phi)
                    }=1
                }
        }
        \leq 2^{-\lambda(n)}
    \end{align*}
    for all sufficiently large $n$.
    The running time of $A$ is measured with respect to its total input length. Efficient sampling from $D_{n,\lambda,T}$ is not required.
\end{definition}

The natural-proofs barrier only uses the linear-security regime $\lambda(n)=\Theta(n)$.
Throughout, $\Haar_n$ denotes Haar probability measure on the unit
sphere in $\CC^{2^n}$, so its samples are normalized state vectors.
A PRAL is purely a pseudorandomness notion and by itself places no restriction on state preparation. For the barrier, we additionally require the following nonuniform preparation condition.

\begin{definition}[Approximate circuit preparable]
\label{def:approximate-implementability}
    We say that a \PRAL family $\{D_{n,\lambda,T}\}_{n,\lambda,T\in\NN}$ is \emph{efficiently $\eps$-approximate $\calC$-preparable} if there exists a polynomial $q$ such that, for every $n,\lambda,T\in\NN$ and every $\ket{\psi}$ in the support of $D_{n,\lambda,T}$, there exists a circuit $C_\psi\in\calC$ of size at most $q(n,\lambda,T)$ that when applied to the all zero state, and tracing out the non-output register prepares an $n$-qubit state $\sigma_\psi$ satisfying
    \[
        \frac{1}{2}
        \tnorm{\ket{\psi}\bra{\psi}-\sigma_\psi}
        \leq\eps(n).
    \]
    The circuit $C_\psi$ and its output $\sigma_\psi$ may depend on $n,\lambda,T$, and $\ket{\psi}$, but the polynomial $q$ is common to all choices.
    The state $\sigma_\psi$ may be mixed and the discarded output qubits need not be returned to zero.

\end{definition}

\paragraph{Barrier intuition.}
When $\lambda$ and $T$ are polynomially bounded in $n$,
approximate preparation tells us there is a polynomial-size circuit that $\eps$-approximately prepares each state in the distribution $D_{n, \lambda, T}$. An $\eps$-useful natural property must therefore reject each of the states but also accept a nontrivial fraction of Haar-random states. Therefore, its constructivity algorithm distinguishes $D_{n, \lambda, T}$ from Haar random.

\begin{theorem}[Quantum natural-proofs barrier]
\label{thm:quantum-natural-proofs-barrier}
    Let $\calC$ be a quantum circuit model and let $\eps:\NN\to[0,1)$. Suppose there exists a \PRAL family $\{D_{n, \lambda, T}\}$ that is efficiently $\eps$-approximate $\calC$-preparable. Then there are no natural promise properties that are $\eps$-useful against $\calC_{\poly}$.
\end{theorem}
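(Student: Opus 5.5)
We argue by contradiction, following Razborov--Rudich. Suppose $\{P_n^{\mathrm{yes}},P_n^{\mathrm{no}}\}$ is a natural promise property that is $\eps$-useful against $\calC_{\poly}$. Let $F$ and $\tau$ be the constructivity algorithm and polynomial, and let $c$ be the largeness constant. Fix the linear security level $\lambda(n)=\ceil{cn}+1$ and the precision $T(n)=\tau(n)$; both are polynomially bounded. Our distinguisher is $A(1^n,1^{\lambda},s)=F(1^n,s)$. It runs in time polynomial in its input length, so it is a legitimate PPT distinguisher in the sense of \Cref{def:PRAL}.

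\textbf{Step 1: the pseudorandom states are rejected.} We claim that for all large $n$, every $\ket{\psi}$ in the support of $D_{n,\lambda(n),T(n)}$ lies in $P_n^{\mathrm{no}}$. Preparability supplies, for each such $\psi$, a circuit $C_\psi\in\calC$ of size at most $q(n,\lambda(n),T(n))=\poly(n)$ whose output is $\eps(n)$-close to $\psi$. The subtlety is that usefulness is stated for \emph{families} $\{\ket{\psi_n}\}$ and only guarantees membership for sufficiently large $n$, with a threshold that may depend on the family.

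To handle this, suppose the claim failed for infinitely many $n$. For each such $n$, pick a bad $\psi_n\notin P_n^{\mathrm{no}}$ in the support. For every other $n$, pick an arbitrary supported state. The resulting family is $\eps$-approximately prepared by a single polynomial-size family $\{C_{\psi_n}\}\in\calC_{\poly}$ with common polynomial bound $q(n,\lambda(n),T(n))$. Usefulness then forces $\psi_n\in P_n^{\mathrm{no}}$ for all large $n$, a contradiction.

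If the support is uncountable, this selection only needs the axiom of choice, which is harmless. Since $P^{\mathrm{no}}$ is phase invariant, the global-phase ambiguity between $\psi$ and the prepared state causes no trouble. By constructivity with $T=\tau(n)$, $F(1^n,\al_T(\psi))=0$ on the entire support. Hence $\Pr_{\psi\gets D}[A=1]=0$.

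\textbf{Step 2: Haar-random states are accepted with noticeable probability.} By largeness, $\Pr_{\phi\sim\Haar_n}[\phi\in P_n^{\mathrm{yes}}]\ge 2^{-cn}$. On this event, constructivity gives $F=1$, so $\Pr_{\phi\sim\Haar_n}[A=1]\ge 2^{-cn}$. States in neither set may receive arbitrary outputs, but this only increases the acceptance probability. Measurability of $P^{\mathrm{yes}}$ is assumed in the definition.

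\textbf{Step 3: contradiction.} The distinguishing advantage of $A$ is therefore at least $2^{-cn}>2^{-\lambda(n)}$ for all large $n$. This contradicts the PRAL guarantee for the pair $(\lambda,T)$.

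\textbf{Main obstacle.} The only delicate point is Step 1, namely turning the per-state, per-$n$ preparability into an instance of the family-based usefulness definition. The diagonal choice of a bad state at infinitely many lengths resolves it. The parameter choices, $\lambda$ linear in $n$ to beat $2^{-cn}$ and $T=\tau(n)$ to meet the constructivity threshold, are routine.
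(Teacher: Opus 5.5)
Your proposal is correct and follows essentially the same route as the paper's proof. It uses the same choices $\lambda(n)=\lceil cn\rceil+1$, $T(n)=\tau(n)$ and $A=F$, and the same diagonal selection of bad supported states at infinitely many $n$ to turn per-state preparability into a violation of family-based usefulness. It then concludes the same way: zero acceptance on the PRAL side against acceptance probability at least $2^{-cn}>2^{-\lambda(n)}$ on the Haar side.
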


\begin{proof}
    Suppose for the sake of contradiction that such a property exists, and let $F$, $\tau$, and $c>0$ witness its constructivity and largeness. Set $T(n)=\tau(n)$ and $\lambda(n)=\ceil{cn} + 1$, and define $A\pbra{1^n,1^\lambda,t}=F\pbra{1^n,t}$. This is a deterministic polynomial-time classical distinguisher.

    Since $\lambda$ and $T$ are polynomially bounded, there is a polynomial $r$ such that every state in the support of $D_{n, \lambda(n), T(n)}$ is within trace distance $\eps(n)$ of a state preparable by a $\calC$ circuit of size at most $r(n)$. We claim that by the property's $\eps$-usefulness against $\calC_\poly$, for sufficiently large $n$, every state in this support belongs to $P_n^{\mathrm{no}}$. Otherwise, we can build a family of states $\{\psi_n\}$ that contradicts usefulness as follows. For infinitely many $n \in \NN$ we can select $\ket{\psi_n}$ to be a state in the support of $D_{n, \lambda(n), T(n)}$ that is not in $P_n^{\mathrm{no}}$, and for all other $n$ we set $\ket{\psi_n}$ as an arbitrary supported state. Now each of the states $\ket{\psi_n}$ can be $\eps(n)$-approximately prepared by a $\calC$ circuit of size $r(n)$, but there is no $n_0$ such that for all $n\geq n_0$ each $\ket{\psi_n}\in P_n^\mr{no}$. This contradicts $\eps$-usefulness.
    
    Therefore, by constructivity, for sufficiently large $n$,
    \begin{align*}
        \Pr_{\ket{\psi}\gets
                D_{n,\lambda(n),T(n)}}
                \sbra{
                    A\pbra{
                        1^n,1^{\lambda(n)},
                        \al_{T(n)}(\psi)
                    }=1
                } = 0
    \end{align*}
    By largeness, for all sufficiently large $n$, 
    \begin{align*}
        \Pr_{\ket{\phi}\sim\Haar_n}
                \sbra{A\pbra{1^n,1^{\lambda\pbra{n}},
                    \al_{T\pbra{n}}\pbra{\phi}}=1}
                \geq 2^{-cn}>2^{-\lambda(n)}.
    \end{align*}
    Therefore, $A$ distinguishes $D_{n,\lambda(n),T(n)}$ from the Haar measure with advantage strictly greater than $2^{-\lambda\pbra{n}}$, contradicting the security of $D_{n,\lambda(n),T(n)}$ in \Cref{def:PRAL}.
\end{proof}
 \section{Description of states in \PRAL family}
\label{sec:PRAL-target-state}

In this section we describe the states in our \PRAL family. 
To sample a truly Haar-random state, we could draw each amplitude independently from a complex Gaussian distribution, and then normalize. 
We mimic this construction using a
function
$H:\zo^n\to\zo^R$.
For each basis string $x$, we interpret $H(x)$ as a finite random tape
and apply a deterministic sampler $\Samp$ to obtain a complex number
$v_x=\Samp(H(x))$. Thus a single function $H$ determines an entire vector
$v=(v_x)_{x\in\zo^n}$, and hence a target quantum state after
normalization.

When $H$ is a uniformly random function, the strings $H(x)$ are
independent and uniform, so the values $v_x$ behave like independent
approximate Gaussian samples. Later we will replace the random function
by a pseudorandom one. 

We use the sampler in both the security analysis and the eventual
low-depth preparation, so it will be convenient to isolate the
quantitative guarantees we need from it. The parameters below track
its scaling, approximation error, and failure probability. We then use
such a sampler to associate a target state $\ket{\psi_H}$ with every
fixed function $H$.

\subsection{Finite complex Gaussian sampler}\label{sec:finite-complex-gaussian-sampler}

\begin{definition}[Complex Gaussian sampler]
\label{def:complex-gaussian-sampler}
    Fix positive integers $R,\ell$, a scale $s>0$,
    an approximation error $\xi\geq0$, and a failure
    probability $\eta_{\mathrm{samp}}\in[0,1]$.

    An $(s,\xi,\eta_{\mathrm{samp}})$-complex Gaussian
    sampler is a deterministic map $\Samp:\zo^R\to\zo^\ell$
    whose output encodes a complex number of magnitude
    at most one on every input.
    For a uniformly random tape $\omega\in\zo^R$,
    let $V=\Samp(\omega)$, identifying the output
    encoding with its decoded complex value.
    We require that there exists a coupling of $V$
    with a standard complex Gaussian random variable $G\sim\calN^\CC(0,1)$ such that
    \begin{equation}
        \Pr\sbra{\abs{sV-G}>\xi}\leq\eta_{\mathrm{samp}}
    \end{equation}
    We use the convention
    $\mathbb E\abs{G}^2=1$.
\end{definition}

Although we call $\Samp$ a sampler, it is a deterministic function of its input tape. The randomness comes from the choice of random input tape $\omega$. For now we treat $\Samp$ as a black box satisfying the guarantees above. We construct an explicit sampler in the magic hierarchy later in \Cref{ssec:complex-gaussian-sampler}.

\paragraph{Encoding convention.}
The sampler's output represents the complex number $\alpha = r e^{2\pi i \theta}$ by its polar coordinates $(r, \theta)$ up to $O(R)$ binary fractional bits. This internal encoding is distinct from the amplitude list $\al_T(\psi)$ which records the real and imaginary parts (cartesian coordinates).

\subsection{Target states from a complex-Gaussian sampler}\label{sec:target-state-definition}
Fix a function $H:\zo^n\to\zo^R$, and a sampler $\Samp: \zo^R \to \zo^\ell$. Let $v \in \CC^{N}$ be defined by $v_x = \Samp(H(x))$ for each $x\in \zo^n$. For nonzero $v\in\CC^N$, write $\ket{\wh{v}} \coloneq \frac{1}{\norm{v}}\sum_{x\in\zo^n}v_x\ket{x}$.

For a uniformly random $H$ and suitable sampler parameters, the
distribution of these normalized vectors approaches Haar measure.  Our preparation procedure, however,
requires a lower bound on $\norm{v}$ in order to prepare the state
efficiently. We therefore introduce a threshold $\gamma\in(0,1]$ and
define the target state as follows.

\begin{definition}[Target state $\ket{\psi_H}$ associated with $H$]
\label{def:target-state-for-H}
For $v$ defined above, and some threshold $\gamma \in (0, 1]$, let
\begin{equation}
\ket{\psi_H}
=
\begin{cases}
\displaystyle
\ket{\wh{v}},
& \text{if }\norm{v}\geq\gamma\sqrt N,\\
\ket{0^n},
& \text{if }\norm{v}<\gamma\sqrt N.
\end{cases}
\end{equation}
\end{definition}

The threshold is needed only for the preparation procedure: when $\norm{v}\geq\gamma\sqrt N$, our amplitude loader efficiently prepares $\ket{\widehat v}$; otherwise the target is defined to be the trivially preparable state $\ket{0^n}$. This case distinction is part of the target-state definition and need not itself be evaluated by the circuit.

In \Cref{sec:security}, we analyze distributions over these states obtained first from a uniformly random function and then from a pseudorandom function. The preparation of these states in the magic hierarchy is deferred to \Cref{sec:implementing-PRAL}.

 \section{Security of the \PRAL construction} \label{sec:security}
We prove security in two steps. First, we show that for a uniformly random function $H$, the corresponding distribution over target states $\ket{\psi_H}$ have truncated amplitude lists statistically close to Haar. The main subtlety is that nearby amplitudes can have different truncations. Second, we show how to reconstruct these lists from classical oracle access to $H$, allowing us to replace the random function by a sufficiently secure PRF.

\subsection{Security with a truly random function}\label{ssec:security:trulyrandfunc}

Let $H:\zo^n\to\zo^R$ be a uniformly random function. We write
$D_{\mathrm{rf}}$ for the resulting distribution of target states
$\ket{\psi_H}$ from \Cref{def:target-state-for-H}.

\begin{theorem}[Random-function security]
\label{thm:random-function-security}
    Fix $n\geq1$, write $N=2^n$, and let $\gamma\in(0,1]$. Let $\Samp$ be an $(s,\xi,\eta_{\mathrm{samp}})$-complex Gaussian sampler, and let $D_{\mathrm{rf}}$ be the target-state distribution defined by this sampler and threshold for uniformly random $H$.

    Suppose there exists $\beta\in(0,1)$ such that
    $s\gamma+\xi\leq\beta$.
    Then, for every truncation precision $T\in\NN$,
    \begin{equation}
         \TVD\!\left(
            \al_T(D_{\mathrm{rf}}),\al_T(\Haar_n)
        \right)
        \leq
        N\eta_{\mathrm{samp}}
        +e^{-N(1-\beta^2)^2/2}
        +C N^{3/2}2^T\frac{2\xi}{\beta},
    \end{equation}
    where $C>0$ is a universal constant.
\end{theorem}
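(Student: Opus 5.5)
We couple the random-function construction with a Haar sample and bound the total variation distance through a union of three bad events. Since $H$ is uniformly random, the tapes $H(x)$ are i.i.d.\ uniform in $\zo^R$. We apply the coupling from \Cref{def:complex-gaussian-sampler} independently for every $x$. This gives i.i.d.\ $G_x\sim\calN^\CC(0,1)$ with $\Pr[\abs{sv_x-G_x}>\xi]\leq\eta_{\mathrm{samp}}$ for each $x$. The normalized vector $\ket{\wh G}=G/\norm{G}$ is exactly Haar distributed. Hence $\al_T(\wh G)\sim\al_T(\Haar_n)$, and it suffices to bound $\Pr[\al_T(\psi_H)\neq\al_T(\wh G)]$ under this coupling. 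The first bad event, $E_1$, is that some coordinate violates $\abs{sv_x-G_x}\leq\xi$. A union bound gives $\Pr[E_1]\leq N\eta_{\mathrm{samp}}$.

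The second bad event, $E_2$, is $\norm{G}<\beta\sqrt N$. Here $\norm{G}^2$ is a sum of $N$ i.i.d.\ $\mathrm{Exp}(1)$ variables, so a standard lower-tail Chernoff/Bernstein bound gives $\Pr[\norm{G}^2\leq\beta^2N]\leq e^{-N(1-\beta^2)^2/2}$. Concretely, use $\log\Expect e^{-t\abs{G}^2}=-\log(1+t)\leq -t+t^2/2$ and optimize at $t=1-\beta^2$.

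Now suppose neither $E_1$ nor $E_2$ occurs. Then $\norm{sv-G}\leq\xi\sqrt N$, so $s\norm v\geq\norm G-\xi\sqrt N\geq(\beta-\xi)\sqrt N\geq s\gamma\sqrt N$, using the hypothesis $s\gamma+\xi\leq\beta$. Therefore the threshold in \Cref{def:target-state-for-H} is met and $\ket{\psi_H}=\ket{\wh v}$. The standard estimate $\norm{a/\norm a-b/\norm b}\leq 2\norm{a-b}/\norm b$ with $a=sv$, $b=G$ then gives $\norm{\wh v-\wh G}\leq 2\xi\sqrt N/(\beta\sqrt N)=2\xi/\beta$. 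In particular, every real and imaginary coordinate of $\wh v$ is within $\delta\coloneq 2\xi/\beta$ of the corresponding coordinate of $\wh G$.

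The main obstacle is the truncation step: two nearby numbers can still truncate differently. The third bad event, $E_3$, is that some real or imaginary part of some $\wh G_x$ lies within $\delta$ of a grid point of $2^{-T}\ZZ$. Outside $E_1\cup E_2\cup E_3$ the two truncations agree coordinate by coordinate, so $\al_T(\psi_H)=\al_T(\wh G)$. To bound $\Pr[E_3]$, we need an anti-concentration bound for a single coordinate of a Haar-random state. The real part $\mathrm{Re}\,\wh G_x$ has a density bounded by $C\sqrt N$: it is distributed like a coordinate of a uniform vector on $S^{2N-1}$, whose density is $\propto(1-t^2)^{N-3/2}$ with normalizing constant $\Theta(\sqrt N)$. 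There are at most $2^{T+1}+1$ grid points in $[-1,1]$, each with a window of width $2\delta$, so each coordinate fails with probability $O(\sqrt N\,2^T\delta)$. A union bound over the $2N$ real and imaginary parts gives $\Pr[E_3]\leq CN^{3/2}2^T\cdot 2\xi/\beta$. Summing the three probabilities yields the claimed bound. The only non-routine input is the $O(\sqrt N)$ density bound, which I would verify directly from the Beta-type marginal of a spherical coordinate.
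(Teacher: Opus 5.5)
Your proposal is correct and follows essentially the same argument as the paper. Both use the same coordinate-wise coupling of $v_x$ with i.i.d.\ complex Gaussians, the same two bad events (a union bound giving $N\eta_{\mathrm{samp}}$ and a Chernoff lower tail for $\norm{G}^2$), the same $2\xi/\beta$ normalization estimate, and the same $O(\sqrt N)$ marginal-density bound combined with a union bound over the $2N$ coordinates and $O(2^T)$ truncation boundaries; your explicit moment-generating-function derivation of the $e^{-N(1-\beta^2)^2/2}$ tail adds a detail that the paper only cites as a standard Chernoff bound.
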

The three error terms account for sampler failure, an unusually small Gaussian norm, and proximity to a truncation boundary.

Before proving this theorem, it ios useful to first prove the following lemma that bounds the probability that two nearby states will be truncated to different values in their \amplistswords.

\begin{lemma}[Haar amplitudes avoid truncation boundaries]
\label{lem:haar-truncation-boundaries}
Let $\ket{\phi}\sim\Haar_n$, where $N=2^n\geq2$, and let
\[
    \calB_T
    \coloneq
    \{j2^{-T}:j\in\ZZ,\,-2^T\leq j\leq2^T\}
\]
be the set of $T$-bit truncation boundaries on $[-1,1]$.
For every $\rho\geq0$, the probability that the real or imaginary part
of any amplitude $\phi_x$ lies within distance $\rho$ of a boundary in
$\calB_T$ is at most
\begin{equation}
    C N^{3/2}2^T\rho,
\end{equation}
where $C>0$ is a universal constant.
\end{lemma}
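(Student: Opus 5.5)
The plan is a union bound over the $2N$ real coordinates (real and imaginary parts of each $\phi_x$), combined with a bound on the density of a single real coordinate of a Haar-random state. By unitary invariance of the Haar measure, every coordinate has the same marginal law. So it suffices to show that $Y=\mathrm{Re}\,\phi_{0}$ has a density $f_N$ on $[-1,1]$ with $\sup f_N\leq C'\sqrt N$. The imaginary part has the same law, since multiplying the state by a phase preserves Haar measure.

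Given this density bound, the estimate follows in two short steps.
\begin{enumerate}
\item The set of points within distance $\rho$ of $\calB_T$ is a union of at most $2^{T+1}+1$ intervals, each of length at most $2\rho$. Its Lebesgue measure is therefore at most $(2^{T+1}+1)\cdot 2\rho\leq 6\cdot 2^T\rho$.
\item Hence for a single coordinate the probability of landing in this set is at most $6C'\sqrt N\,2^T\rho$.
\end{enumerate}
Summing over the $2N$ real coordinates gives $12C'N^{3/2}2^T\rho$, which is the claim with $C=12C'$. If $\rho$ is so large that the bound exceeds $1$, the statement is trivial, so no further case split is needed.

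The main step is the density bound. I will use the standard fact that a Haar-random unit vector in $\CC^N\cong\RR^{2N}$ is uniform on $S^{2N-1}$. A single real coordinate of a uniform point on $S^{m-1}$, with $m=2N$, then has density
\[
f(y)=\frac{\Gamma(m/2)}{\sqrt\pi\,\Gamma((m-1)/2)}(1-y^2)^{(m-3)/2},\qquad y\in[-1,1].
\]
For $m\geq4$ this is maximized at $y=0$, where it equals $\Gamma(m/2)/(\sqrt\pi\,\Gamma((m-1)/2))$. By Gautschi's inequality (or Stirling's formula), $\Gamma(m/2)/\Gamma((m-1)/2)\leq\sqrt{m/2}$, so the maximum is at most $\sqrt{N/\pi}$. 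This gives $C'=1/\sqrt\pi$.

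The edge case $N=2$ gives $m=4$, where the exponent $(m-3)/2=1/2$ is still nonnegative, so the density remains bounded there as well. The only care needed is to state the Gamma-ratio bound cleanly. As an alternative route, one could write $\phi=g/\norm{g}$ with $g$ complex Gaussian and argue via a conditional density, but the explicit spherical-marginal formula is more direct, so I will use it.
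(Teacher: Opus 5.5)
Your proposal is correct and takes essentially the same route as the paper. The paper also treats the Haar state as uniform on the real sphere $S^{2N-1}$, bounds the single-coordinate density $c_{2N}(1-y^2)^{(2N-3)/2}$ by $O(\sqrt N)$, multiplies by the $O(2^T\rho)$ total length of the boundary neighborhoods, and union bounds over the $2N$ real coordinates. Your Gautschi step only makes the constant explicit ($C'=1/\sqrt\pi$, hence $C=12/\sqrt\pi$), where the paper just asserts $c_{2N}=O(\sqrt N)$.
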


\begin{proof}
Fix $x\in\zo^n$, and let $X$ be either
$\operatorname{Re}(\phi_x)$ or $\operatorname{Im}(\phi_x)$.
Viewing $\CC^N$ as $\RR^{2N}$, a Haar-random state is uniformly
distributed on the real unit sphere. By the standard slice-integration
formula for the sphere~\cite[Corollary~A.5]{axler2001harmonic}, $X$
has density
$f(y)=c_{2N}(1-y^2)^{(2N-3)/2}$ for $\abs{y}\leq1$, where
$c_{2N}=O(\sqrt N)$. Thus the density of $X$ is bounded everywhere by
$C\sqrt N$ for a universal constant $C$.

There are $O(2^T)$ truncation boundaries, so the union of the intervals
within distance $\rho$ of these boundaries has total length
$O(2^T\rho)$. Hence, for one fixed real or imaginary amplitude
coordinate, the probability of lying within distance $\rho$ of any
boundary is at most $C\sqrt N\,2^T\rho$.

There are $2N$ real and imaginary amplitude coordinates in total.
A union bound therefore gives
$\Pr[\text{some amplitude coordinate lies within $\rho$ of a truncation boundary}]
\leq C N^{3/2}2^T\rho$,
after absorbing constant factors into $C$.
\end{proof}

We now proceed to the proof of \Cref{thm:random-function-security}.
\begin{proof}[Proof of \Cref{thm:random-function-security}]
We first couple the sampled amplitude vector to a true Gaussian vector. Under $D_{\mathrm{rf}}$, the function $H$ is uniformly random, so the tapes $\{H(x)\}_{x\in\zo^n}$ are independent and uniform. Let $v\in\CC^N$ be defined by $v_x=\Samp(H(x))$
for each $x\in\zo^n$. By the sampler guarantee in \Cref{def:complex-gaussian-sampler}, we may couple each $v_x$ independently to a standard complex Gaussian $g_x$ such that
\begin{equation}
    \Pr\!\left[\abs{s v_x-g_x}>\xi\right]
    \leq \eta_{\mathrm{samp}}.
\end{equation}
Let $g=(g_x)_{x\in\zo^n}$, so that $g\sim\calN^\CC(0,1)^N$. In particular, $\ket{\wh g}$ is Haar random. By a union bound,
\begin{align}
    \Pr[\norm{sv-g}_2>\xi\sqrt N]
\leq
    N\eta_{\mathrm{samp}}.
    \label{eq:secpf:twonormboundgs}
\end{align}
We now define the following ``good'' event
\begin{equation}
    E
    \coloneq
    \left\{
        \norm{sv-g}_2\leq\xi\sqrt N
        \ \text{and}\ 
        \norm{g}_2\geq\beta\sqrt N
    \right\}.
\end{equation}
We show that this good event guarantees that our state $\ket{\psi_F}$ actually corresponds to the state $\ket{\wh{v}}$ and that it is close to the Gaussian sample.

\begin{claim}[Good event implies high norm and Haar closeness]
\label{lem:good-event-high-norm-haar-closeness}
On the event $E$, we have $\norm{v}_2\geq\gamma\sqrt N$, and hence
$\ket{\psi_H}=\ket{\wh v}$. Moreover,
\begin{equation}
    \norm{\ket{\wh v}-\ket{\wh g}}_2
    \leq
    \frac{2\xi}{\beta}.
\end{equation}
\end{claim}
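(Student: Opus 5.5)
The plan is to use only the two inequalities that define $E$, together with the triangle inequality and the standard bound on the distance between normalized vectors. Both parts of the claim are deterministic on $E$; no probabilistic argument is needed.

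\textbf{Norm lower bound.} On $E$, the reverse triangle inequality gives
\[
s\norm{v}_2 \geq \norm{g}_2-\norm{sv-g}_2\geq (\beta-\xi)\sqrt N.
\]
The hypothesis $s\gamma+\xi\leq\beta$ says exactly $\beta-\xi\geq s\gamma$. Dividing by $s>0$ then yields $\norm{v}_2\geq\gamma\sqrt N$. In particular $v\neq 0$, so by \Cref{def:target-state-for-H} we have $\ket{\psi_H}=\ket{\wh v}$.

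\textbf{Closeness of normalized vectors.} Since $\ket{\wh v}=\ket{\wh{sv}}$, it suffices to compare $u\coloneq sv$ and $g$. For nonzero vectors $u,g$, the standard estimate is
\[
\norm{\tfrac{u}{\norm{u}}-\tfrac{g}{\norm{g}}}_2\leq\frac{2\norm{u-g}_2}{\norm{g}_2}.
\]
It follows from splitting the difference as
\[
\tfrac{u}{\norm u}-\tfrac{g}{\norm g}
=\tfrac{u-g}{\norm g}+u\pbra{\tfrac{1}{\norm u}-\tfrac1{\norm g}}
\]
and bounding the second term by $\abs{\norm g-\norm u}/\norm g\leq\norm{u-g}/\norm g$. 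On $E$ we have $\norm{u-g}_2\leq\xi\sqrt N$ and $\norm{g}_2\geq\beta\sqrt N$, so the right-hand side is at most $2\xi/\beta$, as claimed.

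The only step needing care is making sure the correct vector sits in the denominator. Dividing by $\norm g$ rather than $\norm u$ uses the lower bound $\norm g\geq\beta\sqrt N$ that $E$ supplies directly. This avoids going through the weaker bound on $\norm{sv}$. Otherwise the argument is routine.
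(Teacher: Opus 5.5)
Your proof is correct and matches the paper's argument step for step. The norm bound comes from the reverse triangle inequality together with $s\gamma+\xi\leq\beta$, and the closeness bound comes from $\norm{\wh u-\wh g}_2\leq 2\norm{u-g}_2/\norm{g}_2$ with $u=sv$. Your explicit splitting of $u/\norm{u}-g/\norm{g}$ just fills in the step the paper attributes to ``the triangle and reverse triangle inequalities.''
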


\begin{proof}
On $E$, the reverse triangle inequality gives
$s\norm{v}_2=\norm{sv}_2\geq\norm{g}_2-\norm{sv-g}_2
\geq(\beta-\xi)\sqrt N$.
Since $s\gamma+\xi\leq\beta$, we have
$\beta-\xi\geq s\gamma$, and therefore
$\norm{v}_2\geq\gamma\sqrt N$. Moreover, on $E$ we have $\norm{g}_2\geq\beta\sqrt N>0$,
so both $\ket{\wh v}$ and $\ket{\wh g}$ are well defined. By \Cref{def:target-state-for-H}, this implies
$\ket{\psi_F}=\ket{\wh v}$.

Let $u:=sv$. Since $s>0$ and $v\neq0$ on $E$, we have
$\ket{\wh u}=\ket{\wh v}$. Therefore the triangle and reverse triangle inequalities give us
\begin{align}
    \norm{\ket{\wh u}-\ket{\wh g}}_2
&\leq
    \frac{2\norm{u-g}_2}{\norm{g}_2}
    \leq
    \frac{2\xi\sqrt N}{\beta\sqrt N}
    =
    \frac{2\xi}{\beta}.
\end{align}
The final inequality uses the two conditions defining $E$.
\end{proof}

\begin{claim}[The good event occurs with high probability]
\label{lem:good-event-probability}
The event $E$ satisfies
\begin{equation}
    \Pr[E]
    \geq
    1-N\eta_{\mathrm{samp}}
    -\exp\!\left(-\frac{N(1-\beta^2)^2}{2}\right).
\end{equation}
\end{claim}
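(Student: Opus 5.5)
We bound the probability of the complement of $E$ by a union bound over its two defining conditions:
\[
\Pr[E^c]\leq \Pr\!\left[\norm{sv-g}_2>\xi\sqrt N\right]+\Pr\!\left[\norm{g}_2<\beta\sqrt N\right].
\]
The first term is at most $N\eta_{\mathrm{samp}}$ by \eqref{eq:secpf:twonormboundgs}. That bound came from coupling each coordinate independently and taking a union bound over the $N$ coordinates: if every coordinate satisfies $\abs{sv_x-g_x}\leq\xi$, then $\norm{sv-g}_2\leq\xi\sqrt N$. So it remains to show that
\[
\Pr\!\left[\norm{g}_2<\beta\sqrt N\right]\leq\exp\!\left(-\tfrac{N(1-\beta^2)^2}{2}\right).
\]

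For this lower-tail bound, we write $\norm{g}_2^2=\sum_x\abs{g_x}^2$. Under the convention $\mathbb{E}\abs{g_x}^2=1$, each $\abs{g_x}^2$ is a standard exponential random variable, and the summands are independent. The event $\norm{g}_2<\beta\sqrt N$ is exactly the event $S:=\sum_x\abs{g_x}^2<\beta^2N$, a lower deviation of size $t=(1-\beta^2)N$ below the mean $N$.

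We plan to use the Chernoff method on the lower tail. For $\theta\geq0$,
\[
\Pr[S<\beta^2N]\leq e^{\theta\beta^2N}\,\mathbb E[e^{-\theta S}]=\exp\!\big(N(\theta\beta^2-\log(1+\theta))\big).
\]
We then use the elementary inequality $\log(1+\theta)\geq\theta-\theta^2/2$ for $\theta\geq0$. This gives the exponent $N\big(-\theta(1-\beta^2)+\theta^2/2\big)$. Choosing $\theta=1-\beta^2\in(0,1)$ makes the exponent $-N(1-\beta^2)^2/2$, which is exactly the claimed bound. Alternatively, one could cite the Laurent–Massart lower-tail bound for $\chi^2_{2N}$, since $2S\sim\chi^2_{2N}$: it gives $\Pr[2S\leq 2N-2\sqrt{2N x}]\leq e^{-x}$, and setting $x=N(1-\beta^2)^2/2$ recovers the same constant. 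The direct Chernoff computation is self-contained, so we use that.

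The only place where care is needed is the bookkeeping in this Chernoff step. We must check that the normalization of the complex Gaussian (unit second moment, so each $\abs{g_x}^2$ is Exp(1)) matches the convention in \Cref{def:complex-gaussian-sampler}. We must also check that the one-sided inequality $\log(1+\theta)\geq\theta-\theta^2/2$ holds on the range $\theta\geq 0$ that we use; it does. Everything else is a union bound.
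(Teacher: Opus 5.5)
Your proposal is correct and matches the paper's proof: a union bound over the two conditions defining $E$, with \eqref{eq:secpf:twonormboundgs} handling the first and a Chernoff lower-tail bound for $N$ independent mean-one exponentials handling the second. The paper only cites the Chernoff bound. Your explicit computation (moment generating function $(1+\theta)^{-N}$, the inequality $\log(1+\theta)\geq\theta-\theta^2/2$, and the choice $\theta=1-\beta^2$) is valid and yields exactly the stated constant.
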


\begin{proof}
By \Cref{eq:secpf:twonormboundgs},
$\norm{sv-g}_2\leq\xi\sqrt N$ with probability at least
$1-N\eta_{\mathrm{samp}}$.

Furthermore, $\norm{g}_2^2=\sum_{i=1}^N\abs{g_i}^2$ is a sum of
$N$ independent mean-one exponential random variables. A standard
Chernoff bound therefore gives
\[
    \Pr[\norm{g}_2^2\leq\beta^2N]
    \leq
    \exp\!\left(-\frac{N(1-\beta^2)^2}{2}\right).
\]

The event $E$ fails only if either
$\norm{sv-g}_2>\xi\sqrt N$ or
$\norm{g}_2<\beta\sqrt N$.
Therefore by union bound,
$\Pr[E^c]\leq
N\eta_{\mathrm{samp}}
+\exp(-N(1-\beta^2)^2/2)$,
\end{proof}

Now we combine preceding bounds to complete the proof of the random-function security theorem.

Let
$\rho=2\xi/\beta$.
By Lemma~\ref{lem:good-event-high-norm-haar-closeness}, on the event $E$ we have
$\ket{\psi_F}=\ket{\wh v}$ and
$\norm{\ket{\psi_F}-\ket{\wh g}}_2\leq\rho$.
Since $\ket{\wh g}$ is Haar random, Lemma~\ref{lem:haar-truncation-boundaries}
implies that, except with probability at most
$C N^{3/2}2^T\rho$, no real or imaginary amplitude of
$\ket{\wh g}$ lies within distance $\rho$ of a truncation boundary.

On the intersection of $E$ with this no-boundary event, the bound
$\norm{\ket{\psi_F}-\ket{\wh g}}_2\leq\rho$
implies that every real and imaginary amplitude coordinate differs by at most
$\rho$. Hence the two states have the same $T$-bit truncated \amplistwords.

Therefore, under this coupling, the two truncated lists can differ only if
$E$ fails or if a Haar amplitude lies within distance $\rho$ of a truncation
boundary. By the coupling characterization of total variation distance and
Lemma~\ref{lem:good-event-probability},
\begin{equation}
\TVD\!\left(
    \al_T(D_{\mathrm{rf}}),
    \al_T(\Haar_n)
\right)
\leq
N\eta_{\mathrm{samp}}
+
\exp\!\left(-\frac{N(1-\beta^2)^2}{2}\right)
+
C N^{3/2}2^T\frac{2\xi}{\beta}.
\end{equation}
\end{proof}

We give a concrete parameter setting under which the random-function security theorem yields error at most \(\eps\).

\begin{corollary}[Parameters for random function security]\label{cor:random-function-security-parameters}
    Let $N=2^n\geq2$, let $T\in\NN$, and let $\eps\in(0,1)$ satisfying $\eps \geq 2 e^{-N/8}$. Let $C\geq1$ be a universal constant for which \Cref{thm:random-function-security} holds, and let $\Samp$ be a complex Gaussian sampler as in \Cref{def:complex-gaussian-sampler} with parameters
    \begin{align}
        s\geq\frac12, \qquad \xi\leq\frac{\eps}{16CN^{3/2}2^T}, \qquad \eta_{\mathrm{samp}}\leq\frac{\eps}{4N}.
    \end{align}
    Then for $\gamma=1/(2s)$, the state ensemble $D_{\mathrm{rf}}$ defined using $\Samp$ and $\gamma$ is $\eps$-close to the Haar measure when viewing only truncated \amplistswords at precision $T$:
    \begin{align}
        \TVD\pbra{\al_T\pbra{D_{\mathrm{rf}}},\al_T\pbra{\Haar_n}}\leq\eps.
    \end{align}
\end{corollary}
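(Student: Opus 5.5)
The plan is to apply \Cref{thm:random-function-security} directly, with the threshold fixed to $\gamma=1/(2s)$. The only real choice is the auxiliary parameter $\beta\in(0,1)$. It must satisfy the hypothesis $s\gamma+\xi\le\beta$, and it must also make each of the three error terms small enough that their sum is at most $\eps$.

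First I check the hypotheses of \Cref{thm:random-function-security}. Since $s\ge\frac12$, we have $\gamma=1/(2s)\in(0,1]$, so this is a legal threshold. Also $s\gamma=\frac12$, and the hypothesis becomes $\frac12+\xi\le\beta$. I will take $\beta=5/8$.

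This choice needs $\xi\le 1/8$. That holds because
\[
\xi\le\frac{\eps}{16CN^{3/2}2^T}\le\frac1{16},
\]
using $\eps<1$, $C\ge1$, $N\ge2$ and $T\ge0$.

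Next I bound the three terms of \Cref{thm:random-function-security} separately.
\begin{enumerate}
\item \emph{Sampler failure.} By assumption, $N\eta_{\mathrm{samp}}\le\eps/4$.
\item \emph{Small Gaussian norm.} We have $1-\beta^2=39/64\ge\frac12$, hence $(1-\beta^2)^2/2\ge\frac18$. Therefore
\[
e^{-N(1-\beta^2)^2/2}\le e^{-N/8}\le \eps/2,
\]
where the last step uses the assumption $\eps\ge2e^{-N/8}$.
\item \emph{Truncation boundary.} This term is
\[
CN^{3/2}2^T\cdot\frac{2\xi}{\beta}=\frac{16}{5}\,CN^{3/2}2^T\xi\le\frac{16}{5}\cdot\frac{\eps}{16}=\frac{\eps}{5}.
\]
\end{enumerate}
Summing gives a total error of at most $\pbra{\frac14+\frac12+\frac15}\eps<\eps$, which is the claimed bound.

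The only delicate point is the tension in choosing $\beta$. It must be at least $\frac12+\xi$ for the hypothesis to hold. It must also be small enough that the Chernoff exponent $(1-\beta^2)^2/2$ is at least $1/8$, since that is what matches the assumed lower bound $\eps\ge 2e^{-N/8}$; this forces $\beta\le1/\sqrt2$. A naive choice such as $\beta=3/4$ fails this second requirement, and $\beta=5/8$ sits safely inside the admissible window $[\frac12+\xi,\,1/\sqrt2]$. Everything else is arithmetic.
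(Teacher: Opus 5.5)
Your proposal is correct and follows the same route as the paper: apply \Cref{thm:random-function-security} with $\gamma=1/(2s)$ and a fixed $\beta$, and bound the three error terms by $\eps/4$, $\eps/2$, and a constant fraction of $\eps$. The only difference is that the paper takes $\beta=1/\sqrt2$, the endpoint of your admissible window, so the middle term is exactly $e^{-N/8}$ and the third term is $2\sqrt2\,\eps/16\le\eps/4$. Your $\beta=5/8$ works equally well, and you additionally check explicitly that $\gamma\in(0,1]$.
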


\begin{proof}
    We set $\beta=1/\sqrt{2}$. The assumed bounds give $\xi \leq 1/16$, so $s\gamma + \xi = 1/2 + \xi < \beta$.  
By \Cref{thm:random-function-security},
    \begin{align}
        \TVD\pbra{\al_T\pbra{D_{\mathrm{rf}}},\al_T\pbra{\Haar_n}}
        &\leq N\eta_{\mathrm{samp}}+e^{-N(1-\beta^2)^2/2}+CN^{3/2}2^T\frac{2\xi}{\beta}\\
&\leq \frac{\eps}{4}+e^{-N/8}+\frac{\eps}{4}
        \leq \eps.
    \end{align}
    The final inequality follows from  $\eps \geq 2 e^{-N/8}$.
\end{proof}

In particular, for any polynomially bounded $b:\NN\to\NN$, setting $\eps(n)=2^{-b(n)}$ satisfies the dimension condition for all sufficiently large $n$; hence samplers with the stated parameters give \amplistwords total variation distance at most $2^{-b(n)}$.

For the remainder of this section, whenever $n,T,\eps$ are fixed, we
use the sampler and threshold chosen in
\Cref{cor:random-function-security-parameters}. We suppress this
dependence from the notation.

\subsection{Reconstructing the amplitude list}\label{ssec:security:pseudorandomfunc}

We now replace the uniformly random function $H$ by a pseudorandom
function $F_k$. Fix $n,T,\eps$ and use the sampler and target-state
parameters fixed above. Let $D_{\mathrm{prf}}$ denote the resulting
distribution of $\ket{\psi_{F_k}}$ for a uniformly random key $k$.

Our goal is to show that no efficient classical distinguisher can
distinguish $D_{\mathrm{prf}}$ from the random-function distribution
$D_{\mathrm{rf}}$, assuming that $\mathcal F$ is a sufficiently secure PRF family.
Together with the random-function security result above, this will
imply that $D_{\mathrm{prf}}$ is indistinguishable from Haar at the
level of truncated \amplistswords.

The only additional ingredient we need is the ability to reconstruct
the amplitude list from classical oracle access to the underlying
function.

\begin{lemma}[Classical reconstruction of the truncated amplitude list]
\label{lem:classical-amp-list-reconstruction}
Fix $n,T,\eps$ and a sampler satisfying the parameter conditions of \Cref{cor:random-function-security-parameters}. Assume that the tape length $R$ and classical evaluation time of the sampler are polynomial in $n,T,\log(1/\eps)$.

Given classical oracle access to any
$H:\zo^n\to\zo^R$, there is a classical procedure that either outputs
the exact list $\al_T(\psi_H)$ or reports failure.  It makes $N=2^n$
oracle queries and runs in time
$\poly(N,R,T,\log(1/\eps))$.  If $H$ is uniformly random, the failure
probability is at most $\eps$.
\end{lemma}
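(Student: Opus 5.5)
The procedure queries $H$ on all $N=2^n$ inputs and computes $v_x=\Samp(H(x))$ for each $x$. Each $v_x$ is given in polar form $(r_x,\theta_x)$ with $O(R)$ fractional bits. This costs $N$ oracle queries and $N\cdot\poly(R,T,\log(1/\eps))$ time, by the assumption on the sampler's evaluation time. From these values the procedure evaluates $\norm{v}^2=\sum_x r_x^2$ exactly as a rational number; squaring and summing $O(R)$-bit rationals is polynomial. This settles the case split of \Cref{def:target-state-for-H}, since the test $\norm{v}^2\geq\gamma^2N$ is an exact rational comparison when $s$, and hence $\gamma=1/(2s)$, is chosen rational. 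If the test fails, the procedure outputs $\al_T(0^n)$, which is trivially known exactly. Otherwise the target amplitudes are $\operatorname{Re},\operatorname{Im}$ of $r_xe^{2\pi i\theta_x}/\norm{v}$, and what remains is to output their $T$-bit truncations \emph{exactly}.

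The only obstacle is exactness. The cosines, sines and square root are irrational, so we can only approximate them; an approximation can land on the wrong side of a truncation boundary. The plan is to compute each real and imaginary coordinate to additive accuracy $\delta$. Standard Taylor-series or interval arithmetic for $\cos$, $\sin$ and $\sqrt{\cdot}$ gives this, and the division is safe because $\norm{v}\geq\gamma\sqrt N$ is bounded below. The running time is $\poly(n,\log(1/\delta),R)$ per coordinate. The procedure then checks whether every approximation lies at distance more than $\delta$ from every boundary in $\calB_T$. If so, the true value lies in the same cell and the truncation is determined exactly. If any coordinate is within $\delta$ of a boundary, the procedure reports failure. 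Hence it never outputs a wrong list, only the exact list or failure.

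To bound the failure probability for uniformly random $H$, I reuse the coupling from the proof of \Cref{thm:random-function-security}. Failure requires some coordinate of $\ket{\psi_H}$ to lie within $2\delta$ of a boundary. On the good event $E$, \Cref{lem:good-event-high-norm-haar-closeness} gives $\ket{\psi_H}=\ket{\wh v}$ and $\norm{\ket{\wh v}-\ket{\wh g}}\leq 2\xi/\beta$, with $\ket{\wh g}$ Haar random. So failure on $E$ forces a coordinate of $\ket{\wh g}$ to lie within $\rho=2\delta+2\xi/\beta$ of a boundary, and \Cref{lem:haar-truncation-boundaries} bounds the probability of that by $CN^{3/2}2^T\rho$. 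Adding $\Pr[E^c]\leq N\eta_{\mathrm{samp}}+e^{-N/8}$ from \Cref{lem:good-event-probability} and using the parameter choices of \Cref{cor:random-function-security-parameters}, the total is $\eps/4+\eps/2+e^{-N/8}+2CN^{3/2}2^T\delta$ (with $\beta=1/\sqrt2$). The condition $\eps\geq 2e^{-N/8}$ makes $e^{-N/8}\leq\eps/8$. Choosing $\delta=\eps/(16CN^{3/2}2^T)$ makes the last term at most $\eps/8$, so the failure probability is at most $\eps$. Finally, $\log(1/\delta)=O(n+T+\log(1/\eps))$, so the total time stays $\poly(N,R,T,\log(1/\eps))$. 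The step I expect to need the most care is this last bookkeeping: the $\eps/2$ budget is already consumed by the corollary's choice of $\xi$, so $\delta$ must be chosen as above to fit the remainder.
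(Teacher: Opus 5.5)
Your procedure and failure analysis follow the paper's proof step for step. You query all of $H$, do the norm test exactly, approximate each Cartesian coordinate, report failure near a truncation boundary, and bound failure via the good event $E$ and \Cref{lem:haar-truncation-boundaries}; your $2\delta$ plays the role of the paper's interval width $2^{-p}$.

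The problem is the final bookkeeping, which you flagged as delicate and which does not close as written. You have the two large terms swapped. The hypothesis $\eps\geq2e^{-N/8}$ gives only $e^{-N/8}\leq\eps/2$, not $\eps/8$; that is where the $\eps/2$ in the corollary's budget goes. The $\xi$-term is $CN^{3/2}2^T\cdot2\sqrt2\,\xi\leq 2\sqrt2\,\eps/16\approx0.18\,\eps$, not $\eps/2$. With your $\delta=\eps/(16CN^{3/2}2^T)$ the correct total is
\[
\frac{\eps}{4}+\frac{\eps}{2}+\frac{2\sqrt2}{16}\,\eps+\frac{\eps}{8}\approx1.05\,\eps>\eps,
\]
so the stated failure bound is not established.

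The repair is to halve $\delta$. With $\delta=\eps/(32CN^{3/2}2^T)$ the boundary term is $CN^{3/2}2^T(2\sqrt2\,\xi+2\delta)\leq(2\sqrt2+1)\eps/16$. The total is then $\eps/4+\eps/2+(2\sqrt2+1)\eps/16<\eps$, which is exactly the paper's computation. This changes $\log(1/\delta)$ by only an additive constant, so the running time is unaffected.

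A smaller point: the exact comparison $\norm{v}^2\geq\gamma^2N$ needs $\gamma^2$ to be rational, not $s$. For the paper's sampler $s=\sqrt{\mu}$ is generally irrational, but $\gamma^2=1/(4\mu)$ is rational, so your argument goes through with that correction.
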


\begin{proof}
Query $H(x)$ for every $x\in\zo^n$ and evaluate the sampler to obtain
$v_x=\Samp(H(x))$ for all $x$. Compute $\norm{v}_2^2$ and compare it
with $\gamma^2N$. If $\norm{v}_2^2<\gamma^2N$, then
$\ket{\psi_H}=\ket{0^n}$ by \Cref{def:target-state-for-H}, so the
procedure returns the known truncated \amplistwords of $\ket{0^n}$.

Otherwise, $\ket{\psi_H}=\ket{\wh v}$. The procedure converts each
normalized amplitude $v_x/\norm{v}_2$ from the sampler's polar encoding
to Cartesian coordinates. For each real and imaginary component, it
computes an interval of length at most $2^{-p}$ containing the exact
value.

If any such interval intersects a truncation boundary in $\calB_T$, the
procedure reports failure. Otherwise, every interval lies entirely within
a single truncation bin and therefore determines the corresponding
$T$-bit encoding uniquely. Concatenating these encodings gives exactly
$\al_T(\psi_H)$.

We now bound the failure probability when $H$ is uniformly random.
By \Cref{cor:random-function-security-parameters},
$s\gamma+\xi\leq 9/16<1/\sqrt{2}$, so the good-event lemmas apply
with $\beta=1/\sqrt{2}$. Thus, by
~\Cref{lem:good-event-probability},
$\Pr[E^c]\leq N\eta_{\mathrm{samp}}+e^{-N/8}$, and by
~\Cref{lem:good-event-high-norm-haar-closeness}, on $E$ we have
$\norm{\ket{\psi_H}-\ket{\wh g}}_2\leq2\sqrt{2}\,\xi$.

Suppose the reconstruction procedure reports failure. Then some interval
containing a real or imaginary coordinate of $\ket{\psi_H}$ intersects
a truncation boundary, so that coordinate lies within distance $2^{-p}$
of the boundary. On $E$, the corresponding coordinate of
$\ket{\wh g}$ therefore lies within distance
$2\sqrt{2}\,\xi+2^{-p}$ of the same boundary. 
Choose $p
    =
    \left\lceil
        \log_2\!\left(
            \frac{16CN^{3/2}2^T}{\eps}
        \right)
    \right\rceil$.
By
Lemma~\ref{lem:haar-truncation-boundaries} and our parameter choices from \Cref{cor:random-function-security-parameters}
\begin{align}
    \Pr[\text{failure}] 
    &\leq
    N\eta_{\mathrm{samp}}+e^{-N/8} + CN^{3/2}2^T(2\sqrt2\,\xi+2^{-p})\\
    &\leq
    \frac{\eps}{4}+\frac{\eps}{2}
   +\frac{2\sqrt2+1}{16}\eps
 \leq\eps.
\end{align}

It remains to bound the running time. The procedure makes $N$ oracle queries. Since $p=O(n+T+\log(1/\eps)$, sampler evaluation and the required finite-precision arithmetic take time polynomial in $N,R,T,\log(1/\eps)$.
\end{proof}

\begin{theorem}[PRF replacement]
\label{thm:prf-replacement}
    Fix $n,T,\lambda\in\NN$, and let
    $\mathcal F=\{F_k:\zo^n\to\zo^R\}$ be a keyed function family.
    Let $D_{\mathrm{prf}}$ be the distribution of $\ket{\psi_{F_k}}$
    for a uniformly random key $k$, and let $D_{\mathrm{rf}}$ be the
    corresponding distribution for a uniformly random $H:\zo^n\to\zo^R$.
    Suppose $\mathcal F$ is secure against classical oracle distinguishers
    running in time at most $t_{\mathrm{PRF}}$, with distinguishing
    advantage at most $\eps_{\mathrm{PRF}}$.
    Assume moreover that the reconstruction procedure of
    \Cref{lem:classical-amp-list-reconstruction} runs in time at most
    $t_{\mathrm{rec}}$ and, for uniformly random $H$, reports failure with
    probability at most $\eps_{\mathrm{rec}}$.
    
    Then every classical distinguisher $A$ running in time $t_A$ with $t_A+t_{\mathrm{rec}}\leq t_{\mathrm{PRF}}$
    satisfies
    \begin{align*}
    \begin{aligned}
    \Big|
    \Pr_{\psi\gets D_{\mathrm{prf}}}
      [A(1^n,1^\lambda,\al_T(\psi))=1] 
      -
    \Pr_{\psi\gets D_{\mathrm{rf}}}
      [A(1^n,1^\lambda,\al_T(\psi))=1]
    \Big|
    \leq
    \eps_{\mathrm{PRF}}+\eps_{\mathrm{rec}}.
    \end{aligned}
    \end{align*}
\end{theorem}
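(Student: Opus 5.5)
The plan is a standard hybrid argument with a single intermediate experiment. From $A$ we build a classical oracle distinguisher $B^{\mathcal O}$ for the keyed family $\mathcal F$. On oracle $\mathcal O:\zo^n\to\zo^R$, $B$ runs the reconstruction procedure of \Cref{lem:classical-amp-list-reconstruction}. If the procedure reports failure, $B$ outputs $0$ (any fixed bit would do). Otherwise it obtains the exact list $\al_T(\psi_{\mathcal O})$, runs $A(1^n,1^\lambda,\al_T(\psi_{\mathcal O}))$, and outputs $A$'s answer. Its running time is at most $t_{\mathrm{rec}}+t_A\leq t_{\mathrm{PRF}}$, so the PRF security hypothesis applies to $B$.

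Next we compare $B$ to the ideal quantity for each oracle. For every fixed function $H$, write $p_A(H)=\Pr[A(1^n,1^\lambda,\al_T(\psi_H))=1]$, where the probability is over $A$'s internal coins. Because the reconstruction is exact whenever it succeeds, $\Pr[B^H=1]$ and $p_A(H)$ can differ only when the procedure fails on $H$. Let $\mathrm{Fail}(H)$ be the indicator of that event; the procedure is deterministic given $H$, since it only queries $H$ and performs fixed arithmetic. Then
\[
\abs{\Pr[B^H=1]-p_A(H)}\leq \mathrm{Fail}(H).
\]
Taking $H=F_k$ with $k$ uniform and $H$ uniformly random gives the two probabilities in the statement as $\Expect_k p_A(F_k)$ and $\Expect_H p_A(H)$.

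The triangle inequality now splits the gap into three pieces:
\begin{align*}
\abs{\Expect_k p_A(F_k)-\Expect_H p_A(H)}
\leq{}& \abs{\Expect_k p_A(F_k)-\Pr_k[B^{F_k}=1]}
+\abs{\Pr_k[B^{F_k}=1]-\Pr_H[B^{H}=1]}\\
&+\abs{\Pr_H[B^H=1]-\Expect_H p_A(H)}.
\end{align*}
The middle term is at most $\eps_{\mathrm{PRF}}$ by PRF security, and the last term is at most $\Pr_H[\mathrm{Fail}]\leq\eps_{\mathrm{rec}}$.

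The main obstacle is the first term, the failure probability under the PRF. The lemma only bounds failure for uniformly random $H$, and I would not try to analyze failure under $F_k$ directly. Instead I will fold the failure event into the distinguisher. Define $B'$ to output $1$ exactly when the reconstruction fails. It also runs in time at most $t_{\mathrm{rec}}\leq t_{\mathrm{PRF}}$, so PRF security gives $\Pr_k[\mathrm{Fail}]\leq\eps_{\mathrm{rec}}+\eps_{\mathrm{PRF}}$. This route yields the weaker bound $2\eps_{\mathrm{PRF}}+2\eps_{\mathrm{rec}}$.

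To reach the stated $\eps_{\mathrm{PRF}}+\eps_{\mathrm{rec}}$, I would first try to combine the two distinguishers into one. Let $B''$ run the reconstruction, output $1$ on failure, and otherwise output $A$'s answer. Then $\Pr[B''^H=1]=\Pr[\mathrm{Fail}(H)]+\Pr[\neg\mathrm{Fail}(H)\wedge A=1]$, which lies between $p_A(H)$ and $p_A(H)+\mathrm{Fail}(H)$. So $B''$ only ever overestimates $p_A$, by at most the failure indicator, while the original $B$ only ever underestimates it by the same amount.

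The idea is to use $B''$ for the upper bound on $\Expect_k p_A(F_k)-\Expect_H p_A(H)$ and $B$ for the lower bound, or vice versa, depending on the sign. Each direction would then cost one application of PRF security plus one failure term. I am not certain that the failure term under the PRF disappears in both directions. If it does not, I would fall back on applying PRF security to the complementary tests ($A$ and $1-A$) to control the sign. If even that fails, I would accept a constant-factor loss in the bound, which is harmless for the subsequent parameter setting.
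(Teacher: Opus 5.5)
Your final idea (upper-bound $\Expect_k p_A(F_k)-\Expect_H p_A(H)$ with $B''$, which outputs $1$ on failure, and lower-bound it with $B$, which outputs $0$ on failure) is exactly the paper's proof. Your doubt about it is unfounded. Write $B_0:=B$ and $B_1:=B''$, as the paper does. The one-sided comparisons you derived hold pointwise for \emph{every} oracle, in particular for each $F_k$: $\Pr[B_0^{F_k}=1]\leq p_A(F_k)\leq\Pr[B_1^{F_k}=1]$. So the failure probability under the PRF never has to be controlled. Failure only enters on the random-function side, where the hypothesis bounds it by $\eps_{\mathrm{rec}}$. Both $B_0$ and $B_1$ run in time at most $t_A+t_{\mathrm{rec}}\leq t_{\mathrm{PRF}}$, so
\begin{align*}
\Expect_k p_A(F_k)&\leq\Pr_k[B_1^{F_k}=1]\leq\Pr_H[B_1^{H}=1]+\eps_{\mathrm{PRF}}\leq\Expect_H p_A(H)+\eps_{\mathrm{rec}}+\eps_{\mathrm{PRF}},\\
\Expect_k p_A(F_k)&\geq\Pr_k[B_0^{F_k}=1]\geq\Pr_H[B_0^{H}=1]-\eps_{\mathrm{PRF}}\geq\Expect_H p_A(H)-\eps_{\mathrm{rec}}-\eps_{\mathrm{PRF}}.
\end{align*}
This is the stated bound.

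You should therefore drop the three-term triangle decomposition, the auxiliary failure test $B'$, and the fallback to complementary tests. The route through $B'$ only gives $2\eps_{\mathrm{PRF}}+2\eps_{\mathrm{rec}}$, which does not prove the theorem as stated. As you note, though, the extra factor would be absorbed by the slack in \Cref{thm:PRAL-security}.
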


\begin{proof}
    Fix a classical distinguisher $A$ satisfying the stated running-time bound.
By Lemma~\ref{lem:classical-amp-list-reconstruction}, given classical
oracle access to a function $H:\zo^n\to\zo^R$, we can either reconstruct
$\al_T(\psi_H)$ exactly or report failure. When $H$ is uniformly random,
the failure probability is at most $\eps_\mr{rec}$.

We use this reconstruction procedure to build two classical oracle
distinguishers $B_0$ and $B_1$. For $b\in\{0,1\}$, the distinguisher
$B_b$ runs the reconstruction procedure. If reconstruction succeeds and
returns a list $z$, it outputs
$A(1^n,1^\lambda,z)$. If reconstruction fails, it outputs $b$.
By the running-time bound in
Lemma~\ref{lem:classical-amp-list-reconstruction}, each $B_b$ runs in
time at most
$t_A+t_\mr{rec}\leq t_{\mathrm{PRF}}$.

Let $a_{\mathrm{prf}}$ and $a_{\mathrm{rf}}$ denote the acceptance
probabilities of $A$ on exact \amplistswords drawn from
$D_{\mathrm{prf}}$ and $D_{\mathrm{rf}}$, respectively.
Whenever reconstruction succeeds, $B_0$ and $B_1$ both run $A$ on the
exact \amplistwords. On failure, $B_0$ outputs $0$ and $B_1$ outputs $1$.
Therefore,
$\Pr[B_0^{F_k}=1]\leq a_{\mathrm{prf}}\leq\Pr[B_1^{F_k}=1]$.
When the oracle is a uniformly random function, reconstruction fails with
probability at most $\eps_\mr{rec}$. Hence forcing the output on failure to be
$0$ or $1$ changes the acceptance probability by at most $\eps_\mr{rec}$, so
\[
    a_{\mathrm{rf}}-\eps_\mr{rec}
    \leq
    \Pr[B_0^{H}=1]
    \leq
    \Pr[B_1^{H}=1]
    \leq
    a_{\mathrm{rf}}+\eps_\mr{rec},
\]
where $H$ is uniformly random.
Since each $B_b$ runs within the PRF security time bound,
for
$b\in\{0,1\}$ we have
\[\abs{\Pr[B_b^{F_k}=1]-\Pr[B_b^H=1]}\leq\eps_{\mathrm{PRF}}.\]
Combining these inequalities gives $a_{\mathrm{rf}}-\eps_\mr{rec}-\eps_{\mathrm{PRF}}
    \leq
    a_{\mathrm{prf}}
    \leq
    a_{\mathrm{rf}}+\eps_\mr{rec}+\eps_{\mathrm{PRF}}$ and therefore
\begin{equation}
    \abs{a_{\mathrm{prf}}-a_{\mathrm{rf}}}
    \leq
    \eps_\mr{rec}+\eps_{\mathrm{PRF}}.
\end{equation}
\end{proof}

\subsection{\PRAL security from a subexponentially-secure PRF}

The reconstruction procedure above runs in time polynomial in
$N=2^n$, and is therefore exponential in the number of qubits $n$.
Consequently, ordinary polynomial-time PRF security is insufficient
for our reduction. We therefore require a PRF family that remains secure
against classical distinguishers of subexponential size in its security
parameter $m$. By choosing $m$ to be a sufficiently large polynomial
in $n$, this security budget will cover the $2^{O(n)}$-time reduction.

\begin{definition}[Classically subexponentially secure PRFs]
\label{def:subexp-prf}
Let
$\mathcal G=\{G_{m,k}:\zo^m\to\zo^m\}_{m\in\NN,\;k\in\mathcal K_m}$
be a keyed function family whose uniform keys can be sampled in
$\poly(m)$ time and whose functions can be evaluated in $\poly(m)$ time.

We call $\mathcal G$ \emph{classically subexponentially secure} if
there exists a constant $a\in(0,1)$ such that, for all sufficiently
large $m$, every classical oracle algorithm $B$ running in time at most
$2^{m^a}$ satisfies
\begin{equation}
    \abs{
        \Pr_{k\gets\mathcal K_m}[B^{G_{m,k}}=1]
        -
        \Pr_{U_m}[B^{U_m}=1]
    }
    \leq
    2^{-m^a},
\end{equation}
where $U_m:\zo^m\to\zo^m$ is a uniformly random function.
\end{definition}

\begin{remark}
    Classically subexponentially secure PRFs can already be constructed from
    subexponentially secure length-doubling pseudorandom generators.
    Concretely, the GGM construction~\cite{goldreich1986how} constructs
    a PRF from a length-doubling PRG and \cite[Proposition~5]{jain2012hardness}
    shows that if the PRG has
    $c m^\epsilon$ bits of security, then the resulting GGM PRF has
    $c'm^\epsilon$ bits of security for every constant $c'<c/2$.
    Thus, for any fixed $a<\epsilon$, this yields a PRF satisfying \Cref{def:subexp-prf} for all sufficiently
    large $m$.

    This is sufficient for the security argument in this section. However, for our ultimate magic-hierarchy implementation, we need the PRF itself to be computable by polynomial-size \textit{constant-depth} threshold circuits. The standard GGM construction does not preserve constant depth: its evaluation consists of a sequential chain of PRG evaluations. We therefore assume the existence of the PRF directly when instantiating the construction in $\TCZ$.
\end{remark}

\begin{theorem}[\PRAL security from subexponentially secure PRFs]
\label{thm:PRAL-security}
Let $\mathcal G$ be a classically subexponentially secure PRF family
in the sense of \Cref{def:subexp-prf}. 

Suppose that for every requested error tolerance $\eps\in(0,1)$,
there is a complex Gaussian sampler satisfying the parameter bounds of
\Cref{cor:random-function-security-parameters}, with tape length and
classical evaluation time polynomial in
$n,T,$ and $\log(1/\eps)$.

Then the sampler and PRF parameters can be chosen so that the resulting
target-state distributions
$\{D_{n,\lambda,T}\}_{n,\lambda,T\in\NN}$
form a \PRAL family.
\end{theorem}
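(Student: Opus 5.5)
The plan is a hybrid argument: Haar $\to D_{\mathrm{rf}}$ via \Cref{cor:random-function-security-parameters}, then $D_{\mathrm{rf}}\to D_{\mathrm{prf}}$ via \Cref{thm:prf-replacement}, with $\mathcal G$'s security parameter $m$ chosen as a large enough polynomial in $(n,\lambda,T)$ that the subexponential security bound covers the $2^{O(n)}$-time reduction.

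\textbf{Parameters.} Fix polynomially bounded $\lambda(n),T(n)$ and a PPT distinguisher $A$. Its running time is $t_A\le\poly(2^n(2T+c_{\mathrm{enc}}))=2^{O(n+\log T)}$.

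1. Set the target error $\eps=2^{-\lambda-2}$. For large $n$ this satisfies $\eps\ge 2e^{-N/8}$, since $\lambda$ is polynomial while $N=2^n$.
2. Use the assumed sampler for this $\eps$. Its tape length is $R=\poly(n,T,\lambda)$.
3. Instantiate $H$ from $\mathcal G$. Take $m\ge\max(n,R)$ and define $F_k(x)=$ the first $R$ bits of $G_{m,k}(x0^{m-n})$. Restricting and truncating a uniformly random function $U_m$ in this way gives a uniformly random $H:\zo^n\to\zo^R$. Any oracle distinguisher for $F_k$ therefore becomes one for $G_{m,k}$ with only polynomial overhead.
4. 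Define $D_{n,\lambda,T}$ as the distribution of $\ket{\psi_{F_k}}$ under these choices, with $\gamma=1/(2s)$.

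\textbf{Hybrid bound.} By \Cref{cor:random-function-security-parameters}, $\TVD(\al_T(D_{\mathrm{rf}}),\al_T(\Haar_n))\le\eps$. So $A$'s advantage between $D_{\mathrm{rf}}$ and Haar is at most $\eps$.

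By \Cref{lem:classical-amp-list-reconstruction}, the reconstruction runs in time $t_{\mathrm{rec}}=\poly(2^n,R,T,\lambda)$ and fails with probability $\eps_{\mathrm{rec}}\le\eps$. Then \Cref{thm:prf-replacement} gives advantage at most $\eps_{\mathrm{PRF}}+\eps$ between $D_{\mathrm{prf}}$ and $D_{\mathrm{rf}}$, provided $t_A+t_{\mathrm{rec}}$ plus the simulation overhead is at most $2^{m^a}$. In that case $\eps_{\mathrm{PRF}}=2^{-m^a}$.

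\textbf{Choosing $m$.} Take $m=\lceil (n+\lambda+R+T)^{c/a}\rceil$ for a suitable constant $c$. Then $m^a\ge(n+\lambda+R+T)^c$, which exceeds both $\log(t_A+t_{\mathrm{rec}})=O(n+\log(TR\lambda))$ and $\lambda+2$. Hence $\eps_{\mathrm{PRF}}\le 2^{-\lambda-2}$, and the total advantage is at most $3\cdot2^{-\lambda-2}<2^{-\lambda}$. Since $m$ is polynomial, key sampling and evaluation remain $\poly(n,\lambda,T)$. This is what preparability in later sections will need, although \Cref{def:PRAL} itself does not require efficient sampling.

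\textbf{Main obstacle.} The delicate step is quantifier bookkeeping. The exponent in $t_A$ depends on $A$, but the family $D_{n,\lambda,T}$ must be fixed before $A$ is chosen, and \Cref{def:PRAL} quantifies over all PPT $A$.

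I plan to resolve this by making $m$ superpolynomially large in $n$ inside the exponent budget while still polynomial in the parameters. Concretely, take $m=(n\lambda T)^{2/a}$, or $n^{2/a}$ times a polynomial in $\lambda,T$. Then $2^{m^a}\ge 2^{n^2}$, which exceeds $2^{O(n)}$ for every polynomial-time $A$ once $n$ is large. The same choice makes $2^{-m^a}\le2^{-\lambda-2}$ for large $n$. The phrase ``for all sufficiently large $n$'' in \Cref{def:PRAL} absorbs the constants depending on $A$, and also covers the thresholds in \Cref{def:subexp-prf} and \Cref{cor:random-function-security-parameters}.
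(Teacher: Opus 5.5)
Your proposal is correct and follows essentially the paper's route. It uses the same hybrid chain: Haar to $D_{\mathrm{rf}}$ via the random-function parameter corollary, then $D_{\mathrm{rf}}$ to $D_{\mathrm{prf}}$ via classical reconstruction plus PRF replacement, with the internal PRF parameter $m$ chosen so that $m^a$ grows superlinearly in $n$ and so beats the $2^{O(n)}$ reduction for every fixed PPT $A$; the paper takes $m=(n+\lambda+T+R+4)^d$ with $ad>2$ and $\eps=2^{-\lambda-3}$, which is exactly your resolution of the quantifier issue. The only bookkeeping caveat is to keep $R$ in the base of $m$, as in your first choice, so that $m\ge\max\{n,R\}$ is guaranteed, since the bare choice $(n\lambda T)^{2/a}$ need not dominate the sampler's tape length $R$.
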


The proof has three steps: (1) Choose the sampler parameters and a
larger internal PRF security parameter $m$, and use the PRF to define the
tape functions $F_k$ underlying our target-state ensemble. (2) Show
that the classical reduction from \Cref{thm:prf-replacement} runs within
the subexponential security budget of the PRF, which allows us to replace
the pseudorandom tape function by a uniformly random one. (3) Combine this replacement with the random-function security result from
\Cref{cor:random-function-security-parameters} to conclude that the
resulting ensemble is indistinguishable from Haar in the sense required
by \PRAL security.

\begin{proof}
\medskip
\noindent\textbf{(1) Defining the ensemble.}
Let $a\in(0,1)$ be the constant from
\Cref{def:subexp-prf}. Fix an integer $d$ such that $ad > 2$. We choose our parameters so that the random-function error from \Cref{thm:random-function-security}, reconstruction failure from \Cref{thm:prf-replacement}, and PRF distinguishing advantage are each at most $\eps = 2^{-\lambda -3}$.

For each $n, \lambda, T$, choose the sampler and target threshold according to parameter choice in \Cref{cor:random-function-security-parameters} with $\eps = 2^{-\lambda -3}$. 
By assumption, the tape length $R$ and classical evaluation time are polynomially bounded in $n,\lambda,$ and $T$.

Set $m=(n+\lambda+T+R+4)^d$. 
Then $m$ is polynomially bounded in $n,\lambda$, and $T$ and satisfies
$m\geq\max\{n,R\}$.
For a key $k\in\mathcal K_m$, define $F_k(x)
    =
    \operatorname{prefix}_R
    \left(
        G_{m,k}(0^{m-n}x)
    \right)$
for each $x\in\zo^n$, where $\operatorname{prefix}_R$ keeps the first $R$ output bits. The padding and truncation adapt the input and output lengths of $\calG$ to those required by the sampler.

Let $D_{n,\lambda,T}$ be the distribution of $\ket{\psi_{F_k}}$ for a uniformly random key $k\in\mathcal K_m$. Let $D_{\mathrm{rf}}$ denote the random-function distribution
using the same sampler parameters and target threshold as
$D_{n,\lambda,T}$.

To verify security, fix a polynomial-time classical distinguisher $A$ and polynomially bounded functions $\lambda(n),T(n)$. We suppress their dependence on $n$ below. Since $\lambda(n)$ is polynomially bounded, $\eps=2^{-\lambda-3}\geq2e^{-2^n/8}$ for all sufficiently large $n$, so \Cref{cor:random-function-security-parameters} applies.

\medskip
\noindent\textbf{(2) Runtime of the PRF reduction.}
We next verify that the PRF-replacement reduction runs within the time bound covered by the security guarantee of $\mathcal G$. The input to $A$ has length
$O(n+\lambda+N(T+1))$, so since $A$ runs in polynomial time and the
reconstruction procedure runs in time polynomial in
$N,R,T,$ and $\log(1/\eps)$, their combined running time is polynomial
in $N,n,\lambda,T,$ and $R$.

To simulate an oracle query to the tape function $F_k$, the reduction
queries $G_{m,k}$ on the padded input $0^{m-n}x$ and keeps the first
$R$ output bits. This adds only polynomial overhead in $m$. Hence the
entire PRF distinguisher runs in time
$\poly(N,n,\lambda,T,R,m)$.

Along the fixed polynomially bounded functions $\lambda(n)$ and $T(n)$,
the quantities $R$ and $m$ are polynomially bounded in $n$, while
$N=2^n$. Therefore the running time above is $2^{O(n)}$.

On the other hand, our choice of $m$ gives
$m^a\geq(n+4)^{da}$. Since $da>2$, for all sufficiently large $n$ we
have $2^{O(n)}\leq2^{m^a}$. Thus the reduction is covered by the
subexponential PRF security guarantee in \Cref{def:subexp-prf}.

\medskip
\noindent\textbf{(3) \PRAL security from PRF and random-function security.}
When $G_{m,k}$ is replaced by a uniformly random function
$U_m:\zo^m\to\zo^m$, the padded inputs $0^{m-n}x$ are distinct for
distinct $x\in\zo^n$. Therefore, the values $U_m(0^{m-n}x)$ are
independent and uniform in $\zo^m$. 

Consequently, the induced function
$F(x)
    =
    \operatorname{prefix}_R
    \left(
        U_m(0^{m-n}x)
    \right)$
is a uniformly random function from $\zo^n$ to $\zo^R$. Thus, the
random-function side of the PRF experiment induces exactly the
distribution $D_{\mathrm{rf}}$ analyzed above.

Applying \Cref{thm:prf-replacement} with
$t_{\mathrm{PRF}}=2^{m^a}$ and
$\eps_{\mathrm{PRF}}=2^{-m^a}$ gives
\begin{equation}
\abs{
\Pr_{\psi\gets D_{n,\lambda,T}}
[A(1^n,1^\lambda,\al_T(\psi))=1]
-
\Pr_{\psi\gets D_{\mathrm{rf}}}
[A(1^n,1^\lambda,\al_T(\psi))=1]
}
\leq
2^{-m^a}+\eps.
\end{equation}
Our choice of $m$ also gives $m^a\geq\lambda+3$, so
$2^{-m^a}\leq\eps$. Therefore the PRF-replacement bound is at most
$2\eps$.

By the random-function security bound above (\Cref{cor:random-function-security-parameters}), $\TVD\pbra{\al_T(D_{\mathrm{rf}}),\al_T(\Haar_n)}\leq\eps$.
By the defining property of total variation distance, the acceptance
probability of $A$ on these two distributions therefore differs by at
most $\eps$.

Combining the two bounds with the triangle inequality gives
\begin{equation}
\abs{
\Pr_{\psi\gets D_{n,\lambda,T}}
[A(1^n,1^\lambda,\al_T(\psi))=1]
-
\Pr_{\phi\sim\Haar_n}
[A(1^n,1^\lambda,\al_T(\phi))=1]
}
\leq
3\eps
<
2^{-\lambda},
\end{equation}
where the final inequality uses
$\eps=2^{-\lambda-3}$.

This holds for every polynomial-time classical distinguisher $A$ and
every polynomially bounded pair of functions
$\lambda,T:\NN\to\NN$, for all sufficiently large $n$.
Therefore the ensemble
$\{D_{n,\lambda,T}\}_{n,\lambda,T\in\NN}$
satisfies the \PRAL security condition in \Cref{def:PRAL}.
\end{proof}

 \section{Preparing the states in the \PRAL family in \MH}\label{sec:implementing-PRAL}

In this section, we show that the target states underlying the \PRAL construction can be prepared at a constant level of the magic hierarchy. We first work with an arbitrary function $H:\zo^n\to\zo^R$ and then instantiate $H$ with the pseudorandom functions used in \Cref{thm:PRAL-security}.

Let $N=2^n$. Recall from \Cref{def:target-state-for-H} that, for each $x\in\zo^n$, we interpret $H(x)$ as the random tape of a finite complex Gaussian sampler and set $v_x=\Samp(H(x))$. Writing $v=(v_x)_{x\in\zo^n}$, the target state is $\ket{\psi_H}=\ket{\wh v}=\norm{v}^{-1}\sum_xv_x\ket{x}$ when $\norm{v}\geq\gamma\sqrt N$, and is $\ket{0^n}$ otherwise. 

We prepare $\ket{\wh v}$ in the following two steps. First, we coherently compute the bounded amplitude $v_x$ from $H(x)$. Second, we load these computed amplitudes into the data register and normalize them. The construction follows the Gaussian-sampling and rejection-sampling approach of~\cite{brakerski2020scalable}, but we implement both steps at low magic-hierarchy depth.

We describe how to implement amplitude computation  in~\Cref{ssec:complex-gaussian-sampler} and amplitude loading in~\Cref{ssec:amplitude-loading-in-mh}.

Our procedure for preparing $\ket{\wh v}$ is inspired by the approach in \cite{brakerski2020scalable} for preparing scalable PRS, which consists of multivariate Gaussian sampling and quantum rejection sampling. They show how to implement these two steps in polynomial time, but for our purposes we need more efficient state preparation in the magic hierarchy.

\subsection{Complex Gaussian sampler}
\label{ssec:complex-gaussian-sampler}
Here we construct the finite complex Gaussian sampler from
\Cref{def:complex-gaussian-sampler} and show how to implement it in the magic
hierarchy.

Our goal is to generate the amplitudes of a Haar-random state. Since a normalized vector of independent standard complex Gaussians is Haar random, it suffices to generate amplitudes from sampling standard complex Gaussians. However, there are two issues with directly doing so. First, a Gaussian is continuous, whereas a quantum circuit has to work with a finite number of bits. We will therefore need to discretize the Gaussian. Second, a Gaussian is unbounded, whereas to prepare a state with some desired amplitudes (as in \Cref{ssec:amplitude-loading-in-mh}), we require these amplitudes to be represented on quantum registers, which means the amplitudes need to be bounded. 

Therefore, we construct a finite, bounded sampler whose output distribution approximates a standard complex Gaussian after a known rescaling. The sampler itself is a deterministic map from a finite random input tape to a bounded complex number, and all its randomness is supplied through the input tape.

To construct this sampler, we first make the following observation: for a standard complex Gaussian $G\sim\mc{N}^{\CC}(0,1)$, its phase is uniformly random, while its radius satisfies $\Pr[|G|>r]=e^{-r^2}$. Therefore to approximate $G$ given uniform randomness, we just need to reproduce this radial distribution. To do so, we use a geometric construction based on a Poisson number of uniformly random points in the unit square.

In~\Cref{sssec:ideal-gaussian-sampler}, we describe an ideal continuous version of this construction and show that its rescaled output approximates a standard complex Gaussian. We then replace the continuous randomness with finite-precision randomness and truncate the unbounded Poisson distribution in~\Cref{sssec:finite-gaussian-sampler}, and we show that the resulting finite sampler still approximates a standard complex Gaussian. Finally, in
\Cref{sssec:gaussian-coherent-implementation}, we show how to evaluate this
finite deterministic map coherently in the magic hierarchy.

\subsubsection{Ideal continuous sampler}
\label{sssec:ideal-gaussian-sampler}

We first consider an ideal continuous construction for reproducing the radial distribution of a standard complex Gaussian, which satisfies $\Pr[|G|>r]=e^{-r^2}$. 

We begin by drawing $K\sim\Pois(\mu)$, where $\mu$ is the expected number of points. Conditioned on $K$, we then draw $K$ points independently and uniformly from the unit square $[0,1)^2$. At this stage, $K$ can be arbitrarily large and the coordinates of the points are sampled continuously. We will introduce the finite-precision and bounded version later.

To see why this construction works, we give the following geometric intuition. 
Imagine expanding a square from the origin until it first reaches
one of the sampled points. The required side length is
$\min_i\max\{U_{i,0},U_{i,1}\}$. With a Poisson number of points,
the probability that $[0,t]^2$ contains no point is
$e^{-\mu t^2}$, matching the Gaussian radial tail after rescaling.
The definition below sends the exceptional case of no sampled
points to zero.

Below we formalize this sampler construction and prove that it approximates the standard complex Gaussian. We also account for the case when the Gaussian samples are greater than the truncation threshold and bound the error from this.

\begin{definition}[Ideal Poisson sampler]
\label{def:ideal-poisson-sampler}
For $\mu\geq1$, the ideal Poisson sampler
$\Samp_{\mathrm{id}}$ is defined as follows.
\begin{enumerate}
    \item Draw $K\sim\Pois(\mu)$.
    \item Conditioned on $K$, draw $K$ independent points
    $(U_{i,0},U_{i,1})$ uniformly from $[0,1)^2$.
    \item If $K=0$, set $R_{\mathrm{id}}=0$. Otherwise, set
    \begin{equation}
        R_{\mathrm{id}}
        =
        \min_{i\in[K]}\max\{U_{i,0},U_{i,1}\}.
    \end{equation}
    \item Draw $\Theta_{\mathrm{id}}$ uniformly from $[0,1)$,
    independently of all previous randomness.
    \item Output
    $V_{\mathrm{id}}
    =
    R_{\mathrm{id}}e^{2\pi i\Theta_{\mathrm{id}}}$.
\end{enumerate}
\end{definition}
Next, we show that the ideal Poisson sampler has the same distribution as a standard complex Gaussian scaled down by $1/\sqrt{\mu}$, except that samples with magnitude at least $\sqrt{\mu}$ are replaced by zero.
\begin{lemma}[Ideal sampler as a truncated Gaussian]
\label{lem:ideal-poisson-sampler}
Let $V_{\mathrm{id}}$ be the output of
$\Samp_{\mathrm{id}}$, and let
$G\sim\mc{N}^{\CC}(0,1)$. Define the truncation and rescaling map
\begin{equation}
T_\mu(x)
=
\begin{cases}
x/\sqrt{\mu}, & \abs{x}<\sqrt{\mu},\\
0, & \abs{x}\geq\sqrt{\mu}.
\end{cases}
\end{equation}
Then $V_{\mathrm{id}}$ has the same distribution as $T_\mu(G)$.
In particular, $V_{\mathrm{id}}$ and $G$ can be coupled by first
sampling $G$ and setting $V_{\mathrm{id}}=T_\mu(G)$. Under this
coupling, $\sqrt{\mu}V_{\mathrm{id}}=G$ except with probability
$\eta_{\mathrm{GaussTrunc}}=e^{-\mu}$.
\end{lemma}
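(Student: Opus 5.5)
Since both $V_{\mathrm{id}}$ and $T_\mu(G)$ are built from a radius and an independent phase, I would prove equality in distribution by comparing these two pieces separately. Write $G=\abs{G}e^{2\pi i\Phi}$. For a standard complex Gaussian with $\Expect\abs{G}^2=1$, the density is $\pi^{-1}e^{-\abs{z}^2}$. This is rotation invariant, so $\Phi$ is uniform on $[0,1)$ and independent of $\abs{G}$, and $\Pr[\abs{G}>r]=e^{-r^2}$. Hence $T_\mu(G)=\rho\, e^{2\pi i\Phi}$, where $\rho=\abs{G}/\sqrt\mu$ on $\{\abs{G}<\sqrt\mu\}$ and $\rho=0$ otherwise. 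Here $\rho$ is a function of $\abs{G}$ alone, so it is independent of $\Phi$. On the sampler side, $\Theta_{\mathrm{id}}$ is uniform and independent of $R_{\mathrm{id}}$ by construction. It therefore suffices to show $R_{\mathrm{id}}\overset{d}{=}\rho$. Once that holds, the joint laws of (radius, phase) agree, and so do the laws of the products.

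Both radii take values in $[0,1)$, so I would compare their tail functions $\Pr[\,\cdot>t]$ for $t\in[0,1)$. For the Gaussian side, $\rho>t$ means $t\sqrt\mu<\abs{G}<\sqrt\mu$, which has probability $e^{-\mu t^2}-e^{-\mu}$. For the sampler side, I would condition on $K$. If $K\geq1$, then $R_{\mathrm{id}}>t$ exactly when no point lies in $[0,t]^2$, which happens with probability $(1-t^2)^K$. If $K=0$, then $R_{\mathrm{id}}=0\le t$. This gives
\begin{equation*}
\Pr[R_{\mathrm{id}}>t]=\sum_{k\geq1}e^{-\mu}\frac{\mu^k}{k!}(1-t^2)^k=e^{-\mu}\pbra{e^{\mu(1-t^2)}-1}=e^{-\mu t^2}-e^{-\mu},
\end{equation*}
which matches the Gaussian side. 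The two sides also agree at the atom $0$, since $\Pr[R_{\mathrm{id}}=0]=\Pr[K=0]=e^{-\mu}$ and $\Pr[\rho=0]=\Pr[\abs{G}\ge\sqrt\mu]=e^{-\mu}$ (points have $\max>0$ almost surely). So the distribution functions coincide on $[0,1)$, and the two radii have the same law.

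Given this equality in distribution, the coupling is immediate. Sample $G$ and set $V_{\mathrm{id}}:=T_\mu(G)$; this has the correct marginal. Under this coupling $\sqrt\mu V_{\mathrm{id}}=G$ whenever $\abs{G}<\sqrt\mu$, so the failure probability is $\Pr[\abs{G}\ge\sqrt\mu]=e^{-\mu}$. The only step needing care is the Poisson computation, and specifically handling the $K=0$ term so that the subtracted $e^{-\mu}$ matches exactly the mass that $T_\mu$ sends to zero. That $K=0$ term is where I would expect the main obstacle to be, though it is a short calculation once written out.
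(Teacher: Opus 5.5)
Your proof is correct and follows essentially the same route as the paper: you compute $\Pr[R_{\mathrm{id}}>t]=e^{-\mu t^2}-e^{-\mu}$ by conditioning on the Poisson count, match it to the tail of the truncated, rescaled Gaussian radius, and combine this with the independent uniform phase before writing down the coupling. Your separate check of the atom at $0$ is redundant, since it is the $t=0$ case of the tail identity, but it does no harm.
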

\begin{proof}
We first determine the distribution of the magnitude
$R_{\mathrm{id}}$. Fix $t\in[0,1)$. For a single sampled point
$(U_{i,0},U_{i,1})$, the condition
$\max\{U_{i,0},U_{i,1}\}>t$ holds exactly when the point lies outside
$[0,t]^2$. Since this square has area $t^2$,
$\Pr[\max\{U_{i,0},U_{i,1}\}>t]=1-t^2$.

Conditioned on $K=k\geq1$, the event $R_{\mathrm{id}}>t$ occurs
exactly when all $k$ independent points lie outside $[0,t]^2$. Hence $\Pr[R_{\mathrm{id}}>t\mid K=k]=(1-t^2)^k$. Since $K\sim\Pois(\mu)$, by definition
$\Pr[K=k]=e^{-\mu}\mu^k/k!$ for $k=0,1,2,\ldots$. Thus, averaging over $K\sim\Pois(\mu)$ and excluding $K=0$, for which
$R_{\mathrm{id}}=0$, gives
\begin{equation}
\Pr[R_{\mathrm{id}}>t]
=
\sum_{k=1}^{\infty} e^{-\mu}\frac{\mu^k}{k!}(1-t^2)^k
=
e^{-\mu}\sum_{k=1}^{\infty}\frac{\bigl(\mu(1-t^2)\bigr)^k}{k!}
=
e^{-\mu}\left(e^{\mu(1-t^2)}-1\right)
=
e^{-\mu t^2}-e^{-\mu}.
\end{equation}
We now compare the rescaled sampler magnitude with the magnitude of a standard
complex Gaussian. For $G\sim\mc{N}^{\CC}(0,1)$,
$\Pr[\abs{G}>r]=e^{-r^2}$. To match this with the expression
$e^{-\mu t^2}$ above, fix $0\leq r<\sqrt{\mu}$ and set
$t=r/\sqrt{\mu}$. Substituting gives $\Pr[\sqrt{\mu}R_{\mathrm{id}}>r]
=
e^{-r^2}-e^{-\mu}$.
Since $\Pr[\abs{G}\geq\sqrt{\mu}]=e^{-\mu}$, we also have $\Pr[r<\abs{G}<\sqrt{\mu}]
=
e^{-r^2}-e^{-\mu}$.
Thus $\sqrt{\mu}R_{\mathrm{id}}$ matches the Gaussian radial distribution
up to radius $\sqrt{\mu}$, while the remaining Gaussian tail of probability
$e^{-\mu}$ is mapped to zero.

The phase $\Theta_{\mathrm{id}}$ is independent and uniform on
$[0,1)$, as is the phase of a standard complex Gaussian. Combining the
magnitude and phase therefore shows that
$V_{\mathrm{id}}$ has the same distribution as $T_\mu(G)$.

We can now construct the claimed coupling explicitly. First sample
$G\sim\mc{N}^{\CC}(0,1)$ and set
$V_{\mathrm{id}}=T_\mu(G)$. By the distributional identity above,
$V_{\mathrm{id}}$ in this joint experiment has exactly the output
distribution of $\Samp_{\mathrm{id}}$, so this is a valid
coupling. Under this coupling,
$\sqrt{\mu}V_{\mathrm{id}}=G$ whenever
$\abs{G}<\sqrt{\mu}$. The only failure event is therefore
$\abs{G}\geq\sqrt{\mu}$, which has probability
$\eta_{\mathrm{GaussTrunc}}=e^{-\mu}$.
\end{proof}
Thus, after rescaling, the ideal Poisson sampler exactly reproduces a standard complex Gaussian below the truncation threshold $\sqrt{\mu}$. The only difference is the Gaussian tail beyond this threshold, which is replaced by zero and occurs with probability $\eta_{\mathrm{GaussTrunc}}=e^{-\mu}$.

\subsubsection{Finite-precision sampler}
\label{sssec:finite-gaussian-sampler}

In this section, we describe a finite version of the ideal Poisson sampler, which is what we will actually implement to prepare the states in the \PRAL family. 

There are two obstacles to implementing the ideal construction directly. First, the Poisson count $K$ is unbounded, so the ideal sampler may require arbitrarily many sampled points. Second, the point coordinates and phase are continuous random variables in $[0,1)$ and therefore cannot be represented exactly using a finite number of bits. We therefore cap the number of candidate pairs at $K_{\max}$ and represent each uniform random variable using $q$ bits of precision. 

To distinguish the ideal Poisson count from the finite one, throughout this
section we write $K_{\mathrm{id}}$ for the random variable denoted by
$K$ in~\Cref{sssec:ideal-gaussian-sampler}, and $K_{\mathrm{fin}}$ for the
corresponding count generated by the finite sampler.

The finite sampler receives a uniformly random tape
$\Omega\sim\Unif(\{0,1\}^R)$. The tape contains all of the randomness needed
to reproduce the ideal sampling procedure with finite precision. It contains a $q$-bit integer $Z$ used to determine the finite Poisson count $K_{\mathrm{fin}}$, two $q$-bit binary fractions $U^{\mathrm{fin}}_{i,0},U^{\mathrm{fin}}_{i,1}$ for each $i=1,\ldots,K_{\max}$ giving the coordinates of the $K_{\max}$ candidate points, and a $q$-bit binary fraction $\Theta_{\mathrm{fin}}$ used for the phase. Thus we may write
\begin{equation}
    \Omega=(Z,
U^{\mathrm{fin}}_{1,0},U^{\mathrm{fin}}_{1,1},
\ldots,
U^{\mathrm{fin}}_{K_{\max},0},U^{\mathrm{fin}}_{K_{\max},1},
\Theta_{\mathrm{fin}}),
\end{equation}
where the total tape length is $R=(2K_{\max}+2)q$. Here $Z\in\{0,\ldots,2^q-1\}$, while each phase or coordinate value is
obtained by interpreting its corresponding $q$-bit string as an integer and
dividing by $2^q$, giving a binary fraction in $[0,1)$.

We next describe how to use the uniform integer $Z$ to simulate the Poisson count. For each $i=1,\ldots,K_{\max}$, we want the comparison with
$Z$ to reproduce the event that the ideal Poisson draw contains at least $i$
points, namely $K_{\mathrm{id}}\geq i$. Let $P_i=\Pr[K_{\mathrm{id}}<i]$. Choose an integer threshold
$T_i$ such that $T_i/2^q$ approximates $P_i$. More precisely, for fixed parameters $\mu\geq1$, $K_{\max}\in\NN$, and $q\in\NN$, choose $T_i\in\{0,\ldots,2^q\}$ so that, for every
$i\in[K_{\max}]$,
\begin{equation}
\abs{
\frac{T_i}{2^q}
-
P_i
}
\leq 2^{-q}.
\label{eq:poisson-threshold-accuracy}
\end{equation}
The thresholds depend only on the sampler parameters and can be precomputed
classically, which we describe in the next section. Since $Z$ is uniform on $\{0,\ldots,2^q-1\}$, we have
$\Pr[Z\geq T_i]=1-T_i/2^q$. On the other hand,
$\Pr[K_{\mathrm{id}}\geq i]=1-P_i$. Therefore,
\eqref{eq:poisson-threshold-accuracy} implies that the comparison
$Z\geq T_i$ reproduces the ideal event $K_{\mathrm{id}}\geq i$ up to
probability error at most $2^{-q}$. The probabilities $P_i$ are nondecreasing in $i$, so the thresholds $T_i$
can be chosen nondecreasing as well. Therefore, for any fixed value of $Z$,
the comparisons $Z\geq T_i$ consist of some number of initial true
comparisons followed by only false comparisons. We set $K_{\mathrm{fin}}$
to be the number of these true comparisons. Thus $K_{\mathrm{fin}}$ is the finite sampler's approximation to the ideal
Poisson count and satisfies
$0\leq K_{\mathrm{fin}}\leq K_{\max}$.

We now formalize the finite sampler and then show that it approximates the
ideal sampler from~\Cref{sssec:ideal-gaussian-sampler}, and hence, after
rescaling, a standard complex Gaussian in~\Cref{lem:finite-complex-gaussian-sampler} and~\Cref{cor:finite-sampler-parameter-choice}.

\begin{definition}[Finite Poisson sampler]
\label{def:finite-poisson-sampler}
Fix $\mu\geq1$, $K_{\max}\in\NN$, and $q\in\NN$, and let
$T_1,\ldots,T_{K_{\max}}$ be the corresponding thresholds satisfying
\eqref{eq:poisson-threshold-accuracy}. Let $R=(2K_{\max}+2)q$, and sample
the input tape uniformly as $\Omega\sim\Unif(\{0,1\}^R)$.

Conditioned on the sampled tape $\Omega$, the finite Poisson sampler
$\Samp_{\mathrm{fin}}$ is deterministic and proceeds as follows.
\begin{enumerate}
    \item Set $Z$ to be the integer represented by the
    first $q$ bits of $\Omega$. Set $K_{\mathrm{fin}}
        =
        \abs{\cbra{i\in[K_{\max}] : Z\geq T_i}}$.

    \item For each $i=1,\ldots,K_{\max}$, set
    $(U^{\mathrm{fin}}_{i,0},U^{\mathrm{fin}}_{i,1})\in[0,1)^2$
    to be the next pair of $q$-bit binary fractions in $\Omega$.

    \item If $K_{\mathrm{fin}}=0$, set $R_{\mathrm{fin}}=0$. Otherwise, set
    \begin{equation}
        R_{\mathrm{fin}}
        =
        \min_{1\leq i\leq K_{\mathrm{fin}}}
        \max\{U^{\mathrm{fin}}_{i,0},U^{\mathrm{fin}}_{i,1}\}.
    \end{equation}

    \item Set $\Theta_{\mathrm{fin}}\in[0,1)$ to be the binary fraction
    represented by the final $q$ bits of $\Omega$.

    \item Output
    $\Samp_{\mathrm{fin}}(\Omega)
    =(R_{\mathrm{fin}},\Theta_{\mathrm{fin}})$,
    representing the complex number
    $V_{\mathrm{fin}}
    =R_{\mathrm{fin}}e^{2\pi i\Theta_{\mathrm{fin}}}$.
\end{enumerate}
\end{definition}

In the state-preparation construction,
the tape associated with each basis string $x$ is $H(x)$. We therefore write
$\Samp_{\mathrm{fin}}(H(x))=(r_x,\theta_x)$ and
$v_x=r_xe^{2\pi i\theta_x}$. 

We now quantify how closely the finite sampler approximates a standard
complex Gaussian after rescaling. There are two effects to account for. Rounding the candidate coordinates and phase to $q$ bits introduces a small
numerical error. Separately, the coupling guarantee can fail if the Gaussian sample
lies in the truncated tail, if the ideal Poisson count exceeds $K_{\max}$,
or if the finite threshold comparisons produce a different count from the
ideal Poisson count. The first of these, from the Gaussian truncation, was already present in the ideal sampler. The following lemma bounds both the numerical error and the total
coupling-failure probability.

\begin{lemma}[Finite sampler approximation]
\label{lem:finite-complex-gaussian-sampler}
Let $V_{\mathrm{fin}}$ be the output of the finite Poisson sampler
$\Samp_{\mathrm{fin}}$, and let $G\sim\mc{N}^{\CC}(0,1)$. The rescaled
finite-sampler output $\sqrt{\mu}V_{\mathrm{fin}}$ approximates the
distribution of $G$ in the following sense: $V_{\mathrm{fin}}$ and $G$
can be coupled so that
\begin{equation}
\abs{\sqrt{\mu}V_{\mathrm{fin}}-G}
\leq
\varepsilon_{\mathrm{approx}}
\end{equation}
except with probability at most $\eta_{\mathrm{fail}}$, where $\varepsilon_{\mathrm{approx}}
=
(1+2\pi)\sqrt{\mu}\,2^{-q}$ and $\eta_{\mathrm{fail}}
=
\eta_{\mathrm{GaussTrunc}}
+
\eta_{\mathrm{PoisTrunc}}
+
\eta_{\mathrm{Thresh}}$. Here
$\eta_{\mathrm{GaussTrunc}}=e^{-\mu}$,
$\eta_{\mathrm{PoisTrunc}}
=\Pr[K_{\mathrm{id}}>K_{\max}]$, and
$\eta_{\mathrm{Thresh}}=K_{\max}2^{-q}$.
\end{lemma}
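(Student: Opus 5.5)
We build a single probability space containing the ideal sampler, the finite sampler and the Gaussian, and then chain two couplings: finite to ideal, and ideal to Gaussian via \Cref{lem:ideal-poisson-sampler}.

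\textbf{Coupling the finite and ideal samplers.} We generate the finite tape from continuous randomness. Draw $U_{i,b}$ i.i.d.\ uniform on $[0,1)$ for all $i\geq1$ and $b\in\{0,1\}$, draw $\Theta_{\mathrm{id}}$ uniform on $[0,1)$, and draw $Y$ uniform on $[0,1)$. Set $K_{\mathrm{id}}=\abs{\{i\geq1:Y\geq P_i\}}$. Since $P_i=\Pr[K_{\mathrm{id}}<i]$ is nondecreasing, this gives exactly $\Pr[K_{\mathrm{id}}\geq i]=1-P_i$, so $K_{\mathrm{id}}\sim\Pois(\mu)$. The finite tape is then obtained by truncating each uniform to $q$ bits: $Z=\floor{2^qY}$, $U^{\mathrm{fin}}_{i,b}=2^{-q}\floor{2^qU_{i,b}}$ for $i\leq K_{\max}$, and $\Theta_{\mathrm{fin}}=2^{-q}\floor{2^q\Theta_{\mathrm{id}}}$. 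These truncations are independent and uniform on $q$-bit strings, so the tape is uniform on $\zo^R$. The ideal sampler, run on the continuous variables, has exactly the law in \Cref{def:ideal-poisson-sampler}.

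\textbf{Good events.} We then show that $K_{\mathrm{fin}}=K_{\mathrm{id}}$ except on two bad events. The first is $\{K_{\mathrm{id}}>K_{\max}\}$, which has probability $\eta_{\mathrm{PoisTrunc}}$. The second is the event that, for some $i\leq K_{\max}$, the comparison $Z\geq T_i$ disagrees with $Y\geq P_i$. Since $Z\geq T_i\iff Y\geq T_i/2^q$, disagreement requires $Y$ to lie between $P_i$ and $T_i/2^q$. By \eqref{eq:poisson-threshold-accuracy} this has probability at most $2^{-q}$, and a union bound over $i$ gives $\eta_{\mathrm{Thresh}}=K_{\max}2^{-q}$. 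When neither bad event occurs, all comparisons agree for $i\leq K_{\max}$. Moreover $K_{\mathrm{id}}\leq K_{\max}$, so the finite count equals $K_{\mathrm{id}}$.

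On the good event, both samplers take the min-max over the same indices, so either both radii are zero or
\[
\abs{R_{\mathrm{fin}}-R_{\mathrm{id}}}\leq 2^{-q}.
\]
This holds because coordinatewise truncation by less than $2^{-q}$ moves each max by less than $2^{-q}$, and the min is $1$-Lipschitz in the sup norm. The phase also satisfies $\abs{\Theta_{\mathrm{fin}}-\Theta_{\mathrm{id}}}<2^{-q}$. Since $R\leq1$, we get
\[
\abs{V_{\mathrm{fin}}-V_{\mathrm{id}}}\leq\abs{R_{\mathrm{fin}}-R_{\mathrm{id}}}+R_{\mathrm{id}}\,\abs{e^{2\pi i\Theta_{\mathrm{fin}}}-e^{2\pi i\Theta_{\mathrm{id}}}}\leq(1+2\pi)2^{-q}.
\]

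\textbf{Attaching the Gaussian.} Finally we couple $G$ to $V_{\mathrm{id}}$. By \Cref{lem:ideal-poisson-sampler}, $V_{\mathrm{id}}\overset{d}{=}T_\mu(G)$, so conditionally on $V_{\mathrm{id}}$ we may draw $G$ from the conditional law of $G$ given $T_\mu(G)=V_{\mathrm{id}}$ (the gluing lemma). Then $\sqrt{\mu}V_{\mathrm{id}}=G$ except with probability $e^{-\mu}$. Multiplying the previous bound by $\sqrt\mu$ and taking a union bound over the three bad events gives the claimed $\varepsilon_{\mathrm{approx}}$ and $\eta_{\mathrm{fail}}$.

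The main obstacle is the first step: choosing the coupling so that the finite tape is exactly uniform while the ideal variables keep exactly their Poisson/uniform laws. Generating $K_{\mathrm{id}}$ by inverse-CDF from the same $Y$ that yields $Z$ handles this. The remaining steps are routine Lipschitz estimates.
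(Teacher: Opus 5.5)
Your proposal is correct and follows essentially the same route as the paper. It uses the same inverse-CDF coupling of $K_{\mathrm{id}}$ and $Z=\lfloor 2^qY\rfloor$ from a single uniform, the same coordinatewise lower rounding of the continuous uniforms, the same union bound over Poisson truncation, threshold disagreement and Gaussian truncation, and the same $(1+2\pi)2^{-q}$ triangle-inequality estimate.

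The only minor differences are these. You bound the radius error using the $1$-Lipschitzness of $\max$ and $\min$, whereas the paper observes that monotone rounding commutes with them, so that $R_{\mathrm{fin}}=2^{-q}\lfloor 2^qR_{\mathrm{id}}\rfloor$. You also attach $G$ to the independently generated $V_{\mathrm{id}}$ explicitly by gluing, a step the paper leaves implicit when it invokes \Cref{lem:ideal-poisson-sampler}.
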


\begin{proof}
We first show that the finite sampler uses the same number of candidate
points as the ideal sampler except with small probability. There are two
possible sources of disagreement. The ideal Poisson count
$K_{\mathrm{id}}$ may exceed the maximum $K_{\max}$ available to the finite
sampler, or the finite thresholds may cause $K_{\mathrm{fin}}$ to differ
from $K_{\mathrm{id}}$ even when $K_{\mathrm{id}}\leq K_{\max}$.

To bound the latter probability, that the finite thresholds cause
$K_{\mathrm{fin}}$ to differ from $K_{\mathrm{id}}$, we compare the ideal
and finite counts using the same underlying randomness. We first give an
equivalent way to generate the ideal Poisson count from a single uniform
random variable. Draw
$U_{\mathrm{count}}\sim\Unif[0,1)$. Recall that
$P_i=\Pr[K_{\mathrm{id}}<i]$, and set $P_0=0$. Define
$K_{\mathrm{id}}=k$ whenever
$P_k\leq U_{\mathrm{count}}<P_{k+1}$. Since $U_{\mathrm{count}}$ is
uniform, the probability of obtaining $K_{\mathrm{id}}=k$ is the length
of this interval,
$P_{k+1}-P_k=\Pr[K_{\mathrm{id}}=k]$. Thus this procedure indeed generates
$K_{\mathrm{id}}\sim\Pois(\mu)$. In particular,
$K_{\mathrm{id}}\geq i$ if and only if
$U_{\mathrm{count}}\geq P_i$.

Recall that in the finite sampler, $Z$ is required only to be uniform on
$\{0,\ldots,2^q-1\}$. Since we are already using
$U_{\mathrm{count}}\sim\Unif[0,1)$ to generate the ideal Poisson count, we
use the same random variable to generate $Z$ by setting
$Z=\floor{2^qU_{\mathrm{count}}}$. This has exactly the required uniform
distribution, because each value $j\in\{0,\ldots,2^q-1\}$ corresponds to an
interval of length $2^{-q}$ in $[0,1)$. Moreover, $Z\geq T_i$ if and only if
$U_{\mathrm{count}}\geq T_i/2^q$.

Thus the ideal and finite samplers make their $i$th count decision by
comparing the same uniform random variable against the two thresholds
$P_i$ and $T_i/2^q$. By~\eqref{eq:poisson-threshold-accuracy}, these
thresholds differ by at most $2^{-q}$. Hence the two comparisons disagree
only when $U_{\mathrm{count}}$ lies between them, which occurs with
probability at most $2^{-q}$. Taking a union bound over
$i=1,\ldots,K_{\max}$, the probability that any threshold comparison
disagrees is at most
$\eta_{\mathrm{Thresh}}=K_{\max}2^{-q}$.

If no threshold comparison disagrees and
$K_{\mathrm{id}}\leq K_{\max}$, then
$K_{\mathrm{fin}}=K_{\mathrm{id}}$. The other possible disagreement is the
Poisson truncation event $K_{\mathrm{id}}>K_{\max}$, which occurs with
probability
$\eta_{\mathrm{PoisTrunc}}
=\Pr[K_{\mathrm{id}}>K_{\max}]=\sum_{k=K_{\max}+1}^{\infty}
e^{-\mu}\frac{\mu^k}{k!}$.

Assume now that $K_{\mathrm{fin}}=K_{\mathrm{id}}$. We next bound the
numerical error introduced by replacing the ideal continuous coordinates and
phase with their $q$-bit approximations. This will give a bound on
$\abs{V_{\mathrm{fin}}-V_{\mathrm{id}}}$.

To compare the two outputs directly, we generate their coordinates and phases
from the same underlying continuous randomness. For this purpose, we may equivalently realize the ideal sampler by drawing an
independent sequence of uniform coordinates $U_{i,a}\sim\Unif[0,1)$ and using
only the first $K_{\mathrm{id}}$ pairs. On the event
$K_{\mathrm{id}}\leq K_{\max}$, only the first $K_{\max}$ pairs are
relevant. Since we are assuming $K_{\mathrm{fin}}=K_{\mathrm{id}}$, the two
samplers use the same first $K_{\mathrm{id}}$ candidate pairs. We can
therefore compare their coordinates pairwise.

Take the variables $U_{\mathrm{count}}$, $\Theta_{\mathrm{id}}$, and
$U_{i,a}$ for all $i,a$ to be mutually independent. For each
$i=1,\ldots,K_{\max}$ and $a\in\{0,1\}$, we obtain the finite coordinate
by rounding the corresponding ideal coordinate down to $q$ bits: $U^{\mathrm{fin}}_{i,a}
=
2^{-q}\floor{2^qU_{i,a}}$. Similarly, we set
$\Theta_{\mathrm{fin}}
=2^{-q}\floor{2^q\Theta_{\mathrm{id}}}$.
Since rounding a uniform random variable in $[0,1)$ in this way produces a
uniform $q$-bit binary fraction, these variables have exactly the
distribution required by the finite sampler. By construction,
$\abs{U^{\mathrm{fin}}_{i,a}-U_{i,a}}\leq2^{-q}$ and
$\abs{\Theta_{\mathrm{fin}}-\Theta_{\mathrm{id}}}\leq2^{-q}$.

We now propagate these rounding errors to the sampler output, and we first bound the resulting error in the magnitude. The finite coordinates are obtained by applying the same lower $q$-bit
rounding operation to every ideal coordinate. This rounding operation is
monotone, so it preserves the ordering of the coordinates and therefore
commutes with the maxima and minimum used to construct the magnitude.
Hence $R_{\mathrm{fin}}$ is exactly the lower $q$-bit approximation of
$R_{\mathrm{id}}$, $R_{\mathrm{fin}}
=
2^{-q}\floor{2^qR_{\mathrm{id}}}$.
Therefore, $\abs{R_{\mathrm{fin}}-R_{\mathrm{id}}}
\leq2^{-q}$,

It remains to bound the error in the phase factor. Since
$\abs{\Theta_{\mathrm{fin}}-\Theta_{\mathrm{id}}}\leq2^{-q}$, and
$\abs{e^{ia}-e^{ib}}\leq\abs{a-b}$ for real $a,b$, we have $\abs{
e^{2\pi i\Theta_{\mathrm{fin}}}
-
e^{2\pi i\Theta_{\mathrm{id}}}
}
\leq
2\pi\abs{\Theta_{\mathrm{fin}}-\Theta_{\mathrm{id}}}
\leq
2\pi2^{-q}$.
Since
$V_{\mathrm{fin}}
=R_{\mathrm{fin}}e^{2\pi i\Theta_{\mathrm{fin}}}$,
$V_{\mathrm{id}}
=R_{\mathrm{id}}e^{2\pi i\Theta_{\mathrm{id}}}$,
and $R_{\mathrm{id}}<1$, the triangle inequality gives
\begin{equation}
\abs{V_{\mathrm{fin}}-V_{\mathrm{id}}}
\leq
\abs{R_{\mathrm{fin}}-R_{\mathrm{id}}}
+
R_{\mathrm{id}}
\abs{
e^{2\pi i\Theta_{\mathrm{fin}}}
-
e^{2\pi i\Theta_{\mathrm{id}}}
} \leq
(1+2\pi)2^{-q}.
\end{equation}

Finally, by~\Cref{lem:ideal-poisson-sampler}, the rescaled ideal output
$\sqrt{\mu}V_{\mathrm{id}}$ agrees exactly with a standard complex Gaussian
$G\sim\mc{N}^{\CC}(0,1)$ except on the Gaussian truncation event, which has
probability $\eta_{\mathrm{GaussTrunc}}=e^{-\mu}$. Combining this guarantee
with the finite-to-ideal comparison above, whenever none of the three failure
events occurs, $\abs{\sqrt{\mu}V_{\mathrm{fin}}-G} =
\sqrt{\mu}\abs{V_{\mathrm{fin}}-V_{\mathrm{id}}} \leq
(1+2\pi)\sqrt{\mu}\,2^{-q} =
\varepsilon_{\mathrm{approx}}$.
By a union bound, the probability that at least one of the three failure
events occurs is at most $\eta_{\mathrm{fail}}
=
\eta_{\mathrm{GaussTrunc}}
+
\eta_{\mathrm{PoisTrunc}}
+
\eta_{\mathrm{Thresh}}$.
Therefore, the coupling satisfies
$\abs{\sqrt{\mu}V_{\mathrm{fin}}-G}\leq\varepsilon_{\mathrm{approx}}$
except with probability at most $\eta_{\mathrm{fail}}$.
\end{proof}

\begin{corollary}[Parameter choice]
\label{cor:finite-sampler-parameter-choice}
For any target approximation error
$\varepsilon_{\mathrm{approx}}^\star>0$ and target failure probability
$\eta_{\mathrm{fail}}^\star\in(0,1)$, the parameters
$\mu$, $K_{\max}$, and $q$ can be chosen so that $\varepsilon_{\mathrm{approx}}
\leq
\varepsilon_{\mathrm{approx}}^\star$ and $\eta_{\mathrm{fail}}
\leq
\eta_{\mathrm{fail}}^\star$. Moreover, as
$\varepsilon_{\mathrm{approx}}^\star,\eta_{\mathrm{fail}}^\star\to0$,
the parameters can be chosen so that $\mu,K_{\max}
=
O\left(
\log(1/\eta_{\mathrm{fail}}^\star)
\right)$
and $q
=
O\left(
\log(1/\varepsilon_{\mathrm{approx}}^\star)
+
\log(1/\eta_{\mathrm{fail}}^\star)
\right)$.
\end{corollary}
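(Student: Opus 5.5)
Plan: use \Cref{lem:finite-complex-gaussian-sampler} and choose the three parameters in order, giving each of the three failure terms a third of the budget $\eta_{\mathrm{fail}}^\star$. First fix $\mu$. Set $\mu=\max\{1,\lceil\ln(3/\eta_{\mathrm{fail}}^\star)\rceil\}$, so that $\eta_{\mathrm{GaussTrunc}}=e^{-\mu}\leq\eta_{\mathrm{fail}}^\star/3$ and $\mu=O(\log(1/\eta_{\mathrm{fail}}^\star))$.

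Next fix $K_{\max}$ so that the Poisson tail satisfies $\Pr[K_{\mathrm{id}}>K_{\max}]\leq\eta_{\mathrm{fail}}^\star/3$. I would use the standard Chernoff bound for Poisson variables, $\Pr[K_{\mathrm{id}}\geq k]\leq e^{-\mu}(e\mu/k)^k$ for $k>\mu$. Taking $K_{\max}=\lceil e^2\mu\rceil$ gives $(e\mu/k)^k\leq e^{-k}\leq e^{-\mu}$. Then $\eta_{\mathrm{PoisTrunc}}\leq e^{-2\mu}\leq\eta_{\mathrm{fail}}^\star/3$, and $K_{\max}=O(\mu)=O(\log(1/\eta_{\mathrm{fail}}^\star))$.

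Finally fix $q$. Two constraints must both hold. The threshold term requires $K_{\max}2^{-q}\leq\eta_{\mathrm{fail}}^\star/3$, that is, $q\geq\log_2(3K_{\max}/\eta_{\mathrm{fail}}^\star)=O(\log(1/\eta_{\mathrm{fail}}^\star))$, using $\log K_{\max}=O(\log\log(1/\eta_{\mathrm{fail}}^\star))$. The approximation error requires $(1+2\pi)\sqrt{\mu}\,2^{-q}\leq\varepsilon_{\mathrm{approx}}^\star$, that is, $q\geq\log_2((1+2\pi)\sqrt{\mu}/\varepsilon_{\mathrm{approx}}^\star)=O(\log(1/\varepsilon_{\mathrm{approx}}^\star)+\log\log(1/\eta_{\mathrm{fail}}^\star))$. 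Take $q$ to be the maximum of these two bounds, rounded up. Summing the three terms gives $\eta_{\mathrm{fail}}\leq\eta_{\mathrm{fail}}^\star$, and $q$ has the claimed asymptotic form.

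The only step that is not pure bookkeeping is the Poisson tail bound. Since $\mu\geq1$ grows, I expect no difficulty there. One must just check that $k>\mu$ holds at $k=K_{\max}+1$, which is immediate from the choice $K_{\max}=\lceil e^2\mu\rceil$. The existence of thresholds $T_i$ satisfying \eqref{eq:poisson-threshold-accuracy} is assumed as part of the sampler definition, so nothing further is needed.
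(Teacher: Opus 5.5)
Your proposal is correct and follows the paper's proof: the failure budget is split into thirds in the same way, and your choices of $\mu$ and $q$ (the larger of the two rounded-up bounds) are the paper's. The only difference is the Poisson tail step. The paper takes $K_{\max}=4\mu$ and applies Markov's inequality to $2^{K_{\mathrm{id}}}$, using $\mathbb{E}[2^{K_{\mathrm{id}}}]=e^{\mu}$, which keeps that step self-contained; you cite the optimized Poisson Chernoff bound with $K_{\max}=\lceil e^2\mu\rceil$. Both choices give $K_{\max}=O(\mu)$.
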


\begin{proof}
Fix a target approximation error
$\varepsilon_{\mathrm{approx}}^\star>0$ and a target failure probability
$\eta_{\mathrm{fail}}^\star\in(0,1)$. By
\Cref{lem:finite-complex-gaussian-sampler}, it is enough to choose the
parameters so that
$\varepsilon_{\mathrm{approx}}
\leq\varepsilon_{\mathrm{approx}}^\star$
and each of the three contributions to $\eta_{\mathrm{fail}}$ is at most
$\eta_{\mathrm{fail}}^\star/3$.

Choose
\begin{equation}
\mu
=
\ceil{\ln\frac{3}{\eta_{\mathrm{fail}}^\star}},
\qquad
K_{\max}=4\mu,
\qquad
q
=
\ceil{
\log_2
\max\left\{
\frac{3K_{\max}}{\eta_{\mathrm{fail}}^\star},
\frac{(1+2\pi)\sqrt{\mu}}
{\varepsilon_{\mathrm{approx}}^\star}
\right\}
}.
\end{equation}
We verify that these choices give the desired bounds.

First, the choice of $\mu$ gives
$\eta_{\mathrm{GaussTrunc}}=e^{-\mu}
\leq\eta_{\mathrm{fail}}^\star/3$.

Next, we bound the probability that the ideal Poisson count exceeds
$K_{\max}$. Since $K_{\mathrm{id}}\sim\Pois(\mu)$, $\mathbb{E}[2^{K_{\mathrm{id}}}]
=
e^{-\mu}
\sum_{k=0}^{\infty}
\frac{(2\mu)^k}{k!}
=
e^\mu$. Therefore, by Markov's
inequality, $\eta_{\mathrm{PoisTrunc}}
=
\Pr[K_{\mathrm{id}}>K_{\max}]
\le e^\mu 2^{-K_{\max}}
=
e^\mu 2^{-4\mu}
=
e^{-(4\ln 2-1)\mu}
\le e^{-\mu}
\le \frac{\eta_{\mathrm{fail}}^\star}{3}$,
where the penultimate inequality uses $4\ln 2-1>1$.

It remains to control the threshold error and the numerical approximation
error. The value of $q$ was chosen to satisfy both the threshold-error and
numerical-error requirements simultaneously. Since
$2^q\geq 3K_{\max}/\eta_{\mathrm{fail}}^\star$, 
$\eta_{\mathrm{Thresh}}
=K_{\max}2^{-q}
\leq\eta_{\mathrm{fail}}^\star/3$.
Together with the bounds above,
$\eta_{\mathrm{fail}}
\leq\eta_{\mathrm{fail}}^\star$.

The definition of $q$ also gives
$2^q\geq
(1+2\pi)\sqrt{\mu}/\varepsilon_{\mathrm{approx}}^\star$.
Therefore
$\varepsilon_{\mathrm{approx}}
=(1+2\pi)\sqrt{\mu}\,2^{-q}
\leq\varepsilon_{\mathrm{approx}}^\star$.

Finally, we prove the parameter scaling as the targets $\varepsilon_{\mathrm{approx}}^\star$ and $\eta_{\mathrm{fail}}^\star$ approach 0. From the choice of $\mu$,
we have
$\mu=O(\log(1/\eta_{\mathrm{fail}}^\star))$, and since
$K_{\max}=4\mu$, the same bound holds for $K_{\max}$.
Moreover,
$q=
O(
\log(K_{\max}/\eta_{\mathrm{fail}}^\star)
+
\log(\sqrt{\mu}/\varepsilon_{\mathrm{approx}}^\star)
)$.
Since
$\log K_{\max}$ and $\log\mu$ are
$O(\log\log(1/\eta_{\mathrm{fail}}^\star))$, this simplifies to $q
=
O\left(
\log(1/\varepsilon_{\mathrm{approx}}^\star)
+
\log(1/\eta_{\mathrm{fail}}^\star)
\right)$.
\end{proof}

\subsubsection{Coherent implementation in \texorpdfstring{$\MH$}{MH}}
\label{sssec:gaussian-coherent-implementation}

We now give a coherent implementation of the finite Poisson sampler from
\Cref{def:finite-poisson-sampler}. For a fixed input tape $\Omega$, the
sampler is a deterministic computation of
$(R_{\mathrm{fin}},\Theta_{\mathrm{fin}})$. The phase
$\Theta_{\mathrm{fin}}$ is already contained in the input tape, so the main
task is to compute the magnitude $R_{\mathrm{fin}}
=
\min_{1\leq i\leq K_{\mathrm{fin}}}
\max\{U^{\mathrm{fin}}_{i,0},U^{\mathrm{fin}}_{i,1}\}$.

A straightforward implementation would first determine
$K_{\mathrm{fin}}$, compute the maximum within each of the first
$K_{\mathrm{fin}}$ candidate pairs, and then compare these maxima to find
their minimum. If the minimum is computed by updating a running minimum, the
comparisons are necessarily sequential, giving nonlinear depth that grows
with $K_{\max}$. A balanced comparison tree reduces this to $O(\log K_{\max})$ sequential
comparison stages, but this still does not give a constant magic-hierarchy
level.

Instead of reducing the candidate values through a sequence of pairwise
comparisons, we trade depth for size. Every candidate is tested against all
other active candidates in parallel to determine whether it is the minimum. We first determine which of the $K_{\max}$ candidate pairs are active from
the comparisons $Z\geq T_i$ and which coordinate is the maximum within each
active pair. We then compare every resulting candidate against all other
active pairs simultaneously. A fixed tie-breaking rule ensures that exactly
one candidate is selected, and the bits of this winning candidate give
$R_{\mathrm{fin}}$. We refer to this as the parallel winner-selection construction.

The following lemma gives the resulting coherent implementation of the finite sampler in the magic hierarchy.

\begin{lemma}[Coherent implementation of the finite Poisson sampler in $\MH$]
\label{lem:finite-poisson-coherent-implementation}
Fix sampler parameters $\mu$, $K_{\max}$, and $q$, and let
$R=(2K_{\max}+2)q$ be the random-tape length. There exists an exact nonclean
coherent implementation
$U_{\mathrm{samp}}$ for the finite Poisson sampler $\Samp_{\mathrm{fin}}:\{0,1\}^R\to\{0,1\}^{2q}$
defined in \Cref{def:finite-poisson-sampler} that, for every input tape
$\Omega\in\{0,1\}^R$, acts in the following form:
\begin{equation}
U_{\mathrm{samp}}
\ket{\Omega}\ket{0^{n_{\mathrm{anc}}}}\ket{0^{2q}}
=
\ket{\Omega}\ket{\mr{junk}_{\Omega}}
\ket{\Samp_{\mathrm{fin}}(\Omega)}.
\end{equation}
Here $n_{\mathrm{anc}}$ is the number of ancilla qubits, $\ket{\mr{junk}_{\Omega}}$ denotes the resulting state of
these ancilla qubits, which may depend on $\Omega$, and the final $2q$
qubits contain the sampler output
$\Samp_{\mathrm{fin}}(\Omega)
=(R_{\mathrm{fin}},\Theta_{\mathrm{fin}})$, with $q$ bits for each
component.

The circuit $U_{\mathrm{samp}}$ lies in
$\AxCQ{8}\subseteq\MH_8$, with $n_{\mathrm{anc}}
=
O\left(
K_{\max}^2q^4\log(q+1)
+
K_{\max}^2q\log(K_{\max}+1)
\right)$ ancilla qubits and circuit size.
\end{lemma}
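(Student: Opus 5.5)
The plan is to express the finite Poisson sampler as a constant-depth classical threshold circuit acting on the basis state $\ket{\Omega}$. We then invoke the simulation of $\TCZ$-type reversible computation in the $\AxCQ{k}$ hierarchy from \cite{parham2025mh}: each layer of threshold (or AND/OR/comparison) gates acting on computational-basis inputs is implemented nonclean, writing its output into fresh ancillas and leaving the garbage in place. Since the statement permits $\ket{\mr{junk}_\Omega}$, we never uncompute. The phase $\Theta_{\mathrm{fin}}$ is a fixed $q$-bit block of $\Omega$, so it is simply copied into the output register with CNOTs (Clifford). All the work goes into $R_{\mathrm{fin}}$.

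The circuit is built in four parallel stages, each constant-depth in the threshold sense.
\begin{enumerate}[(i)]
\item \textbf{Activity bits.} Compute $a_i=[Z\geq T_i]$ for all $i\in[K_{\max}]$ in parallel. The $T_i$ are hardwired constants, so each $a_i$ is a $q$-bit comparison against a constant.
\item \textbf{Pair maxima.} For each $i$, compute $b_i=[U^{\mathrm{fin}}_{i,0}\geq U^{\mathrm{fin}}_{i,1}]$. Then select $M_i=\max\{U^{\mathrm{fin}}_{i,0},U^{\mathrm{fin}}_{i,1}\}$ bitwise as $(b_i\wedge U_{i,0})\vee(\neg b_i\wedge U_{i,1})$.
\item \textbf{Winner selection.} For every ordered pair $(i,j)$, compute $c_{ij}=[M_i<M_j]\vee(M_i=M_j\wedge i<j)$, which is a comparison with fixed tie-breaking. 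Set $w_i=a_i\wedge\bigwedge_{j\neq i}(\neg a_j\vee c_{ij})$. This is a fan-in-$K_{\max}$ AND, which one threshold gate handles. Because activity is a prefix, exactly one $w_i=1$ whenever $K_{\mathrm{fin}}\geq 1$, and no $w_i=1$ when $K_{\mathrm{fin}}=0$.
\item \textbf{Output.} For each output bit $t$, set $R_t=\bigvee_i(w_i\wedge M_{i,t})$. This gives $R_{\mathrm{fin}}=0$ exactly when $K_{\mathrm{fin}}=0$, matching \Cref{def:finite-poisson-sampler}.
\end{enumerate}

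For the resource count, I would implement each $q$-bit comparison as a depth-constant threshold circuit. Either the standard prefix comparator or, more cheaply within the \MH\ simulation, a construction whose simulation costs $O(q^4\log(q+1))$ ancillas per comparison, which I would take from the comparison/threshold gadget bounds in \cite{parham2025mh}. There are $O(K_{\max}^2)$ comparisons, giving the $K_{\max}^2q^4\log(q+1)$ term. The wide ANDs/ORs of fan-in $K_{\max}$, replicated $O(K_{\max})$ times for the $w_i$ and $O(q)$ times for the output ORs, each cost $O(\log(K_{\max}+1))$ ancillas (for counting-based threshold simulation). This gives the $K_{\max}^2q\log(K_{\max}+1)$ term.

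The main obstacle is the level count. We must show that the composition of stages (i)–(iv) fits in $\AxCQ{8}$ rather than a larger level. I would first cost each threshold layer by the exact number of alternations that the \cite{parham2025mh} threshold simulation uses (I expect two per layer of threshold gates, with Clifford fanout merging into adjacent Clifford layers). Then I would identify the circuit's threshold depth as four: comparisons in (i)–(ii) run in parallel, selection of $M_i$ in (ii) merges with the comparisons of (iii), the AND in (iii) is one layer, and the OR in (iv) is one more. If a naive count exceeds eight, I would try to merge the bitwise selection of $M_i$ into the subsequent comparison gates, since a threshold gate can absorb a preceding AND/OR layer. I would also exploit the fact that consecutive Clifford layers on both sides of a simulated gate combine into a single Clifford layer.
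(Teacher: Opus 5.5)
Your Boolean design is correct: the tie-broken order $c_{ij}$ picks the smallest-index minimizer among active pairs, and the output is $0$ when $K_{\mathrm{fin}}=0$. But the substance of this lemma is the specific level $\AxCQ{8}$ together with the stated resource bounds, and there the proposal has a genuine gap. You leave the level count as the ``main obstacle,'' and your expectation of two levels per threshold layer does not match the available tools. A wide AND is implemented nonclean in $\AxCQ{4}$ \cite[Lemma~6.5]{parham2025mh}, and depth-$d$ $\TCZ$ is simulated in $\AxCQ{4d}$ \cite[Lemma~6.3]{parham2025mh}. The $\AxCQ{2}$ generalized-Toffoli construction via \cite{TT16} has size exponential in the fan-in, so it is usable only for logarithmically many controls, not fan-in $q$ or $K_{\max}$. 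Your four nonlinear layers (comparisons; comparisons of the computed $M_i$; the wide AND; the output OR) therefore land in $\AxCQ{16}$. The budget $\AxCQ{8}$ permits exactly two AND stages, with everything else Clifford. Your resource figures are also reverse-engineered rather than derived: a single comparison does not cost $q^4\log(q+1)$.

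The missing idea is to keep every predicate in parity-of-conjunctions form, so that ORs become free XORs. Write $\mathbf{1}\{a<b\}=\bigoplus_{\ell}(1\oplus a_\ell)b_\ell\prod_{h<\ell}(1\oplus a_h\oplus b_h)$. Its terms are mutually exclusive, so a comparison costs one AND stage plus Cliffords.

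Do not materialize $M_i$. Its bits already have degree $\Theta(q)$ in the tape, so comparing $M_i$ with $M_j$ needs a stage after $b_i$ is known. Instead, compare each coordinate $U^{\mathrm{fin}}_{i,a}$ against both coordinates of every other pair, using $M_j\le U^{\mathrm{fin}}_{i,a}$ iff both coordinates are. Add a selector bit $S_{i,a}$ marking which coordinate realizes $M_i$, and use non-strict versus strict comparisons for $j<i$ versus $j>i$ to break ties.

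Each blocking bit $B_{j;i,a}=A_j\wedge[\cdot]\wedge[\cdot]$ is then a product of three parities of conjunctions. Distributing over $\mathbb{F}_2$ gives a parity of $O(q^3)$ conjunctions of $O(q)$ affine forms, all computable in the first $\AxCQ{4}$ block. This is where the $K_{\max}^2q^4\log(q+1)$ term comes from.

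The second $\AxCQ{4}$ block computes each masked output bit $(U^{\mathrm{fin}}_{i,a})_b\wedge A_i\wedge S_{i,a}\wedge\bigwedge_{j\neq i}(1\oplus B_{j;i,a})$ as a single conjunction of $O(K_{\max})$ literals. This fuses your stages (iii) and (iv) and gives the $K_{\max}^2q\log(K_{\max}+1)$ term. Because the winner is unique, your final OR is an XOR, i.e.\ Clifford. Merging the Clifford layers around the two $\AxCQ{4}$ blocks gives $\AxCQ{8}$.
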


To prove this lemma, we first formalize the efficient parallel computation of the
magnitude $R_{\mathrm{fin}}$ described above. Recall that when the number of candidate points $K_{\mathrm{fin}}>0$,
$R_{\mathrm{fin}}$ is obtained by taking the maximum within each of the
first $K_{\mathrm{fin}}$ candidate pairs and then taking the minimum of
these pair maxima. We therefore need to identify which pairs are active,
determine the maximum within each active pair, and then select the smallest
of these maxima in parallel. Recall that the input tape has the form
$\Omega=(Z,U^{\mathrm{fin}}_{1,0},U^{\mathrm{fin}}_{1,1},\ldots,
U^{\mathrm{fin}}_{K_{\max},0},U^{\mathrm{fin}}_{K_{\max},1},
\Theta_{\mathrm{fin}})$, where $Z$ determines the active candidate pairs,
the $U^{\mathrm{fin}}_{i,a}$ are the candidate coordinates, and
$\Theta_{\mathrm{fin}}$ is the phase. 

For each candidate pair $i\in[K_{\max}]$, define the activity bit
$A_i\in\{0,1\}$ by $A_i=1$ if and only if $Z\geq T_i$, where $T_i$ is
the fixed threshold associated with pair $i$ in the finite sampler. Thus
$A_i=1$ exactly when pair $i$ is one of the first $K_{\mathrm{fin}}$
pairs participating in the minimum defining $R_{\mathrm{fin}}$. Since the
thresholds $T_i$ are nondecreasing, the activity bits consist of
$K_{\mathrm{fin}}$ initial ones followed by zeros.

For each pair $i$, let
$M_i=\max\{U^{\mathrm{fin}}_{i,0},U^{\mathrm{fin}}_{i,1}\}$ denote its
pair maximum. For each coordinate $a\in\{0,1\}$, define the selector bit
$S_{i,a}\in\{0,1\}$ so that $S_{i,a}=1$ exactly when
$U^{\mathrm{fin}}_{i,a}$ is chosen to represent $M_i$. We set
$S_{i,0}=1$ if and only if
$U^{\mathrm{fin}}_{i,0}\geq U^{\mathrm{fin}}_{i,1}$ and
$S_{i,1}=1$ if and only if
$U^{\mathrm{fin}}_{i,1}>U^{\mathrm{fin}}_{i,0}$.
Thus exactly one selector bit is $1$, with coordinate $0$ chosen in the
case of a tie. 

Among the active pairs, $R_{\mathrm{fin}}$ is the smallest pair maximum
$M_i$. We select one pair attaining this minimum, breaking ties in favor of
the smaller pair index. To determine whether a candidate
$U^{\mathrm{fin}}_{i,a}$ can be selected, we record whether another active
pair blocks it. For each $j\neq i$, define the blocking bit $B_{j;i,a}\in\{0,1\}$ as
follows. If $j<i$, set
$B_{j;i,a}
=
A_j
\wedge
\mathbf{1}\{U^{\mathrm{fin}}_{j,0}\leq U^{\mathrm{fin}}_{i,a}\}
\wedge
\mathbf{1}\{U^{\mathrm{fin}}_{j,1}\leq U^{\mathrm{fin}}_{i,a}\}$.
If $j>i$, set
$B_{j;i,a}
=
A_j
\wedge
\mathbf{1}\{U^{\mathrm{fin}}_{j,0}<U^{\mathrm{fin}}_{i,a}\}
\wedge
\mathbf{1}\{U^{\mathrm{fin}}_{j,1}<U^{\mathrm{fin}}_{i,a}\}$,
where $\mathbf{1}\{\cdot\}$ denotes the bit indicating whether the stated
condition holds. These conditions compare the pair maximum $M_j$ with the candidate
$U^{\mathrm{fin}}_{i,a}$. The non-strict comparison for $j<i$ enforces the
tie-breaking rule by allowing an earlier pair to block a candidate with the
same value, while a later pair blocks the candidate only if its pair maximum
is strictly smaller. The selector bit $S_{i,a}$ ensures that only the
coordinate equal to the pair maximum $M_i$ can ultimately be selected.

Finally, define the winner bit $W_{i,a}\in\{0,1\}$ by
\begin{equation}
W_{i,a}
=
A_iS_{i,a}
\prod_{j\neq i}(1\oplus B_{j;i,a}).
\label{eq:poisson-winner-bit}
\end{equation}
Thus $W_{i,a}=1$ exactly when pair $i$ is active,
$U^{\mathrm{fin}}_{i,a}=M_i$, and no other active pair blocks it.

\begin{claim}[Correctness of parallel winner selection]
\label{clm:poisson-winner-correctness}
The winner bits $W_{i,a}$ give an exact parallel representation of the
magnitude $R_{\mathrm{fin}}$. For every input tape $\Omega$, if
$K_{\mathrm{fin}}>0$, exactly one winner bit $W_{i,a}$ is $1$, and its
corresponding value $U^{\mathrm{fin}}_{i,a}$ is $R_{\mathrm{fin}}$.
If $K_{\mathrm{fin}}=0$, all winner bits are zero. Consequently, for every
$b\in[q]$,
\begin{equation}
(R_{\mathrm{fin}})_b
=
\bigoplus_{i=1}^{K_{\max}}
\bigoplus_{a\in\{0,1\}}
(U^{\mathrm{fin}}_{i,a})_b W_{i,a}.
\label{eq:poisson-minimum-selection}
\end{equation}
\end{claim}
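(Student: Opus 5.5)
We argue directly from the definitions of the activity, selector, blocking and winner bits, and split according to whether $K_{\mathrm{fin}}=0$. If $K_{\mathrm{fin}}=0$, then every $T_i>Z$, so $A_i=0$ for all $i$. Each $W_{i,a}$ contains the factor $A_i$, so all winner bits vanish. The right-hand side of \eqref{eq:poisson-minimum-selection} is then $0$, which agrees with $R_{\mathrm{fin}}=0$ from \Cref{def:finite-poisson-sampler}.

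Now suppose $K_{\mathrm{fin}}>0$. Since the activity bits consist of $K_{\mathrm{fin}}$ leading ones, the active pairs are exactly $1,\dots,K_{\mathrm{fin}}$, so $R_{\mathrm{fin}}=\min_{i\le K_{\mathrm{fin}}}M_i$. We first reduce $B_{j;i,a}$ to a statement about pair maxima. For active $j$, we have $U^{\mathrm{fin}}_{j,0}\le u$ and $U^{\mathrm{fin}}_{j,1}\le u$ if and only if $M_j\le u$, and the same equivalence holds with strict inequalities. Hence for active $j$, $B_{j;i,a}=\mathbf 1\{M_j\le U^{\mathrm{fin}}_{i,a}\}$ when $j<i$, and $B_{j;i,a}=\mathbf 1\{M_j<U^{\mathrm{fin}}_{i,a}\}$ when $j>i$. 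Inactive $j$ never block. Next, $S_{i,a}=1$ for exactly one $a$, and for that $a$ we have $U^{\mathrm{fin}}_{i,a}=M_i$. Therefore $W_{i,a}=1$ if and only if $i$ is active, $a$ is the selected coordinate of pair $i$, and no active $j$ blocks the value $M_i$.

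The main step is to show that this happens for exactly one pair. Let $i^*$ be the smallest active index with $M_{i^*}=R_{\mathrm{fin}}$, and let $a^*$ be its selected coordinate.

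\emph{Existence.} Take an active $j<i^*$. By the minimality of $i^*$, $M_j>R_{\mathrm{fin}}=M_{i^*}$, so $j$ does not block. Take an active $j>i^*$. Since $R_{\mathrm{fin}}$ is the minimum, $M_j\ge M_{i^*}$, so the strict condition $M_j<M_{i^*}$ fails and $j$ does not block. Hence $W_{i^*,a^*}=1$.

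\emph{Uniqueness.} Let $i\ne i^*$ be active, with $a$ its selected coordinate. If $M_i>R_{\mathrm{fin}}$, then $i^*$ blocks $i$: either $i^*<i$ and $M_{i^*}\le M_i$, or $i^*>i$ and $M_{i^*}<M_i$. If instead $M_i=R_{\mathrm{fin}}$, then $i>i^*$ by the choice of $i^*$, and $i^*<i$ blocks because $M_{i^*}\le M_i$. Either way $W_{i,a}=0$. For the unselected coordinate of any pair, $W$ is killed by its zero selector bit.

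Exactly one winner bit is therefore $1$, and its value is $U^{\mathrm{fin}}_{i^*,a^*}=M_{i^*}=R_{\mathrm{fin}}$. In the XOR in \eqref{eq:poisson-minimum-selection}, every term with $W_{i,a}=0$ contributes $0$, so the XOR of the $b$th bits equals $(U^{\mathrm{fin}}_{i^*,a^*})_b=(R_{\mathrm{fin}})_b$. The only delicate point is the asymmetric tie-breaking, where the strict and non-strict comparisons must line up; the case analysis around $i^*$ handles this.
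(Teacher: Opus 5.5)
Your proof is correct and follows the paper's own argument. Both take $i^\star$ to be the smallest active index attaining the minimum pair maximum, show that none of its blocking bits fire, and show that every other selected active candidate is blocked because its maximum is either strictly larger or tied at a larger index. Your explicit reduction of $B_{j;i,a}$ to comparisons of $M_j$ with $U^{\mathrm{fin}}_{i,a}$, and your observation that $i^\star$ is itself the blocker, spell out steps the paper states more tersely.
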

\begin{proof}[Proof of~\Cref{clm:poisson-winner-correctness}]
Suppose first that at least one pair is active. Let $i^\star$ be the
smallest index among the active pairs attaining the minimum pair maximum,
and let $a^\star$ be the coordinate selected within pair $i^\star$.
No earlier active pair has maximum at most
$U^{\mathrm{fin}}_{i^\star,a^\star}$, and no later active pair has a
strictly smaller maximum. Hence all blocking bits appearing in
$W_{i^\star,a^\star}$ are zero. Since
$A_{i^\star}=S_{i^\star,a^\star}=1$,
\eqref{eq:poisson-winner-bit} gives
$W_{i^\star,a^\star}=1$. Every other winner bit is zero. An inactive pair has $A_i=0$, the
unselected coordinate within any pair has $S_{i,a}=0$, and any other
selected active maximum is either strictly larger than the minimum or is an
equal minimum with a larger pair index. In either case it has a blocking bit
equal to $1$. Thus $W_{i^\star,a^\star}$ is the unique winner.

If no pair is active, every $A_i=0$, so every winner bit is zero.
Equation~\eqref{eq:poisson-minimum-selection} then copies the bits of the
unique winner when one exists and outputs zero otherwise, exactly giving
$R_{\mathrm{fin}}$.
\end{proof}

The circuit uses the thresholds $T_i$ as hardwired classical constants.
We therefore also need to verify that these thresholds can be computed
efficiently from the sampler parameters.

\begin{claim}[Efficient threshold precomputation]
\label{clm:poisson-threshold-precomputation}
For fixed $\mu$, $K_{\max}$, and $q$, the thresholds
$T_1,\ldots,T_{K_{\max}}$ satisfying
\eqref{eq:poisson-threshold-accuracy} can be chosen nondecreasing and
computed classically in time polynomial in $\mu$, $K_{\max}$, and $q$.
\end{claim}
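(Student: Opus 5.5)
The plan is to compute each threshold by rounding an explicitly computable rational approximation of $P_i=\Pr[K_{\mathrm{id}}<i]=e^{-\mu}\sum_{k=0}^{i-1}\mu^k/k!$. We will then fix monotonicity, using the fact that rounding a nondecreasing sequence with a monotone rule stays nondecreasing. Concretely, for each $i\in[K_{\max}]$ we compute a rational $\tilde P_i$ with $\abs{\tilde P_i-P_i}\leq 2^{-q-2}$, and set $T_i=\floor{2^q\tilde P_i+1/2}$, clipped to $\{0,\ldots,2^q\}$. Then
\[
\abs{T_i/2^q-P_i}\leq 2^{-q-1}+2^{-q-2}\leq 2^{-q},
\]
which is \eqref{eq:poisson-threshold-accuracy}. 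Clipping only helps, since $P_i\in[0,1]$.

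To get $\tilde P_i$, we first note that $\mu$ is an integer (as in \Cref{cor:finite-sampler-parameter-choice}). In general we can assume $\mu$ is a given rational of polynomial bit length. The partial sum $S_i=\sum_{k<i}\mu^k/k!$ is then an exact rational whose numerator and denominator have $O(K_{\max}\log(\mu K_{\max}))$ bits, so it is computable exactly in $\poly(\mu,K_{\max})$ time. For $e^{-\mu}$ we use the alternating Taylor series $\sum_{k\le L}(-\mu)^k/k!$. Its tail is bounded by $\mu^{L+1}/(L+1)!$, which is below $2^{-q-3-\log_2 S_i}$ once $L=O(\mu+q+\log K_{\max}+K_{\max}\log\mu)$, using $S_i\le e^{\mu}$. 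The main subtlety is the cancellation in this alternating series. We avoid it by working with exact rationals, whose size is $\poly(L,\log\mu)$. Alternatively, we could compute $e^{-1}$ to $O(\mu+q)$ bits and raise it to the $\mu$th power with tracked error. Multiplying the two approximations, the total error is at most $2^{-q-2}$, since the error in $e^{-\mu}$ is multiplied by $S_i\leq e^{\mu}$, and we budgeted for exactly that factor.

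For monotonicity, we would rather not rely on independent approximations. Instead, we compute $\tilde P_i = \tilde E\cdot S_i$ using a single approximation $\tilde E\geq 0$ of $e^{-\mu}$ for all $i$. Since $S_i$ is nondecreasing in $i$, so is $\tilde P_i$. Applying the monotone rounding map $y\mapsto\min\{2^q,\max\{0,\floor{2^qy+1/2}\}\}$ then yields nondecreasing $T_i$. The step needing the most care is the error bookkeeping for $\tilde E$, which must absorb the factor $S_i\le e^{\mu}$. This costs only $O(\mu)$ extra bits of precision, so the whole computation stays polynomial in $\mu,K_{\max},q$.
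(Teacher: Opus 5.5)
Your proposal is correct and follows the paper's proof essentially step for step. The paper also multiplies one rational approximation $\widehat e_\mu$ of $e^{-\mu}$ by the exact partial sums $\sum_{k<i}\mu^k/k!$. It takes accuracy $2^{-q-2-2\mu}$ so that the factor $\sum_{k<i}\mu^k/k!\le e^{\mu}\le 2^{2\mu}$ is absorbed, rounds $2^q\widehat P_i$ to the nearest integer with a fixed tie rule, clips to $\{0,\ldots,2^q\}$, and gets monotonicity from $\widehat P_i$ being nondecreasing and the rounding-and-clipping map being monotone.

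Your extra detail is computing $e^{-\mu}$ by a truncated Taylor series in exact rational arithmetic, with tail at most $\mu^{L+1}/(L+1)!$. That fills in the step the paper only asserts, namely that $\widehat e_\mu$ can be computed to the required precision in polynomial time.
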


\begin{proof}[Proof of~\Cref{clm:poisson-threshold-precomputation}]
Recall that
$P_i=\Pr[K_{\mathrm{id}}<i]
=e^{-\mu}\sum_{k=0}^{i-1}\mu^k/k!$.
Since the finite sum is explicit, it suffices to approximate the factor
$e^{-\mu}$ to sufficiently high precision. Choose a nonnegative rational approximation $\widehat e_\mu$ satisfying
$\abs{\widehat e_\mu-e^{-\mu}}\leq2^{-q-2-2\mu}$, and set
$\widehat P_i=\widehat e_\mu\sum_{k=0}^{i-1}\mu^k/k!$.
Since
$\sum_{k=0}^{i-1}\mu^k/k!\leq e^\mu\leq2^{2\mu}$,
we have $\abs{\widehat P_i-P_i}\leq2^{-q-2}$. Let $T_i$ be the nearest integer to $2^q\widehat P_i$, clipped to
$\{0,\ldots,2^q\}$ using a fixed rule for ties. Since $P_i\in[0,1]$,
clipping cannot increase the distance from $P_i$. Nearest-integer rounding
contributes at most $2^{-q-1}$ after rescaling by $2^{-q}$, so
$\abs{T_i/2^q-P_i}<2^{-q}$, as required. Since $\widehat P_i$ is nondecreasing in $i$, nearest-integer rounding with
a fixed tie rule, followed by clipping, preserves this ordering. Hence the
thresholds $T_i$ are nondecreasing. The approximation $\widehat e_\mu$ and
the finite sums above can be computed to the required precision in time
polynomial in $\mu$, $K_{\max}$, and $q$. This proves the claim.
\end{proof}

We now use the two claims above to give the magic hierarchy implementation of the sampler.

\begin{proof}[Proof of \Cref{lem:finite-poisson-coherent-implementation}]
We now give the full implementation for the finite sampler from~\Cref{def:finite-poisson-sampler} coherently in the magic hierarchy, which outputs the magnitude $R_{\mathrm{fin}}$ and phase $\Theta_{\mathrm{fin}}$ on $q$-bit registers, given an input tape $\Omega$. 

We compute the magnitude
$R_{\mathrm{fin}}$ using the parallel winner-selection procedure as described above in \Cref{clm:poisson-winner-correctness}. It consists of two sequential stages of multi-input conjunctions. The first computes the
conjunction terms needed to obtain the activity, selector, and blocking
bits. After Clifford parity operations assemble these bits, the second
conjunction stage computes the masked output bits of the winning candidate. Each of these conjunction terms can be computed using the polynomial-size nonclean
$\AxCQ{4}$ implementation of AND from the proof of Lemma~6.5 in
\cite{parham2025mh}. By
\Cref{clm:poisson-threshold-precomputation}, the thresholds $T_i$ used in
computing the magnitude can be efficiently computed in advance and
hardwired into the circuit. Lastly, the phase $\Theta_{\mathrm{fin}}$ is already
contained in the input tape and so it is copied directly to the output using
CNOT gates.

\begin{enumerate}
    \item{} [$\Cliff$] Prepare the Boolean expressions in the
    comparison formulas of the activity, selector, and blocking bits appearing in the parallel
winner-selection formula~\eqref{eq:poisson-winner-bit} for computing the magnitude.

    The required comparisons are $Z\geq T_i$ for the activity bits $A_i$;
$U^{\mathrm{fin}}_{i,0}\geq U^{\mathrm{fin}}_{i,1}$ and
$U^{\mathrm{fin}}_{i,1}>U^{\mathrm{fin}}_{i,0}$ for the selector bits
$S_{i,0}$ and $S_{i,1}$; and, for the blocking bits $B_{j;i,a}$,
$U^{\mathrm{fin}}_{j,0}\leq U^{\mathrm{fin}}_{i,a}$ and
$U^{\mathrm{fin}}_{j,1}\leq U^{\mathrm{fin}}_{i,a}$ when $j<i$, or
$U^{\mathrm{fin}}_{j,0}<U^{\mathrm{fin}}_{i,a}$ and
$U^{\mathrm{fin}}_{j,1}<U^{\mathrm{fin}}_{i,a}$ when $j>i$.

    To express these comparisons in a form that can be evaluated in parallel,
consider two $q$-bit integers $a=a_1\cdots a_q$ and
$b=b_1\cdots b_q$, written most significant bit first. The relation $a<b$
is determined by the first bit at which $a$ and $b$ differ. Thus
$\mathbf{1}\{a<b\}$ is the XOR over $\ell\in[q]$ of
$(1\oplus a_\ell)b_\ell
\prod_{h<\ell}(1\oplus a_h\oplus b_h)$.
Each term checks that all more significant bits agree and that the first
difference has $a_\ell=0$ and $b_\ell=1$. Since there is at most one first
differing position, these terms are mutually exclusive. Hence $a<b$ is a parity of $O(q)$ conjunctions, each containing
$O(q)$ affine Boolean expressions in the input bits. The non-strict
comparisons used above are obtained by additionally including the equality
case, while reversing the inputs gives the corresponding $>$ and $\geq$
comparisons. Thus all of the required comparisons have the same
parity-of-conjunctions form.

 Compute these affine expressions using CNOT and X gates and fan out
    the required values so that the conjunction terms in the next step can
    be evaluated in parallel. For the activity bit comparisons $Z\geq T_i$, represent both $Z$ and $T_i$
    using $q+1$ bits and pad $Z$ with a leading zero, which also covers the
    case $T_i=2^q$.

    \item{} [$\AxCQ{4}$] Compute the conjunction terms needed for the
    activity, selector, and blocking bits.

    Using the polynomial-size nonclean $\AxCQ{4}$ implementation of AND from
the proof of \cite[Lemma~6.5]{parham2025mh}, each conjunction term from
the previous step can be computed in $\AxCQ{4}$. Since all of these terms
use separate workspace, they can be computed in parallel within the same
$\AxCQ{4}$ block.
    
    We now identify the conjunction terms required for each type of bit.
The activity and selector bits each require the terms from a single
comparison. Each blocking bit $B_{j;i,a}$ instead combines the activity
comparison defining $A_j$ with two coordinate comparisons against
$U^{\mathrm{fin}}_{i,a}$. Each of these three comparisons is a
parity of $O(q)$ conjunctions. Distributing their
    product over $\mathbb F_2$ therefore expresses each $B_{j;i,a}$ as a
    parity of $O(q^3)$ conjunctions, each involving $O(q)$ affine Boolean
    expressions. Hence all conjunction terms needed for the $A_i$,
    $S_{i,a}$, and $B_{j;i,a}$ can be computed simultaneously in this
    single $\AxCQ{4}$ block.

    \item{} [$\Cliff$] Assemble the activity, selector, and blocking bits.

    XOR the corresponding conjunction outputs to obtain all $A_i$,
    $S_{i,a}$, and $B_{j;i,a}$. Then compute the literals
    $1\oplus B_{j;i,a}$ and fan out the values needed in the winner
    computation.

    \item{} [$\AxCQ{4}$] Compute the masked output bits of the winning
    candidate in~\Cref{eq:poisson-winner-bit}.

    For every $i\in[K_{\max}]$, $a\in\{0,1\}$, and $b\in[q]$, compute
    $(U^{\mathrm{fin}}_{i,a})_bW_{i,a}$. By
    \eqref{eq:poisson-winner-bit}, each such term is a conjunction of
    $O(K_{\max})$ Boolean literals. Hence all masked output bits can be
    computed in parallel in $\AxCQ{4}$~\cite[Lemma~6.5]{parham2025mh}.

    \item{} [$\Cliff$] Assemble the output registers for the magnitude $R_{\mathrm{fin}}$ and phase $\Theta_{\mathrm{fin}}$.

    For each $b\in[q]$, XOR the masked output bits over $i$ and $a$. By
    \eqref{eq:poisson-minimum-selection}, this gives exactly
    $R_{\mathrm{fin}}$. Copy the $q$ input bits representing
    $\Theta_{\mathrm{fin}}$ into the phase-output register.
\end{enumerate}

By merging the adjacent Clifford
layers around the $\AxCQ{4}$ blocks from the conjunction computation, this overall circuit lies in
$\AxCQ{8}\subseteq\MH_8$.

We finally bound the circuit size and number of auxiliary qubits. There are
$O(K_{\max}^2)$ blocking bits. Expanding the three comparison predicates
defining each blocking bit gives
$O(K_{\max}^2q^3)$ conjunctions in the first conjunction stage, each
involving $O(q)$ affine expressions. 
This contributes
$O(K_{\max}^2q^4\log(q+1))$ gates and auxiliary qubits. The second conjunction stage contains $O(K_{\max}q)$ conjunctions, one for
each candidate and output bit, and each contains $O(K_{\max})$ literals.
Its cost is therefore
$O(K_{\max}^2q\log(K_{\max}+1))$.
The intervening Clifford computations are asymptotically smaller. Hence the
complete implementation has size and auxiliary-qubit count
$O(K_{\max}^2q^4\log(q+1)
+
K_{\max}^2q\log(K_{\max}+1))$.
\end{proof}

Combining the approximation guarantee of
\Cref{lem:finite-complex-gaussian-sampler}, the parameter choices of
\Cref{cor:finite-sampler-parameter-choice}, and the coherent
magic-hierarchy implementation of
\Cref{lem:finite-poisson-coherent-implementation} gives the following
summary guarantee for the finite Poisson sampler.

\begin{corollary}[Complex Gaussian sampler guarantees]
\label{lem:poisson-minimum-gaussian-sampler}
For any target approximation error
$\varepsilon_{\mathrm{approx}}^\star>0$ and target failure probability
$\eta_{\mathrm{fail}}^\star\in(0,1)$, the parameters
$\mu$, $K_{\max}$, and $q$ can be chosen so that, with
$s=\sqrt{\mu}$ and $R=(2K_{\max}+2)q$, the finite Poisson sampler
$\Samp_{\mathrm{fin}}:\{0,1\}^R\to\{0,1\}^{2q}$ is an
$(s,\varepsilon_{\mathrm{approx}}^\star,
\eta_{\mathrm{fail}}^\star)$-complex Gaussian sampler in the sense of
\Cref{def:complex-gaussian-sampler}.

Moreover, every sampler output has magnitude $\abs{V_{\mathrm{fin}}}<1$,
$R$ and $q$ are polynomial in
$\log(1/\varepsilon_{\mathrm{approx}}^\star)
+\log(1/\eta_{\mathrm{fail}}^\star)$,
the sampler can be evaluated classically in polynomial time, and it has an
exact nonclean coherent implementation in
$\AxCQ{8}\subseteq\MH_8$ of polynomial size using polynomially many
auxiliary qubits.
\end{corollary}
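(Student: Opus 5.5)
The corollary is essentially a bookkeeping combination of the three preceding results, so I would prove it by instantiating the parameters from \Cref{cor:finite-sampler-parameter-choice} and then checking each clause of \Cref{def:complex-gaussian-sampler} and the extra claims in turn. Given targets $\varepsilon_{\mathrm{approx}}^\star,\eta_{\mathrm{fail}}^\star$, take $\mu=\ceil{\ln(3/\eta_{\mathrm{fail}}^\star)}$, $K_{\max}=4\mu$, and $q$ as in that corollary. Set $s=\sqrt\mu$ and $R=(2K_{\max}+2)q$.

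The first step is the coupling requirement. \Cref{lem:finite-complex-gaussian-sampler} provides a coupling of $V_{\mathrm{fin}}$ with $G\sim\calN^\CC(0,1)$ such that $\abs{\sqrt\mu V_{\mathrm{fin}}-G}\le\varepsilon_{\mathrm{approx}}$ except with probability $\eta_{\mathrm{fail}}$. \Cref{cor:finite-sampler-parameter-choice} guarantees $\varepsilon_{\mathrm{approx}}\le\varepsilon_{\mathrm{approx}}^\star$ and $\eta_{\mathrm{fail}}\le\eta_{\mathrm{fail}}^\star$. Therefore $\Pr[\abs{sV-G}>\varepsilon_{\mathrm{approx}}^\star]\le\eta_{\mathrm{fail}}^\star$, which is exactly the $(s,\xi,\eta_{\mathrm{samp}})$ condition.

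The second step is the magnitude bound. $R_{\mathrm{fin}}$ is either $0$ or some $U^{\mathrm{fin}}_{i,a}$, and each $U^{\mathrm{fin}}_{i,a}$ is a $q$-bit binary fraction in $[0,1)$. Since $\abs{e^{2\pi i\Theta_{\mathrm{fin}}}}=1$, this gives $\abs{V_{\mathrm{fin}}}<1$.

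The third step is the size claims. From the scaling in \Cref{cor:finite-sampler-parameter-choice}, $\mu,K_{\max}=O(\log(1/\eta_{\mathrm{fail}}^\star))$ and $q=O(\log(1/\varepsilon_{\mathrm{approx}}^\star)+\log(1/\eta_{\mathrm{fail}}^\star))$. Hence $R$ is at most quadratic in $\log(1/\varepsilon_{\mathrm{approx}}^\star)+\log(1/\eta_{\mathrm{fail}}^\star)$. These asymptotic bounds are stated as the targets tend to $0$; for targets bounded away from $0$ the explicit formulas still give polynomial bounds after adjusting constants.

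The fourth step is classical evaluation. The thresholds $T_i$ are computable in time polynomial in $\mu,K_{\max},q$ by \Cref{clm:poisson-threshold-precomputation}. Given them, one evaluates $\Samp_{\mathrm{fin}}$ directly from \Cref{def:finite-poisson-sampler}: compute $K_{\mathrm{fin}}$ by $K_{\max}$ comparisons of $q$-bit numbers, take $K_{\mathrm{fin}}$ pairwise maxima and one minimum, and copy $\Theta_{\mathrm{fin}}$. This takes $\poly(K_{\max},q)$ time.

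The final step is the coherent implementation. \Cref{lem:finite-poisson-coherent-implementation} gives an exact nonclean implementation in $\AxCQ{8}\subseteq\MH_8$ whose size and ancilla count are $O(K_{\max}^2q^4\log(q+1)+K_{\max}^2q\log(K_{\max}+1))$. This is polynomial in the same quantity by the previous step.

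I do not expect any step to be a real obstacle. The only care points are two. First, the black-box definition identifies the output encoding with its decoded complex value, so the polar encoding $(R_{\mathrm{fin}},\Theta_{\mathrm{fin}})$ must be read as $V_{\mathrm{fin}}$. Second, the ``polynomial in $\log$'' claims should be stated uniformly, using the explicit parameter formulas rather than only the $O(\cdot)$ asymptotics.
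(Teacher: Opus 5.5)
Your proposal is correct and follows the paper's proof essentially step for step. You choose $\mu$, $K_{\max}$, $q$ via \Cref{cor:finite-sampler-parameter-choice}, invoke the coupling of \Cref{lem:finite-complex-gaussian-sampler} with $s=\sqrt{\mu}$, read off $\abs{V_{\mathrm{fin}}}<1$ from $R_{\mathrm{fin}}$ being a $q$-bit fraction in $[0,1)$, and cite \Cref{clm:poisson-threshold-precomputation} and \Cref{lem:finite-poisson-coherent-implementation} for classical evaluation and the $\AxCQ{8}$ implementation. Your extra detail on the direct classical evaluation, and on deriving the polynomial bounds from the explicit parameter formulas, only spells out what the paper asserts tersely.
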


\begin{proof}
Choose $\mu$, $K_{\max}$, and $q$ according to
\Cref{cor:finite-sampler-parameter-choice}. Then
\Cref{lem:finite-complex-gaussian-sampler} gives a coupling of
$V_{\mathrm{fin}}$ with $G\sim\mc{N}^{\CC}(0,1)$ such that
$\abs{\sqrt{\mu}V_{\mathrm{fin}}-G}
\leq\varepsilon_{\mathrm{approx}}^\star$
except with probability at most $\eta_{\mathrm{fail}}^\star$.
Thus, with $s=\sqrt{\mu}$,
$\Samp_{\mathrm{fin}}$ satisfies the approximation requirement of
\Cref{def:complex-gaussian-sampler}. 

By \Cref{def:finite-poisson-sampler},
$R_{\mathrm{fin}}<1$, so every output
$V_{\mathrm{fin}}
=R_{\mathrm{fin}}e^{2\pi i\Theta_{\mathrm{fin}}}$
has magnitude strictly less than one. The parameter bounds in
\Cref{cor:finite-sampler-parameter-choice} give
$R=(2K_{\max}+2)q$ and $q$ polynomial in
$\log(1/\varepsilon_{\mathrm{approx}}^\star)
+\log(1/\eta_{\mathrm{fail}}^\star)$.
The sampler is classically evaluable in polynomial time, including the
precomputation of the thresholds $T_i$ from
\Cref{clm:poisson-threshold-precomputation}. Finally,
\Cref{lem:finite-poisson-coherent-implementation} gives the exact nonclean
$\AxCQ{8}$ implementation with polynomial size and polynomially many
auxiliary qubits.
\end{proof}

\paragraph{Compatibility with security parameters.}
We finally verify that the finite Poisson sampler satisfies the parameter
and classical-computation requirements used in
\Cref{cor:random-function-security-parameters} and
\Cref{thm:PRAL-security}. In the notation of the finite sampler above, set
$\varepsilon_{\mathrm{approx}}^\star=\xi$ and
$\eta_{\mathrm{fail}}^\star=\eta_{\mathrm{samp}}$, and choose
$\xi=\eps/(16CN^{3/2}2^T)$ and
$\eta_{\mathrm{samp}}=\eps/(4N)$. Since $N=2^n$, these choices give
$\mu,K_{\max}=O(n+\log(1/\eps))$ and
$q=O(n+T+\log(1/\eps))$.
Hence the tape length $R=(2K_{\max}+2)q$, the output length $2q$, and the
classical evaluation time are polynomial in
$n$, $T$, and $\log(1/\eps)$. The sampler outputs finite polar coordinates
$(R_{\mathrm{fin}},\Theta_{\mathrm{fin}})$. The corresponding real and
imaginary parts,
$R_{\mathrm{fin}}\cos(2\pi\Theta_{\mathrm{fin}})$ and
$R_{\mathrm{fin}}\sin(2\pi\Theta_{\mathrm{fin}})$, can be approximated to
any requested $p$ bits in time polynomial in $q$ and $p$.
Finally, with $\gamma=1/(2s)$ and $s=\sqrt{\mu}$, we have
$\gamma^2=1/(4\mu)$, which is rational and efficiently computable.
Thus the concrete finite Poisson sampler satisfies the sampler and
classical-computation requirements used in
\Cref{thm:PRAL-security}.

\subsection{Amplitude loading in \MH}\label{ssec:amplitude-loading-in-mh}
Now that we have computed the amplitudes of our target state, we need to prepare a state with those amplitudes -- to ``load" in those amplitude. We first give the definition of an amplitude loader.
\begin{definition}[Amplitude loader]
\label{def:amplitude-loader}
Let $v=(v_x)_{x\in\zo^n}$ be a complex vector, where $v_x=r_xe^{2\pi i\theta_x}$ and $r_x,\theta_x\in\zo^q$ encode
$q$-bit binary fractions. Let $U_v$ be an
amplitude oracle that, on input $\ket{x}$, coherently outputs
$\ket{r_x}$ and $\ket{\theta_x}$ for each $x\in\zo^n$, together with
an auxiliary workspace state $\ket{\mr{junk}_x}$.

Fix a balance parameter $\gamma\in(0,1]$ and an error parameter
$\delta_{\mr{load}}\in(0,1)$. 

An amplitude loader is a quantum
algorithm that, given forward and inverse access to $U_v$ for amplitude vector
$v$ satisfying $\norm{v}\geq\gamma\sqrt N$, outputs an
$n$-qubit mixed state $\sigma$ satisfying $\frac{1}{2}\tnorm{\sigma-\ket{\wh{v}}\bra{\wh{v}}}
    \leq \delta_{\mr{load}}$, where
$\ket{\wh v}=\frac{1}{\norm{v}}\sum_{x\in\zo^n}v_x\ket{x}$.
\end{definition}

To implement the amplitude loader, we write each magnitude $r_x$ in binary and prepare a weighted superposition over its digit positions. For each digit position, we check whether the corresponding bit of $r_x$ is one, and only those branches contribute when the weighted reference register is unprepared, producing an amplitude proportional to $r_x$. The encoded phase supplies the corresponding complex phase. This follows the linear-combination-of-unitaries state-preparation approach of~\cite{wang2021lcu}. Our implementation adapts this construction to minimize the number of magic-hierarchy levels required.

We first show how to implement the base amplitude loading algorithm in~\Cref{sssec:loader-single-trial}, which prepares $\ket{\wh v}$ with some success probability $p\geq \gamma^2$. To boost the success probability, we run multiple copies of this algorithm in parallel and select the first success one. We describe the repetition and selection procedure in~\Cref{sssec:loader-repetition}. Finally we show how to implement the full amplitude loader in~\Cref{sssec:full-amplitude-loader}.

\subsubsection{Amplitude loading for one trial}
\label{sssec:loader-single-trial}
\begin{lemma}[Amplitude loading for one trial]\label{lem:single-loading-trial}
Suppose we have access to an amplitude oracle $U_v$ and its inverse,
outputting $r_x$ and $\theta_x$ on $q$ qubits, with
$U_v,U_v^\dagger\in\AxCQ{\sigma}$ for even $\sigma\geq 2$. Assume
$\norm{v}\geq\gamma\sqrt N$. 

Then there exists a circuit in $\AxCQ{2\sigma + 2}$ that prepares $\ket{\wh{v}}=\frac{1}{\norm{v}}\sum_x v_x\ket{x}$ with probability $p=\frac{\norm{v}^2}{2^n(1-2^{-q})^2}\geq \gamma^2$. This circuit uses one call each to $U_v$ and $U_v^\dagger$, together with $O\pbra{n+q^2\log(q+1)}$ additional gates and $O(q^2)$ additional ancilla qubits.
\end{lemma}
\begin{proof}
    We use the following registers: $D$ is the $n$-qubit data register, $M$ and $P$ are the $q$-qubit
magnitude and phase registers, respectively. The address register $J$ introduced below
has $\log_2q$ qubits, and $B$ is a single flag qubit.

We begin by preparing an equal superposition over $x$ and applying the
amplitude oracle $U_v$, giving
$\frac{1}{\sqrt N}\sum_x
\ket{x}_D\ket{r_x}_M\ket{\theta_x}_P\ket{\mr{junk}_x}_G$.
    Here $r_x$ and $\theta_x$ are stored as $q$-bit binary fractions. Our goal is to convert these binary descriptions into the amplitude $v_x=r_xe^{2\pi i\theta_x}$ on each $\ket{x}$ on the data register $D$. Applying the phase $e^{2\pi i\theta_x}$ from the bits of $\theta_x$ is straightforward, so we focus on the magnitudes.
    
    Write each magnitude as a $q$-bit binary fraction $r_x=0.r_{x,0}r_{x,1}\cdots r_{x,q-1}=\sum_{j=0}^{q-1}2^{-(j+1)}r_{x,j}$. Thus the $j$th magnitude bit contributes the binary weight
$2^{-(j+1)}$ whenever $r_{x,j}=1$. Without loss of generality, take $q$ to be a power of two. Otherwise, pad
the magnitude encoding with trailing zeros to the next power-of-two length
and redefine $q$ to be this padded length.
    
    In the address register, we prepare a weighted superposition $\ket{w}=\sum_j w_j\ket{j}$ over the $q$ fractional-bit positions, where position $j$ has amplitude proportional to the square root of its binary weight $2^{-(j+1)}$, so $w_j
    =
    \frac{2^{-(j+1)/2}}{\sqrt{1-2^{-q}}}$. Note that $\ket{w}$ depends only on the precision $q$, not on $x$ or $r_x$.

    We next introduce a flag qubit $B$ initialized to $\ket{0}$ which indicates which positions of $r_x$ are 1. This will split up the state into positions which contribute to the magnitude and those that do not. For each fixed $x$, the $J,B$ registers therefore become $\sum_{j:r_{x,j}=0}w_j\ket{j}_J\ket{0}_B+ \sum_{j:r_{x,j}=1}w_j\ket{j}_J\ket{1}_B$.
    To compute $r_{x,j}$ into the flag $B$ for each position $j$, we need to perform a coherent lookup from the magnitude and address registers. Conceptually, this lookup consists of one generalized Toffoli for each
$j\in\{0,\ldots,q-1\}$. The gate associated with $j$ flips $B$ exactly
when $J=j$ and $r_{x,j}=1$. Since exactly one address value $j$ is present
on each computational-basis branch of $J$, the resulting flag contains
precisely the addressed magnitude bit. We give the low-depth
magic-hierarchy implementation of this lookup below.

Applying the phase $e^{2\pi i\theta_x}$ from the phase register, the full
state is now
\begin{equation}
    \frac{1}{\sqrt{N(1-2^{-q})}}
    \sum_x
    e^{2\pi i\theta_x}
    \ket{x}_D
    \left(
        \sum_{j:r_{x,j}=0}
        2^{-(j+1)/2}\ket{j}_J\ket0_B
        +
        \sum_{j:r_{x,j}=1}
        2^{-(j+1)/2}\ket{j}_J\ket1_B
    \right)
    \ket{r_x}_M
    \ket{\theta_x}_P
    \ket{\mathrm{junk}_x}_G.
\end{equation}

We next apply $U_v^\dagger$ to erase the magnitude, phase, and oracle
workspace registers. We then apply $W^\dagger$ to the address register
$J$, where $W\ket{0}=\ket{w}$. We call the branch
$J=0^{\log_2 q}$ and $B=1$ the successful branch.

Fix a basis string $x$. Only positions $j$ with $r_{x,j}=1$ have
$B=1$. Before applying $W^\dagger$, the contribution from such a position
is $w_j\ket{j}_J\ket{1}_B$. Since
$\bra{0}W^\dagger\ket{j}=\braket{w|j}=w_j$, its contribution to the
successful branch after $W^\dagger$ is weighted by
$w_j^2=2^{-(j+1)}/(1-2^{-q})$. Summing over the positions for which
$r_{x,j}=1$ therefore gives total magnitude
$\frac{1}{1-2^{-q}}\sum_j2^{-(j+1)}r_{x,j}
=\frac{r_x}{1-2^{-q}}$. Together with the phase $e^{2\pi i\theta_x}$, the contribution of basis
state $x$ to the successful branch is therefore
$\frac{v_x}{\sqrt{N}(1-2^{-q})}$. Hence the full successful component is
\begin{equation}
    \frac{1}{\sqrt{N}(1-2^{-q})}
    \sum_x
    v_x\ket{x}_D
    \ket{0^{\log_2q}}_J
    \ket1_B.
    \label{eq:loader-successful-component}
\end{equation}
All remaining terms are orthogonal to this component on the $J,B$
registers. Therefore, conditioned on the successful branch
$J=0^{\log_2q}$ and $B=1$, the normalized state of the data register is $\frac{1}{\norm{v}}
    \sum_x v_x\ket{x}
    =
    \ket{\wh v}$.

The probability of this successful outcome is the squared norm of
\eqref{eq:loader-successful-component}, which is $p
    =
    \frac{\norm{v}^2}
    {N(1-2^{-q})^2}$.
Since $\norm{v}\geq\gamma\sqrt N$ and
$1-2^{-q}\leq1$, the success probability is at least $p\geq\gamma^2$.

\paragraph{Magic-hierarchy implementation for one amplitude loader trial.}
Here we show that the above implementation is in $\AxCQ{2\sigma+2}$.

\begin{enumerate}
    \item{} [$\AxCQ{\sigma}$] Evaluate $U_v$ and prepare $\ket{w}$.

The Hadamards preparing the equal superposition over $x$ are Clifford and
can be absorbed into the initial Clifford block of $U_v$.

Recall that
$\ket{w}=\sum_{j=0}^{q-1}w_j\ket{j}$, where
$w_j=2^{-(j+1)/2}/\sqrt{1-2^{-q}}$.
To see that $\ket{w}$ is a product state, write the address as
$j=\sum_{h=0}^{\log_2 q-1}j_h2^h$, where each $j_h\in\{0,1\}$.
The unnormalized amplitude of $\ket{j}$ then factors as
$2^{-(j+1)/2}
=
2^{-1/2}\prod_h 2^{-j_h2^h/2}$.
Thus the contribution of each address bit factors independently. If
$j_h=0$, its factor is $1$, while if $j_h=1$, its factor is
$2^{-2^h/2}$. Therefore the $h$th address qubit can be prepared in a state
proportional to $\ket0+2^{-2^h/2}\ket1$. Since $q$ is a power of two,
$\prod_{h=0}^{\log_2q-1}(1+2^{-2^h})=2(1-2^{-q})$.
Therefore the resulting normalized product state is exactly
\begin{equation*}
    \ket{w}
    =
    \bigotimes_{h=0}^{\log_2 q-1}
    \frac{\ket0+2^{-2^h/2}\ket1}
    {\sqrt{1+2^{-2^h}}}.
\end{equation*}
Hence $\ket{w}$ can be prepared exactly by applying one single-qubit
rotation to each address qubit, with all $\log_2 q$ rotations performed in
a single $\QNC^0$ layer.

The address register on which $\ket{w}$ is prepared is disjoint from the registers on which $U_v$ acts,
so this layer can be performed at the same time as one of the $\QNC^0$
blocks of $U_v$. Thus evaluating $U_v$ and preparing $\ket{w}$ together
remain in $\AxCQ{\sigma}$.

\item{} [$\AxCQ{2}$] Perform the addressed magnitude-bit lookup and
    apply the phase.

    For each address $j\in\{0,\ldots,q-1\}$, the lookup flips $B$ exactly
    when $J=j$ and $r_{x,j}=1$. Conjugating $B$ by Hadamards turns these
    $q$ generalized Toffoli gates into commuting generalized controlled-$Z$
    gates. We implement these controlled phases using the parity-based exact
    $\mathrm{OR}$ construction of~\cite[Theorem~1]{TT16} and \cite[Lemma~6.1]{parham2025mh}. The parity terms for all
    $q$ controlled phases are computed into separate ancillas in a Clifford
    block. Then, all corresponding phase rotations are applied in one
    $\QNC^0$ layer, and the parity ancillas are then uncomputed. Hence the
    full addressed lookup lies in $\AxCQ{2}$.

    In the same $\QNC^0$ block, apply the phase
    $e^{2\pi i\theta_x}$. Writing
    $\theta_x=\sum_{k=0}^{q-1}2^{-(k+1)}\theta_{x,k}$, this is implemented
    by applying a one-qubit phase rotation of angle
    $2\pi2^{-(k+1)}$ to each phase bit $\theta_{x,k}$.

    \item{} [$\AxCQ{\sigma}$] Apply $U_v^\dagger$ to uncompute and apply $W^\dagger$ to load the magnitude.

    Apply $U_v^\dagger$ to return the magnitude, phase, and oracle
    workspace registers exactly to zero. The address register $J$ is
    disjoint from the oracle workspace, so the single $\QNC^0$ layer
    implementing $W^\dagger$ can be scheduled in parallel with one
    arbitrary-gate block of $U_v^\dagger$.

    Hence, applying $U_v^\dagger$ and $W^\dagger$ in parallel is in $\AxCQ{\sigma}$.
\end{enumerate}
Therefore the overall circuit for this single-trial amplitude loader is in $\AxCQ{\sigma}\circ\AxCQ{2}\circ\AxCQ{\sigma}
    \subseteq
    \AxCQ{2\sigma+2}$ 

Preparing the equal superposition uses $n$ Hadamard
gates. Preparing and unpreparing $\ket{w}$ uses $O(\log q)$ one-qubit
rotations, and applying the phase $e^{2\pi i\theta_x}$ uses $O(q)$
one-qubit rotations. For a fixed address $j$, the generalized controlled phase acts on
$d=\log_2q+2$ qubits, which uses $O(d2^d)
    =
    O(q\log(q+1))$
gates and $O(2^d)=O(q)$ ancilla qubits. The addressed magnitude-bit computation uses $O(q^2\log(q+1))$
additional gates and $O(q^2)$ auxiliary qubits. All remaining operations contribute at most $O(n+q)$ gates and
$O(\log q)$ ancillary qubits. Hence, besides the calls to $U_v$ and
$U_v^\dagger$, the complete single-trial circuit uses $O\pbra{n+q^2\log(q+1)}$
additional gates and $O(q^2)$ additional ancillary qubits.
\end{proof}

\subsubsection{Boosting success probability by parallel repetition and selection}
\label{sssec:loader-repetition}

Now that we have shown how to do one trial of the amplitude loader, we will boost its success probability by running multiple copies in parallel, and then coherently selecting one successful copy.

Fix a target failure probability $p_{\mr{fail}}\in(0,1)$. Suppose a
single trial succeeds with probability at least $p\geq\gamma^2$. The
probability that all $t$ trials fail is
$(1-p)^t\leq e^{-pt}\leq e^{-\gamma^2t}$. Therefore, choosing
$t=\ceil{\gamma^{-2}\ln(1/p_{\mr{fail}})}$ ensures that the probability
that all trials fail is at most $p_{\mr{fail}}$.

For trial $i$, denote its data, address, and flag registers by $D^{(i)}$, $J^{(i)}$, and $B^{(i)}$. Recall that a single trial succeeds when $J^{(i)}$ is in the all-zero state and $B^{(i)}$ is in $\ket{1}$. We coherently compute the success indicator $\succ^{(i)}
    =
    \mathbf{1}\!\sbra{J^{(i)}=0^{\log_2 q}
        \text{ and }
        B^{(i)}=1}$ into a fresh ancilla qubit. Equivalently, $\succ^{(i)}$ is the conjunction $B^{(i)}
    \prod_{k=1}^{\log q} (1-J^{(i)}_k)$, where $J^{(i)}_k$ denotes the $k$-th address qubit. Conceptually, this is a generalized Toffoli with
$\log_2 q+1$ controls and one target, controlled on $B^{(i)}=1$ and
$J^{(i)}_k=0$ for every address qubit $k$. By~\cite{TT16} and \cite[Lemma~6.1]{parham2025mh}, this can be implemented in $\AxCQ{2}$.

Since several trials may succeed simultaneously, we will compute a selector bit indicating the first successful trial, $\sel^{(i)}=\succ^{(i)}
    \prod_{h<i}
    \pbra{1-\succ^{(h)}}$. $\sel^{(i)}=1$ exactly when trial $i$ succeeds and every earlier trial fails. Therefore at most one selector bit is equal to one,
and all selector bits are zero when all trials fail. Each selector is a conjunction of at most $t$ literals, so by~\cite[Lemma~6.5]{parham2025mh}, all selector bits can therefore be
computed in parallel by a polynomial-size nonclean $\AxCQ{4}$ circuit.

Now we need to select and transfer the data register from the successful trial into an output register. To do this, we introduce a fresh $n$-qubit output register $D^{\mr{out}}$, initialized to the all-zero state. Since at most one of the $t$ select bits is 1, we coherently swap
$D_{\mr{out}}$ with the unique selected data register $D^{(j)}$ when
$\sel^{(j)}=1$, and act trivially if all selector bits are zero. We will use the following circuit to do this.

\begin{claim}[Selected register swap]\label{clm:selected-register-swap}
    Using the data, selector, and output registers defined above, there is a circuit in $\AxCQ{4}$ that swaps $D^{\mr{out}}$ with $D^{(j)}$ when
$\sel^{(j)}=1$, while leaving all other data registers unchanged. If all
selector qubits are $0$, the circuit acts trivially.
\end{claim}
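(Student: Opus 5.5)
The plan is to decompose the controlled swap into controlled-SWAPs that commute and act on disjoint targets, and then realize each three-qubit Fredkin gate as a product of CNOTs around a Toffoli. For each trial $j$ and each qubit position $k\in[n]$, the desired operation is $\mathrm{CSWAP}(\sel^{(j)}; D^{\mr{out}}_k, D^{(j)}_k)$. Because at most one selector bit equals $1$, we can apply these gates sequentially in any order. On a basis branch where $\sel^{(j)}=1$, only the $j$th family acts. On a branch where all selectors are $0$, nothing acts. So the product of all these gates implements the claimed map. The difficulty is that different $j$ share the same output qubit $D^{\mr{out}}_k$, so they do not act on disjoint qubits and cannot be placed in one $\QNC^0$ layer directly.

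To fix this, write $\mathrm{CSWAP}(s;a,b)=\mathrm{CNOT}_{b\to a}\,\mathrm{Tof}(s,a;b)\,\mathrm{CNOT}_{b\to a}$. This is the standard Fredkin decomposition, up to which register receives the Toffoli; I will double-check the roles there. The cleaner route is to use the promise that at most one selector is set, together with the fact that $D^{\mr{out}}$ starts at $\ket{0^n}$. Then the swap is equivalent to a copy-and-clear on this input: $D^{\mr{out}}_k \mathrel{\oplus}= \sel^{(j)} D^{(j)}_k$ for all $j$, followed by $D^{(j)}_k \mathrel{\oplus}= \sel^{(j)} D^{\mr{out}}_k$.

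In the first step, the $\sum_j$ of Toffoli contributions into $D^{\mr{out}}_k$ is a parity of conjunctions. Computing each product $\sel^{(j)}D^{(j)}_k$ into a fresh ancilla uses Toffolis on disjoint qubits after fanning out copies of the selectors. That is a single $\QNC^0$ layer sandwiched between Clifford fanouts, and a Clifford block then XORs the ancillas into $D^{\mr{out}}_k$. This stays nonclean, which the setting allows, or the ancillas can be uncomputed by another layer.

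In the second step, fan out $D^{\mr{out}}_k$ (CNOT copies, Clifford) and apply Toffolis $(\sel^{(j)}, \text{copy}_{j,k}; D^{(j)}_k)$ in parallel, again one $\QNC^0$ layer. On the selected branch this zeroes $D^{(j)}$. On other branches the selector is $0$ and nothing changes.

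The remaining issue is ensuring the result is a genuine swap with the correct output state and no leftover entanglement in the discarded junk. Here the junk copies of $D^{\mr{out}}$ are entangled with the data. Uncomputing them requires an additional Clifford, plus possibly a $\QNC^0$ layer to clear the product ancillas, using the fact that after step two the branch data is known. Alternating Clifford/$\QNC^0$/Clifford/$\QNC^0$/Clifford and merging adjacent Clifford layers gives at most four $\QNC^0$ blocks, so the circuit lies in $\AxCQ{4}$.

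The main obstacle is verifying this uncomputation: the fanout copies equal $\sel$-masked data and must be erased coherently. I would first try recomputing the Toffoli products with the same layer, which is self-inverse given unchanged controls. I need to check that the controls are indeed unchanged, or reorder the steps so that they are.
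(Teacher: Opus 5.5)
Your outline matches the paper's construction: copy-and-clear using $D^{\mr{out}}=0$, fanned-out selectors, a Toffoli layer writing $\sel^{(j)}D^{(j)}_k$ into fresh ancillas, a Clifford XOR into $D^{\mr{out}}$, then a Toffoli layer controlled on $\sel^{(j)}$ and copies of $D^{\mr{out}}_k$ to zero $D^{(j)}$. The gap is the step you leave open, and neither of your proposed resolutions works.

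\begin{itemize}
\item \emph{Leaving the ancillas nonclean is not allowed.} On the selected branch the ancillas $a_{j,k}=\sel^{(j)}D^{(j)}_k$ hold a copy of the data. The state is $\sum_z\alpha_z\ket{0^n}_{D^{(j)}}\ket{z}_{a^{(j)}}\ket{z}_{D^{\mr{out}}}$. The reduced state of $D^{\mr{out}}$ is then the fully dephased $\sum_z\abs{\alpha_z}^2\ket{z}\bra{z}$, not $\ket{\psi_j}$, so the map is not a swap.
\item \emph{Re-applying the same Toffoli layer after step two fails.} Its control $D^{(j)}_k$ has just been zeroed on the selected branch, so the layer acts trivially there and $a^{(j)}$ still holds $z$.
\item \emph{Reordering the recompute before step two costs a layer.} It does clear the ancillas, but you then have three $\QNC^0$ layers. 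In this paper's convention, $\Cliff$--$\QNC^0$--$\Cliff$ is $\AxCQ{2}$; see the addressed lookup in \Cref{lem:single-loading-trial}. So $\AxCQ{4}$ means $\Cliff$--$\QNC^0$--$\Cliff$--$\QNC^0$--$\Cliff$, with only two $\QNC^0$ layers. Your count of ``at most four $\QNC^0$ blocks'' is off, and three layers give $\AxCQ{6}$. That would also break the loader's depth accounting, which needs the second half of the swap to be a single $\AxCQ{2}$ stage.
\end{itemize}

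The missing idea is to clear the temporary in the same layer as the source, using $D^{\mr{out}}$ rather than $D^{(j)}$ as the data control. In step two, fan out both $\sel^{(j)}$ and $D^{\mr{out}}_k$. Then apply in parallel $\mathrm{Tof}(\sel^{(j)},D^{\mr{out}}_k;D^{(j)}_k)$ and $\mathrm{Tof}(\sel^{(j)},D^{\mr{out}}_k;a_{j,k})$ on distinct copies. $D^{\mr{out}}$ is never a target in this layer and equals $z$ on the selected branch. So both $D^{(j)}$ and $a^{(j)}$ are zeroed, unselected trials are untouched, and the fanout copies are removed by reversing the CNOTs. This is exactly the paper's two-stage proof.

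Alternatively, you can avoid temporaries entirely in step one. Conjugate $D^{\mr{out}}_k$ by Hadamards, fan it out, and apply the commuting diagonal gates $CCZ(\sel^{(j)},D^{(j)}_k,\cdot)$ on distinct copies. This leaves no junk, because diagonal gates do not change the copied basis values.
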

\begin{proof}
For each trial $i\in[t]$, introduce an $n$-qubit temporary register
$T^{(i)}$ initialized to $\ket{0^n}$. The swap is implemented in two
$\AxCQ{2}$ stages: the first transfers the selected data register into
$D^{\mr{out}}$ via a temporary register, and the second clears the
selected source register and the temporary register.

Suppose trial $j$ is selected, so $\sel^{(j)}=1$, and write the state in
its data register as
$\ket{\psi_j}=\sum_z\alpha_z\ket{z}$.

\begin{enumerate}
    \item{} [$\AxCQ{2}$] Transfer the selected data register into
    $D^{\mr{out}}$.

    First fan out each selector qubit $\sel^{(i)}$ into $n$
    copies. Then, for every trial
    $i\in[t]$ and bit position $b\in[n]$, combine the $\AND$ of $\sel^{(i)}$ and $D^{(i)}_b$ into
    $T^{(i)}_b$. After fanout, all $nt$ Toffolis act on disjoint triples
    of qubits, so their constant-depth one- and two-qubit decompositions
    can be performed in parallel in a single $\QNC^0$ block. Reverse
    the selector fanout afterward.

    The temporary registers now contain
    $T^{(i)}_b=\sel^{(i)}D^{(i)}_b$. In particular, since $\sel^{(j)}=1$, the selected registers evolve as
    \begin{equation}
        \ket{\psi_j}_{D^{(j)}}
        \ket{0^n}_{T^{(j)}}
        \ket{0^n}_{D^{\mr{out}}}
        =
        \sum_z\alpha_z
        \ket{z}_{D^{(j)}}
        \ket{0^n}_{T^{(j)}}
        \ket{0^n}_{D^{\mr{out}}}
        \longmapsto
        \sum_z\alpha_z
        \ket{z}_{D^{(j)}}
        \ket{z}_{T^{(j)}}
        \ket{0^n}_{D^{\mr{out}}}.
        \label{eq:selected-swap-temp}
    \end{equation}
    For every $i\neq j$, $\sel^{(i)}=0$, so $T^{(i)}$ remains zero.

    Next XOR all $T^{(i)}$ registers into the initially all-zero
    $D^{\mr{out}}$ using CNOTs on corresponding qubits. Since only
    $T^{(j)}$ is nonzero, \eqref{eq:selected-swap-temp} becomes
    \begin{equation}
        \sum_z\alpha_z
        \ket{z}_{D^{(j)}}
        \ket{z}_{T^{(j)}}
        \ket{z}_{D^{\mr{out}}}.
        \label{eq:selected-swap-forward}
    \end{equation}

    The above steps have the form
    $\Cliff-\QNC^0-\Cliff$, and hence lies in $\AxCQ{2}$.

    \item{} [$\AxCQ{2}$] Clear the selected data and temporary register.

    Fan out the selector qubits and the qubits of $D^{\mr{out}}$ into
    enough copies that the following gates act on disjoint
    triples. For every $i\in[t]$ and $b\in[n]$, apply in parallel one
    Toffoli controlled on $\sel^{(i)}$ and $D^{\mr{out}}_b$, targeting
    $D^{(i)}_b$, and a second Toffoli with the same logical controls
    targeting $T^{(i)}_b$.

    Since $\sel^{(j)}=1$, these gates XOR the contents of
    $D^{\mr{out}}$ into both $D^{(j)}$ and $T^{(j)}$. Thus
    \eqref{eq:selected-swap-forward} transforms as
    \begin{equation}
        \sum_z\alpha_z
        \ket{z}_{D^{(j)}}
        \ket{z}_{T^{(j)}}
        \ket{z}_{D^{\mr{out}}}
        \longmapsto
        \sum_z\alpha_z
        \ket{0^n}_{D^{(j)}}
        \ket{0^n}_{T^{(j)}}
        \ket{z}_{D^{\mr{out}}} =
        \ket{0^n}_{D^{(j)}}
        \ket{0^n}_{T^{(j)}}
        \ket{\psi_j}_{D^{\mr{out}}}.
    \end{equation}
    
    All unselected data registers are unchanged because their selector qubits are zero.

    Finally, reverse the selector and output fanout circuits, returning
    all fanout workspace to zero. This stage is again in $\AxCQ{2}$.
\end{enumerate}

If all selector qubits are zero, no Toffoli acts nontrivially, so the
circuit acts trivially.

Both above stages have the form
$\Cliff-\QNC^0-\Cliff$. When they are composed, the adjacent Clifford
blocks merge into a single Clifford circuit. Hence the complete
selected-register swap circuit lies in $\AxCQ{4}$.
\end{proof}

\subsubsection{Full amplitude-loader implementation}
\label{sssec:full-amplitude-loader}

We now combine the single-trial loading circuit with the parallel
repetition and coherent-selection procedure above to obtain the full
amplitude loader of \Cref{def:amplitude-loader}. Given coherent access to a collection of amplitudes
$\{v_x\}_{x\in\zo^n}$, the loader prepares a state arbitrarily close to
the normalized state proportional to $\sum_x v_x\ket{x}$.

Let $N=2^n$, and let $U_v$ be an amplitude oracle so that on input $x$ it coherently outputs
the polar encoding of an amplitude
$v_x=r_xe^{2\pi i\theta_x}$, with $r_x$ and $\theta_x$ each represented
by a $q$-bit binary fraction. Let $v=(v_x)_{x\in\zo^n}$. 

\begin{lemma}[Amplitude loading in $\MH$]
\label{lem:amplitude-loading-in-mh}
Suppose we are given access to an amplitude oracle $U_v$ for
$v=(v_x)_{x\in\zo^n}$ and its inverse $U_v^\dagger$, where $U_v,U_v^\dagger\in\AxCQ{\sigma}$ for even $\sigma\geq10$.
Assume that $\max_x\abs{v_x}\leq1$ and
$\norm{v}\geq\gamma\sqrt N$ for some balance parameter
$\gamma\in(0,1]$, and define
$\ket{\wh v}=\frac{1}{\norm{v}}\sum_x v_x\ket{x}$. 

For any error parameter $\delta_{\mr{load}}\in(0,1)$, there exists an amplitude loader in $\AxCQ{2\sigma+4}$ whose reduced output state $\rho_v$ satisfies
\begin{equation}
    \frac{1}{2}
    \tnorm{
        \rho_v-\ket{\wh v}\bra{\wh v}
    }
    \leq
    \delta_{\mr{load}}.
\end{equation}

The construction uses
$t=\ceil{\gamma^{-2}\ln(1/\delta_{\mr{load}})}$ parallel repetitions
of the single trial from \Cref{lem:single-loading-trial}, and
hence $t$ parallel calls each to $U_v$ and $U_v^\dagger$, together with
$O\pbra{tq^2\log(q+1)+t^2\log(t+1)+nt}$ additional gates and
$O\pbra{tq^2+t^2\log(t+1)+nt}$ additional ancilla qubits.
\end{lemma}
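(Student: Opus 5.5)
The plan is to run $t=\ceil{\gamma^{-2}\ln(1/\delta_{\mr{load}})}$ independent copies of the single-trial circuit of \Cref{lem:single-loading-trial} in parallel on disjoint registers. We then compute the success indicators $\succ^{(i)}$ and the first-success selectors $\sel^{(i)}$ of \Cref{sssec:loader-repetition}, apply the selected-register swap of \Cref{clm:selected-register-swap} into a fresh register $D^{\mr{out}}$, and trace out everything else.

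\textbf{Correctness.} On each trial's registers, the single-trial circuit outputs a superposition of the successful component \eqref{eq:loader-successful-component}, which carries exactly $\ket{\wh v}$ on $D^{(i)}$, and components orthogonal to it on $J^{(i)},B^{(i)}$. Since the trials are independent, measuring all the $\succ^{(i)}$ in principle gives a mixture. On every branch where some trial succeeds, the first successful trial $j$ is swapped out, so $D^{\mr{out}}$ holds exactly $\ket{\wh v}$. The selectors and the other trials' registers are traced out and act only as classical-like labels, because $\sel$ is a deterministic function of the orthogonal flag branches. The all-fail branch has probability at most $(1-p)^t\le e^{-\gamma^2 t}\le\delta_{\mr{load}}$, using $p\geq\gamma^2$. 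Hence $\rho_v=(1-p_f)\ket{\wh v}\bra{\wh v}+p_f\tau$ with $p_f\le\delta_{\mr{load}}$, and the trace distance is at most $p_f\le\delta_{\mr{load}}$. The hypothesis $\max_x\abs{v_x}\le1$ is only needed so that the $q$-bit polar encoding makes sense.

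\textbf{Circuit depth.} This is the main obstacle: naive concatenation gives $\AxCQ{2\sigma+2}$, then $\AxCQ{2}$ for the success bits, then $\AxCQ{4}$ for the selectors, then $\AxCQ{4}$ for the swap, which is far more than $2\sigma+4$. The plan has three parts.
\begin{enumerate}
\item Schedule $W^\dagger$ in the \emph{first} $\QNC^0$ block of $U_v^\dagger$. This fixes the contents of $J^{(i)}$ and $B^{(i)}$ early, since $B$ is fixed after the lookup.
\item The success-indicator computation ($\AxCQ{2}$) and the selector computation ($\AxCQ{4}$) touch only $J^{(i)}$, $B^{(i)}$ and fresh ancillas, all disjoint from the registers $U_v^\dagger$ acts on. They can therefore be interleaved, Clifford with Clifford and $\QNC^0$ with $\QNC^0$, with the remaining blocks of $U_v^\dagger$. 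This is where $\sigma\ge10$ is used: after the block hosting $W^\dagger$ there are enough alternating blocks of $U_v^\dagger$ to absorb these $6$ levels. I would check the parity alignment carefully, padding with identity blocks if needed.
\item The swap must wait until $U_v^\dagger$ finishes, because $U_v$ reads $D^{(i)}$ as its input. Its $\Cliff$--$\QNC^0$--$\Cliff$--$\QNC^0$--$\Cliff$ structure has its leading Clifford merged into the final Clifford block of $U_v^\dagger$. This adds two arbitrary-gate blocks, giving $\AxCQ{2\sigma+4}$ overall.
\end{enumerate}

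\textbf{Resource counts.} These add directly. The $t$ trials contribute $t$ calls to each of $U_v$ and $U_v^\dagger$, plus $O(t(n+q^2\log(q+1)))$ gates and $O(tq^2)$ ancillas. The success bits are $t$ generalized Toffolis on $O(\log q)$ qubits, which is lower order. The selectors are $t$ conjunctions of at most $t$ literals via \cite[Lemma~6.5]{parham2025mh}, costing $O(t^2\log(t+1))$. The swap uses $O(nt)$ Toffolis, fanouts and temporary qubits. Together these give the stated bounds.
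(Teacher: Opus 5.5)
Your correctness argument, your resource counts, and the idea of hiding $W^\dagger$, the success indicators and the selectors inside the $\QNC^0$ blocks of $U_v^\dagger$ all match the paper. The gap is the final depth count in your step 3.

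Under the composition rule you yourself use, every $\QNC^0$ block together with its flanking Clifford layers adds $2$ to the index. (This is how the lookup's single block turns two copies of $\AxCQ{\sigma}$ into $\AxCQ{2\sigma+2}$.) The full selected-register swap is $\AxCQ{4}$, with two $\QNC^0$ blocks. Appending all of it after $U_v^\dagger$ therefore gives $\AxCQ{\sigma}\circ\AxCQ{2}\circ\AxCQ{\sigma}\circ\AxCQ{4}\subseteq\AxCQ{2\sigma+6}$, not $\AxCQ{2\sigma+4}$. A symptom is that your schedule hides only four $\QNC^0$ layers in $U_v^\dagger$: one for $W^\dagger$, one for the success bits, two for the selectors. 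That would need only $\sigma\geq 8$; the hypothesis $\sigma\geq10$ exists to hide a fifth.

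The missing idea is how to start the swap while $U_v^\dagger$ is still running, which is exactly the obstruction you name ($U_v^\dagger$ is using $D^{(i)}$). The paper resolves it with a copy register:
\begin{enumerate}
\item Right after the lookup, CNOT each $D^{(i)}$ into a fresh $n$-qubit register $C^{(i)}$. This is Clifford and merges with the adjacent Clifford layer.
\item Run the first (transfer) half of \Cref{clm:selected-register-swap} from $C^{(i)}$ instead of $D^{(i)}$: compute $T^{(i)}_b=\sel^{(i)}C^{(i)}_b$ and XOR the $T^{(i)}$ into $D^{\mr{out}}$. Since $C^{(i)}$ agrees with $D^{(i)}$ on every computational-basis branch, this has the same effect. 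It acts on registers disjoint from $U_v^\dagger$, so it becomes the fifth absorbed $\QNC^0$ layer.
\item After $U_v^\dagger$ finishes, the basis label $x$ in $D^{(i)}$ is unchanged, so erase $C^{(i)}$ by CNOTs from $D^{(i)}$ (again Clifford).
\item Only the clearing half of the swap remains, one $\QNC^0$ block ($\AxCQ{2}$). The total is $\sigma+2+\sigma+2=2\sigma+4$.
\end{enumerate}
The clearing half cannot simply be dropped. Without it, $D^{\mr{out}}$ stays correlated with $D^{(j)}$ in the computational basis, and the output is dephased rather than $\ket{\wh v}$. The copy registers cost only $O(nt)$ extra gates and qubits, so your resource bounds are unaffected.
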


\begin{proof}
Recall that $p_{\mr{fail}}$ denotes the target upper bound on the
probability that all parallel trials fail. By
\Cref{sssec:loader-repetition}, choosing
$t=\ceil{\gamma^{-2}\ln(1/p_{\mr{fail}})}$ ensures that the all-trails-fail probability
is at most $p_{\mr{fail}}$.

We run $t$ copies of the single-trial construction of
\Cref{lem:single-loading-trial} in parallel. On success, each trial
prepares the target state $\ket{\wh v}$ in its data register. 
Whenever at least one trial succeeds, the first successful trial is
selected and \Cref{clm:selected-register-swap} transfers its data
register to
$D^{\mr{out}}$. Hence, after
tracing out all other registers, the output state has the form
\begin{equation}
    \rho_v
    =
    \pbra{1-(1-p)^t}\ket{\wh v}\bra{\wh v}
    +
    (1-p)^t\rho_{\mr{bad}},
\end{equation}
where $\rho_{\mr{bad}}$ is the reduced output state conditioned on all
trials failing. Hence
\begin{equation}
    \frac{1}{2}
    \tnorm{\rho_v-\ket{\wh v}\bra{\wh v}} =\frac{1}{2}\tnorm{(1-p)^t
    \pbra{
        \rho_{\mr{bad}}
        -
        \ket{\wh v}\bra{\wh v}
    }} =
    \frac{1}{2}(1-p)^t
    \tnorm{\rho_{\mr{bad}}-\ket{\wh v}\bra{\wh v}} \leq
    (1-p)^t
    \leq
    p_{\mr{fail}}.\label{eq:loader-approx-guarantee}
\end{equation}
Setting $p_{\mr{fail}}=\delta_{\mr{load}}$ achieves the approximation
guarantee of \Cref{eq:loader-approx-guarantee}.

\paragraph{Magic-hierarchy level.}
Next we give the full low-depth implementation of the amplitude loader
in $\MH$ and show that it lies in $\AxCQ{2\sigma+4}$. 

Recall that each single trial lies in $\AxCQ{2\sigma+2}$. Running the
$t$ trials in parallel does not increase the $\MH$ depth since they act on disjoint registers. Computing the
success indicators is in $\AxCQ{2}$, computing the selector bits is in
$\AxCQ{4}$, and performing the selected-register swap is in
$\AxCQ{4}$. Naively composing these procedures sequentially would
therefore give an $\AxCQ{2\sigma+12}$ implementation.

We reduce this depth by scheduling part of the selection and transfer
procedure in parallel with the $U_v^\dagger$ step in each trial. After
copying the computational-basis contents of each data register into a
fresh register, the registers used for the success-indicator
computation, selector computation, and first half of the
selected-register swap are disjoint from those on which
$U_v^\dagger$ acts.

As in the single-trial construction, $W^\dagger$ occupies one
$\QNC^0$ layer of $U_v^\dagger$. The success-indicator computation,
selector computation, and first half of the selected-register swap
require one, two, and one additional $\QNC^0$ layers, respectively, together with their intervening
Clifford circuits.

We schedule these four $\QNC^0$ layers, together with the one layer for
$W^\dagger$, into the first five successive $\QNC^0$ layers of
$U_v^\dagger$. The Clifford parts of these constructions are scheduled
in parallel with the existing Clifford layers between the corresponding
$\QNC^0$ layers of $U_v^\dagger$. Hence, when $\sigma\geq10$, these
steps fit within the existing $\AxCQ{\sigma}$ implementation of
$U_v^\dagger$. Therefore, the parallelized trials, together with the
success-indicator computation, selector computation, and first half of
the selected-register swap, remain within $\AxCQ{2\sigma+2}$.

The only additional depth comes from the second half of the
selected-register swap of \Cref{clm:selected-register-swap}, which
clears the selected data and temporary registers in $\AxCQ{2}$. Together with the $\AxCQ{2\sigma+2}$ above, the overall depth of the amplitude loader is thus $\AxCQ{2\sigma+4}$.

For completeness, we list out the full $\MH$ implementation below:

\begin{enumerate}
    \item{} [$\AxCQ{\sigma}$] Run the forward part of all $t$ trials in
    parallel.

    This consists of preparing the equal superpositions and address states
    and applying all copies of $U_v$. Since the trials act on disjoint
    registers, this remains in $\AxCQ{\sigma}$.

    \item{} [$\AxCQ{2}$] Perform the addressed magnitude lookup for all
    trials in parallel.

    By \Cref{lem:single-loading-trial}, the lookup and phase application
    for one trial lie in $\AxCQ{2}$. Performing the corresponding gates
    for all $t$ trials in parallel does not increase the
    magic-hierarchy level.

    \item{} [$\Cliff$] Copy each data register $D^{(i)}$ into a fresh
    $n$-qubit register $C^{(i)}$ using bitwise CNOTs.

    The copy registers $C^{(i)}$ store only the computational-basis labels of
the data registers. The inverse oracle $U_v^\dagger$ leaves each label
$x$ unchanged while erasing its oracle workspace. Thus the operations on
$C^{(i)}$ can be carried out while $U_v^\dagger$ acts on the original
trial registers. After $U_v^\dagger$ finishes, $D^{(i)}$ and $C^{(i)}$
still contain the same basis label, so $C^{(i)}$ can later be erased by
bitwise CNOTs from $D^{(i)}$.

    \item{} [$\AxCQ{\sigma}$] Apply all copies of $U_v^\dagger$ in
    parallel while carrying out the coherent selection procedure.

    During the arbitrary-gate layers of $U_v^\dagger$, we also perform
    $W^\dagger$, compute the success indicators, compute the selectors,
    and perform the first $\AxCQ{2}$ stage of
    \Cref{clm:selected-register-swap} using the copied registers
    $C^{(i)}$. These operations require, respectively, one, one, two,
    and one arbitrary-gate layers, for a total of five. Since
    $\sigma\geq10$, these five layers can be absorbed into
    $U_v^\dagger$, while adjacent Clifford blocks are merged.

    \item{} [$\Cliff$] After $U_v^\dagger$ has erased the oracle
    workspace, erase each copy register $C^{(i)}$ by applying bitwise
    CNOTs from $D^{(i)}$ to $C^{(i)}$.

    \item{} [$\AxCQ{2}$] Perform the second stage of
    \Cref{clm:selected-register-swap}, clearing the selected source and
    temporary registers.
\end{enumerate}

Each of the $t$ parallel trials makes one call each to $U_v$ and
$U_v^\dagger$. By \Cref{lem:single-loading-trial}, the trials together
use $O\pbra{tq^2\log(q+1)+nt}$ additional gates. 
Computing all success
indicators costs $O\pbra{tq\log(q+1)}$ gates, 
and computing all selector bits
uses $O\pbra{t^2\log(t+1)}$ gates. Finally, copying and erasing the data
registers and performing the selected-register swap use $O(nt)$
additional gates. Thus the total number of additional gates is
$O\pbra{tq^2\log(q+1)+t^2\log(t+1)+nt}$. 

By \Cref{lem:single-loading-trial}, each trial uses $O(q^2)$
additional ancilla qubits. Since the $t$ trials are run in parallel on
disjoint registers, they use $O(tq^2)$ ancillas in total. For each
trial, computing the success indicator uses
$2^{\log q+2}-1=O(q)$ parity ancillas
since each indicator acts
on $\log q+2$ qubits. Thus all $t$ success indicators use $O(tq)$
ancillas. Each selector computation uses at most
$O\pbra{t\log(t+1)}$ ancillas, 
so the $t$ selector computations use
$O\pbra{t^2\log(t+1)}$ ancillas in total. The copied data registers, temporary
registers, and fanout workspace used by the selected-register swap
require $O(nt)$ additional qubits. Hence the total number of additional
ancilla qubits is
$O\pbra{tq^2+t^2\log(t+1)+nt}$.
\end{proof}

\subsection{Low-depth preparation of the \PRAL target states}
\label{ssec:end-to-end-preparation}

We now combine the finite Poisson sampler from~\Cref{ssec:complex-gaussian-sampler} with the amplitude loader from~\Cref{ssec:amplitude-loading-in-mh} to obtain low-depth preparation circuits for the target states $\ket{\psi_H}$. We first show this for an arbitrary function $H$ with a low-depth coherent evaluator, and then instantiate $H$ with the pseudorandom tape functions derived from the subexponentially secure PRF family underlying the \PRAL family.

Recall that for a function $H:\zo^n\to\zo^R$, we set
$v_x=\Samp_{\mr{fin}}(H(x))$ and, for a fixed balance threshold
$\gamma$, define the target state $\ket{\psi_H}$ to be the normalized
amplitude state proportional to $\sum_x v_x\ket{x}$ when
$\norm{v}\geq\gamma\sqrt{N}$, and $\ket{0^n}$ otherwise, as in
\Cref{def:target-state-for-H}.

\begin{corollary}[Low-depth preparation of the target states]
\label{cor:end-to-end-state-preparation}
Fix $H:\zo^n\to\zo^R$ and let $\ket{\psi_H}$ be the target state
defined above.
Suppose $H$ has a coherent evaluation circuit
$U_H,U_H^\dagger\in\AxCQ{k_H}$ for even $k_H\geq2$.

Then, for any error parameter
$\delta_{\mr{load}}\in(0,1)$, there exists a circuit in
$\AxCQ{2k_H+20}$ whose reduced output state $\rho_H$ satisfies
\begin{equation}
    \frac{1}{2}
    \tnorm{
        \rho_H-\ket{\psi_H}\bra{\psi_H}
    }
    \leq
    \delta_{\mr{load}}.
\end{equation}
If the coherent evaluator for $H$ has polynomial size and uses
polynomially many ancilla qubits, and the sampler and loader parameters
are polynomially bounded, then the preparation circuit has polynomial
size and uses polynomially many ancilla qubits.
\end{corollary}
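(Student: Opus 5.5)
The plan is to build the amplitude oracle $U_v$ by composing the coherent evaluator for $H$ with the coherent sampler, and then to feed $U_v$ into the amplitude loader of \Cref{lem:amplitude-loading-in-mh}.

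We split into two cases according to \Cref{def:target-state-for-H}. If $\norm{v}<\gamma\sqrt N$, the target is $\ket{0^n}$, which the empty circuit prepares exactly. This circuit trivially lies in $\AxCQ{2k_H+20}$, and no evaluation of the norm is needed because the circuit may depend on $H$. So the substantive case is $\norm{v}\geq\gamma\sqrt N$, where $\ket{\psi_H}=\ket{\wh v}$.

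\textbf{Step 1: the amplitude oracle.} Define $U_v=U_{\mathrm{samp}}\circ U_H$. It maps $\ket{x}\ket{0}\mapsto\ket{x}\ket{H(x)}\ket{\mr{junk}_{H(x)}}\ket{r_x,\theta_x}$, using the exact nonclean implementation of \Cref{lem:finite-poisson-coherent-implementation} applied to the tape register holding $H(x)$. The junk may depend on $x$, which is permitted by \Cref{def:amplitude-loader}, and $U_v^\dagger=U_H^\dagger\circ U_{\mathrm{samp}}^\dagger$ exactly erases everything.

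For the level count, $U_H\in\AxCQ{k_H}$ and $U_{\mathrm{samp}}\in\AxCQ{8}$. Each ends in a Clifford block and the other begins with one, so the adjacent Clifford blocks merge and the composition lies in $\AxCQ{k_H+8}$. The same holds for the inverse, since inverting reverses the block pattern. Set $\sigma=k_H+8$. This is even, and $\sigma\geq10$ because $k_H\geq2$.

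\textbf{Step 2: the loader hypotheses.} By \Cref{lem:poisson-minimum-gaussian-sampler}, every output satisfies $\abs{v_x}<1$. The balance condition $\norm{v}\geq\gamma\sqrt N$ holds in the case under consideration. Applying \Cref{lem:amplitude-loading-in-mh} then gives a circuit in $\AxCQ{2\sigma+4}=\AxCQ{2k_H+20}$ whose reduced output is $\delta_{\mr{load}}$-close to $\ket{\wh v}\bra{\wh v}$, as required.

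\textbf{Step 3: size.} The loader makes $t=\lceil\gamma^{-2}\ln(1/\delta_{\mr{load}})\rceil$ parallel calls to $U_v$ and $U_v^\dagger$, plus polynomial overhead in $t$, $q$, and $n$. The sampler has size polynomial in $K_{\max}$ and $q$, and $U_H$ is polynomial by assumption. So whenever $\gamma^{-2}$, $\log(1/\delta_{\mr{load}})$, $q$, and $K_{\max}$ are polynomially bounded, the total size and ancilla count are polynomial.

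\textbf{Main obstacle.} The step I expect to need care is the level bookkeeping when merging blocks in Step 1. I need the convention that $\AxCQ{k}$ alternates Clifford and $\QNC^0$ blocks so that composition adds levels without an extra Clifford layer. I also need to check that the register holding $H(x)$ serves as the sampler's input tape with no intervening copying, which would cost additional layers. If the boundary Clifford blocks did not merge, the count would overshoot by one or two levels.
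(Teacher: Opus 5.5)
Your proposal is correct and follows the paper's proof essentially step for step. You compose the evaluator for $H$ with the nonclean sampler to get $U_v\in\AxCQ{k_H+8}$ by merging adjacent Clifford layers, apply \Cref{lem:amplitude-loading-in-mh} with $\sigma=k_H+8$ to land in $\AxCQ{2k_H+20}$, handle the case $\norm{v}<\gamma\sqrt N$ with the trivial circuit (no norm test is needed, since the circuit may depend on $H$), and count size as $t$ parallel copies of evaluator-plus-sampler plus the loader overhead. Your explicit checks that $\sigma$ is even and that $\abs{v_x}<1$ verify hypotheses of the loader lemma that the paper uses only implicitly.
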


\begin{proof}
Let $v_x=\Samp_{\mr{fin}}(H(x))$. First coherently evaluate $H(x)$ and
then apply the nonclean finite Poisson sampler of
\Cref{lem:poisson-minimum-gaussian-sampler} to the resulting tape.
This gives an amplitude oracle
\begin{equation}
    U_v\ket{x}\ket{0}
    =
    \ket{x}\ket{\mr{junk}_x}\ket{\enc(v_x)}.
\end{equation}
The evaluator for $H$ lies in $\AxCQ{k_H}$ and the sampler lies in
$\AxCQ{8}$. Their adjacent Clifford layers merge, so
$U_v\in\AxCQ{k_H+8}$. Reversing the construction gives
$U_v^\dagger$ at the same level.

If $\norm{v}\geq\gamma\sqrt N$, then
$\ket{\psi_H}=\ket{\wh v}$ by
\Cref{def:target-state-for-H}. Since $k_H+8\geq10$,
\Cref{lem:amplitude-loading-in-mh} applies with
$\sigma=k_H+8$, giving the stated approximation in
$\AxCQ{2(k_H+8)+4}=\AxCQ{2k_H+20}$. If $\norm{v}<\gamma\sqrt N$, then
$\ket{\psi_H}=\ket{0^n}$ and a trivial Clifford circuit prepares it
exactly. Since this is an existence statement for each fixed $H$, no
circuit for testing $\norm{v}$ is required.

Let $S_{\mr{samp}}$ and $A_{\mr{samp}}$ denote the gate and ancilla counts of one instance of the finite Poisson sampler, and let $S_{\mr{load}}$ and $A_{\mr{load}}$ denote the additional gate and ancilla counts of the amplitude loader, excluding its amplitude-oracle calls. If the coherent evaluator for $H$ has size $S_H$ and uses $A_H$ ancilla qubits, then the end-to-end construction uses
$2t(S_H+S_{\mr{samp}})+S_{\mr{load}}$ gates and
$t(A_H+A_{\mr{samp}})+A_{\mr{load}}$ ancilla qubits, where
$t=\ceil{\gamma^{-2}\ln(1/\delta_{\mr{load}})}$. By \Cref{lem:poisson-minimum-gaussian-sampler}, $S_{\mr{samp}}$ and
$A_{\mr{samp}}$ are polynomial in $q$ and $K_{\max}$, while
\Cref{lem:amplitude-loading-in-mh} gives polynomial bounds on
$S_{\mr{load}}$ and $A_{\mr{load}}$ in $n$, $q$, and $t$. Thus, if
$S_H$ and $A_H$ are polynomially bounded, $q$ and $K_{\max}$ are
polynomially bounded, and $\gamma^{-1}$ and
$\log(1/\delta_{\mr{load}})$ are polynomially bounded, then the entire
preparation circuit has polynomial size and uses polynomially many
ancilla qubits.
\end{proof}

We now specialize \Cref{cor:end-to-end-state-preparation} to the pseudorandom tape functions used in \Cref{thm:PRAL-security} to establish implementability of the \PRAL family.

\begin{corollary}[Low-depth implementation of the \PRAL family]
\label{cor:secure-ensemble-implementation}
Suppose the subexponentially secure PRF family $\mathcal G$ used in
\Cref{thm:PRAL-security} is computable by polynomial-size depth-$d_F$
$\TCZ$ circuits. Then the resulting \PRAL family is efficiently
$e^{-n}$-approximately $\MH_{8d_F+20}$-preparable.
\end{corollary}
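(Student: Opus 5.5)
The plan is to instantiate \Cref{cor:end-to-end-state-preparation} with $H=F_k$, the tape function from \Cref{thm:PRAL-security}, and $\delta_{\mr{load}}=e^{-n}$. First we need a coherent evaluator for $F_k(x)=\operatorname{prefix}_R(G_{m,k}(0^{m-n}x))$ at a known magic-hierarchy level. The padding bits and the key are fixed constants that can be hardwired, since the preparation circuit may depend on $\ket{\psi}$ and hence on $k$. Truncating to the first $R$ output bits only discards output wires. So $F_k$ is computed by a polynomial-size depth-$d_F$ threshold circuit, because $m$ is polynomial in $n,\lambda,T$.

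Next we convert this threshold circuit into a coherent evaluator. I would invoke the simulation from \cite{parham2025mh} showing that a depth-$d$ polynomial-size $\TCZ$ circuit has a nonclean coherent implementation in $\AxCQ{4d}$: each threshold layer is realized by one $\AxCQ{4}$ block, with adjacent Clifford layers merged. This gives $U_H\in\AxCQ{k_H}$ with $k_H=4d_F$, which is even and at least $2$. The inverse $U_H^\dagger$ is obtained by reversing the circuit and lies at the same level.

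The main obstacle is verifying this per-layer cost. If the available lemma only gives each threshold gate as a parity of conjunctions computed in $\AxCQ{4}$, as in \cite[Lemma~6.5]{parham2025mh}, I would check two things. First, each layer's threshold gates, fed by fanned-out copies of the previous layer's outputs, can be evaluated in parallel within one $\AxCQ{4}$ block. Second, the junk registers do not obstruct later use, since \Cref{lem:amplitude-loading-in-mh} uncomputes with $U_v^\dagger$ exactly.

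With $k_H=4d_F$, \Cref{cor:end-to-end-state-preparation} gives a circuit in $\AxCQ{8d_F+20}\subseteq\MH_{8d_F+20}$ whose output is within trace distance $e^{-n}$ of $\ket{\psi_{F_k}}$.

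It remains to check the size bound. We have $t=\ceil{\gamma^{-2}n}$ with $\gamma^{-2}=4\mu$. By the compatibility paragraph, $\mu$, $K_{\max}$, $q$ and $R$ are all polynomial in $n$, $T$ and $\lambda$, because $\log(1/\eps)=\lambda+3$. The PRF circuit size is polynomial in $m$, which is itself polynomial in $n,\lambda,T$. Hence the total size is bounded by a single polynomial $q(n,\lambda,T)$ that is uniform over keys $k$. This is exactly the requirement of \Cref{def:approximate-implementability}.
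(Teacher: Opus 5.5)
Your proposal is correct and follows the paper's proof essentially step for step: hardwire the key and padding, use the nonclean $\TCZ$ simulation of \cite[Lemma~6.3]{parham2025mh} to get $k_H=4d_F$, apply \Cref{cor:end-to-end-state-preparation} with $\delta_{\mr{load}}=e^{-n}$ so that $t=\lceil 4\mu n\rceil$, and note that all parameters are polynomial uniformly over keys. One small caution: your fallback of realizing each threshold gate as a parity of conjunctions via the AND construction of \cite[Lemma~6.5]{parham2025mh} cannot give polynomial size, since majority is not in polynomial-size $\mathsf{AC}^0[\oplus]$ (Razborov--Smolensky), so this step really rests on the direct $\TCZ$ simulation, which is exactly what the paper cites.
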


\begin{proof}
For a fixed key $k$, recall that the tape function used in the
construction is
$F_k(x)=\operatorname{prefix}_R
\pbra{G_{m,k}(0^{m-n}x)}$.
Fixing the key, padding the input by zeros, and retaining only the first
$R$ output bits do not increase the $\TCZ$ depth. By the nonclean
simulation of depth-$d_F$ $\TCZ$ circuits in $\AxCQ{4d_F}$
from~\cite[Lemma~6.3]{parham2025mh}, each $F_k$ therefore has a
polynomial-size coherent evaluator in $\AxCQ{4d_F}$. For the sampler parameters of \Cref{thm:PRAL-security},
$\mu,K_{\max}=O(n+\lambda)$,
$q=O(n+\lambda+T)$, and
$\gamma=1/(2\sqrt{\mu})$. Set
$\delta_{\mr{load}}=e^{-n}$. Then
$t=\ceil{\gamma^{-2}\ln(1/\delta_{\mr{load}})}
=\ceil{4\mu n}=O(n(n+\lambda))$, so all resource parameters in
\Cref{cor:end-to-end-state-preparation} remain polynomially bounded. Applying \Cref{cor:end-to-end-state-preparation} with
$H=F_k$ and $k_H=4d_F$, every state in the support of the secure
ensemble has a polynomial-size preparation circuit using polynomially
many ancilla qubits in
$\AxCQ{2(4d_F)+20}=\AxCQ{8d_F+20}$, whose reduced output is within
trace distance $e^{-n}$ of the corresponding target state. Hence the
ensemble is efficiently $e^{-n}$-approximately
$\MH_{8d_F+20}$-preparable.
\end{proof}

Combining \Cref{cor:secure-ensemble-implementation} with
\Cref{thm:quantum-natural-proofs-barrier}, the quantum natural-proofs
barrier holds against polynomial-size $\MH_{8d_F+20}$ circuits.

\begin{theorem}[Main theorem: natural proofs barrier against MH]
\label{thm:main:natpfsMH:body}
Fix a constant $d_F\geq1$. Suppose there exists a classically subexponentially secure PRF family $\mathcal G$ in the sense of \Cref{def:subexp-prf}, whose functions are computable by polynomial-size depth-$d_F$ threshold circuits.

Then there exists a \PRAL family that is efficiently $e^{-n}$-approximately $\MH_{8d_F+20}$-preparable. Consequently, no natural promise property is $e^{-n}$-useful against $(\MH_{8d_F+20})_{\poly}$.
\end{theorem}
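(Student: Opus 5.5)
The plan is to assemble the main theorem from three results already established: \Cref{thm:PRAL-security} for security, \Cref{cor:secure-ensemble-implementation} for preparability, and \Cref{thm:quantum-natural-proofs-barrier} for the barrier.

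\textbf{Step 1: build the \PRAL family.} Take the assumed classically subexponentially secure PRF family $\mathcal G$ and the concrete finite Poisson sampler of \Cref{sssec:finite-gaussian-sampler}. The paragraph on compatibility with security parameters at the end of \Cref{sssec:gaussian-coherent-implementation} establishes two facts for the choices $\xi=\eps/(16CN^{3/2}2^T)$ and $\eta_{\mathrm{samp}}=\eps/(4N)$:
\begin{itemize}
\item these choices meet the hypotheses of \Cref{cor:random-function-security-parameters}, since $s=\sqrt\mu\geq 1\geq\frac12$;
\item the tape length and classical evaluation time are polynomial in $n$, $T$ and $\log(1/\eps)$.
\end{itemize}
So the hypotheses of \Cref{thm:PRAL-security} hold. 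Instantiating it with $\eps=2^{-\lambda-3}$ and $m=(n+\lambda+T+R+4)^d$ yields a family $\{D_{n,\lambda,T}\}$ satisfying \Cref{def:PRAL}. Here $D_{n,\lambda,T}$ is the distribution of $\ket{\psi_{F_k}}$ with $F_k(x)=\operatorname{prefix}_R(G_{m,k}(0^{m-n}x))$.

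\textbf{Step 2: preparability.} Because each $G_{m,k}$ is computable by polynomial-size depth-$d_F$ threshold circuits, \Cref{cor:secure-ensemble-implementation} applies to this same family. For every $n,\lambda,T$ and every state in the support, it gives a preparation circuit in $\AxCQ{8d_F+20}\subseteq\MH_{8d_F+20}$ whose output is within trace distance $e^{-n}$.

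The uniformity requirement of \Cref{def:approximate-implementability} must be checked: a single polynomial $q(n,\lambda,T)$ has to bound the circuit sizes. This holds because all parameters are polynomial in $(n,\lambda,T)$ uniformly over keys. Specifically, $\mu$, $K_{\max}$, $q$, $R$ and $m$ are polynomial, $t=O(n(n+\lambda))$, and the $\TCZ$ circuit size is polynomial in $m$. The circuit for each key differs only in hardwired key bits.

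One more point needs a sentence. The $\ket{0^n}$ case of \Cref{def:target-state-for-H} is trivially preparable, so it does not affect the size bound.

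\textbf{Step 3: the barrier.} Applying \Cref{thm:quantum-natural-proofs-barrier} with $\calC=\MH_{8d_F+20}$ and $\eps(n)=e^{-n}$ shows that no natural promise property is $e^{-n}$-useful against $(\MH_{8d_F+20})_{\poly}$.

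\textbf{Main obstacle.} The step needing care is the bookkeeping that parameters chosen for security in Step 1 coincide with those used in Step 2, so that one family is simultaneously secure and preparable. One also has to check that \Cref{thm:quantum-natural-proofs-barrier} only uses polynomially bounded $\lambda(n)=\lceil cn\rceil+1$ and $T(n)=\tau(n)$, which keeps the preparation circuits polynomial in $n$. Both checks are routine given the stated corollaries.
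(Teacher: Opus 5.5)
Your proposal is correct and follows the paper's proof. The paper also chains three results: \Cref{thm:PRAL-security} instantiated with the finite Poisson sampler and its compatibility parameters, then \Cref{cor:secure-ensemble-implementation} for the same target distributions, then \Cref{thm:quantum-natural-proofs-barrier}. Your additional bookkeeping (the uniform polynomial size bound over keys and the trivial $\ket{0^n}$ branch) only makes explicit checks the paper carries out inside those cited results.
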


\begin{proof}
Use the finite Poisson sampler of \Cref{lem:poisson-minimum-gaussian-sampler}, with the security parameters verified immediately afterward. This sampler satisfies the assumptions of \Cref{thm:PRAL-security}, so the resulting target-state distributions form a \PRAL family.

For these same target distributions, \Cref{cor:secure-ensemble-implementation} gives efficient $e^{-n}$-approximate preparation in $\MH_{8d_F+20}$. The claimed obstruction then follows from \Cref{thm:quantum-natural-proofs-barrier}.
\end{proof}
 \section{Examples and nonexamples of quantum natural proofs}

In this section, we present several examples of natural properties satisfying Definition~\ref{def:natural-promise-property}.

\subsection{Pauli expectations}\label{ssec:example-pauli-expectations}
We begin with a simple example based on Pauli expectations, expanding upon the example from the introduction. A pure stabilizer state has many nonidentity Pauli expectations with absolute value one. On the other hand, for any fixed nonidentity Pauli, a Haar-random state has almost equal weight on its $+1$ and $-1$ eigenspaces, so the corresponding expectation is overwhelmingly likely to be small. This suggests using the largest nonidentity Pauli expectation to distinguish stabilizer states from typical states.

Let $\calP_n=\{I,X,Y,Z\}^{\otimes n}$ be the phase-free $n$-qubit Pauli operators, and define
\begin{equation}
    M(\psi)=\max_{P\in\calP_n\setminus\{I\}}\abs{\abra{\psi|P|\psi}}^2.
\end{equation}
Define the associated promise property by
\begin{align}
    P_n^{\mathrm{yes}}=\{\ket{\psi}:M(\psi)\leq 1/4\}, \qquad P_n^{\mathrm{no}}=\{\ket{\psi}:M(\psi)\geq 3/4\}
\end{align}
These sets are disjoint and invariant under global phase. We use squared expectations here because their values can be computed exactly from a rational \amplistwords. The constant gap makes the property robust to amplitude truncation and approximate state preparation.

\begin{claim}[Pauli-expectation property]
Let $\eps:\NN\to[0,1/16]$. The above promise property $(P_n^{\mathrm{yes}},P_n^{\mathrm{no}})$ is natural and is $\eps$-useful against Clifford state preparation, even when ancillary output qubits may be discarded.
\end{claim}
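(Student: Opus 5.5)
We check constructivity, largeness, and usefulness separately.

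\emph{Constructivity.} The algorithm $F$ takes the truncated vector $a=\al_T(\psi)$ with $T\geq\tau(n)=n+c_0$ for a suitable constant $c_0$. It computes $\norm{a}^2$ and $\abra{a|P|a}$ exactly in rational arithmetic for all $4^n$ Paulis. Each expectation costs $O(2^n)$ operations, since $P$ is a signed permutation with phases in $\{\pm1,\pm i\}$. The total time is $O(8^n\cdot\poly(T))$, which is polynomial in the input length $2^n(2T+c_{\mathrm{enc}})$.

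$F$ accepts iff $\max_{P\neq I}\abs{\abra{a|P|a}}^2\leq\tfrac12\norm{a}^4$. To justify this threshold, note that $\norm{a-\psi}_2\leq\sqrt{2N}\,2^{-T}$, which is at most $2^{-n/2-c_0+1/2}$, a small constant. Since $\norm{P}=1$, each normalized expectation moves by at most $O(\norm{a-\psi})$ in absolute value, and hence its square moves by at most a small constant. So yes-instances ($M\leq1/4$) land below $1/2$ and no-instances ($M\geq3/4$) land above it. Both sets are invariant under global phase, as required.

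\emph{Largeness.} Fix $P\neq I$ and write $P=\Pi_+-\Pi_-$ with two eigenspaces of dimension $N/2$ each. For Haar $\psi$, $\abra{\psi|P|\psi}=\norm{\Pi_+\psi}^2-\norm{\Pi_-\psi}^2$. Levy's lemma (or the Beta distribution of $\norm{\Pi_+\psi}^2$) gives $\Pr[\abs{\abra{\psi|P|\psi}}>1/2]\leq 2e^{-cN}$. A union bound over $4^n$ Paulis leaves failure probability $4^ne^{-c2^n}\to0$. Hence $\Pr[\psi\in P^{\mathrm{yes}}]\geq1/2\geq 2^{-n}$ for large $n$.

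\emph{Usefulness.} This is the main obstacle, because the preparing circuit may discard ancillas, so $\sigma$ is the reduced state of a stabilizer state on $n+m$ qubits. First, the reduced state of a stabilizer state $\ket{S}$ onto the $n$ output qubits is $\sigma=2^{-n}\sum_{P\in G}\pm P$. Here $G$ is the projection onto the output qubits of those stabilizer elements that act as identity on the ancillas, and $\abs{G}=2^{n-k}$ with $k$ the entanglement entropy.

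Two cases follow. If $G$ contains some $P\neq I$, then $\mathrm{tr}(P\sigma)=\pm1$. If $G=\{I\}$, then $\sigma=I/2^n$ is maximally mixed. Since $\norm{\ket\psi\bra\psi-\sigma}_1\geq2-2/2^n$, this contradicts $\eps\leq1/16$ for $n\geq2$. So the first case holds.

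In that case, $\abs{\abra{\psi|P|\psi}-\mathrm{tr}(P\sigma)}\leq\norm{\ket\psi\bra\psi-\sigma}_1\leq2\eps\leq1/8$, so $\abs{\abra{\psi|P|\psi}}\geq7/8$. Then $M(\psi)\geq49/64\geq3/4$, placing $\psi$ in $P^{\mathrm{no}}$. The only nontrivial input here is the form of stabilizer reduced states, which I would cite as standard (e.g.\ Fattal et al.) or derive directly from the canonical form of the stabilizer group under the bipartition.
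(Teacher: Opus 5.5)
Your proof is correct and follows essentially the same route as the paper: exact rational evaluation of all nonidentity Pauli expectations from the truncated list with an $O(\norm{a-\psi}_2)$ perturbation bound, Haar concentration of each Pauli expectation plus a union bound over $4^n-1$ Paulis, and, for usefulness, the stabilizer structure of the reduced output followed by $|\langle\psi|P|\psi\rangle|\ge 1-2\eps\ge 7/8$. The only small difference is in usefulness. The paper writes $\sigma=\Pi/r$ and uses $D\ge 1-1/r$ to force $r=1$, whereas you use the Pauli expansion of $\sigma$ and only need to exclude the maximally mixed case, which is equally valid.
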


\begin{proof}
\textbf{Constructivity.}
To satisfy constructivity, we give a deterministic tester that decides the promise property from a sufficiently precise \amplistwords. Let $\tau(n)\coloneq n+6$ be the minimum number of fractional bits of precision required by the tester. 
Given $1^n$ and a truncated \amplistwords $\al_T(\psi)$ with $T\geq\tau(n)$, let $v\coloneq\al_T(\psi)$ denote the truncated amplitude vector. From $v$, compute $\widetilde M(v)\coloneq\max_{P\in\calP_n\setminus\{I\}}\abs{\bra{v}P\ket{v}}^2$, where the tilde distinguishes this truncated-input statistic from the true value $M(\psi)$. The tester accepts when $\widetilde M(v)\leq1/2$ and rejects otherwise.

We now bound the error between $\widetilde M(v)$ and $M(\psi)$. By Definition~\ref{def:state-amp-list}, truncating each amplitude to $T$ fractional bits gives $\norm{v-\ket{\psi}}_2\leq2^{(n+1)/2-T}$. Let $\delta_T\coloneq2^{(n+1)/2-T}$ denote this upper bound on the \amplistwords truncation error. Since $T\geq\tau(n)=n+6$, we have $\delta_T\leq1/64$. The triangle inequality then gives $\norm{v}_2\leq\norm{\psi}_2+\norm{v-\psi}_2\leq1+\delta_T$. To compare $\widetilde M(v)$ with $M(\psi)$, we first bound the change in the squared expectation of a fixed Pauli. Fix $P\in\calP_n\setminus\{I\}$. Adding and subtracting $\bra{v}P\ket{\psi}$ gives
$\bra{\psi}P\ket{\psi}-\bra{v}P\ket{v}
=
\bra{\psi-v}P\ket{\psi}
+
\bra{v}P\ket{\psi-v}$.
Applying the triangle inequality and Cauchy--Schwarz, and using that $P$ is unitary, gives
$\abs{\bra{\psi}P\ket{\psi}-\bra{v}P\ket{v}}
\leq
\norm{\psi-v}_2\norm{\psi}_2
+
\norm{v}_2\norm{\psi-v}_2$.
Since $\norm{\psi}_2=1$, $\norm{\psi-v}_2\leq\delta_T$, and $\norm{v}_2\leq1+\delta_T$, the difference is at most $\delta_T(2+\delta_T)<3\delta_T$. Since $\abs{\bra{\psi}P\ket{\psi}}\leq1$, the bound above gives
$\abs{\bra{v}P\ket{v}}\leq1+3\delta_T$. Therefore,
\begin{equation}
\abs{\abs{\bra{\psi}P\ket{\psi}}^2-\abs{\bra{v}P\ket{v}}^2}
\leq
\left(
\abs{\bra{\psi}P\ket{\psi}}
+
\abs{\bra{v}P\ket{v}}
\right)
\abs{
\abs{\bra{\psi}P\ket{\psi}}
-
\abs{\bra{v}P\ket{v}}
}
\leq
(2+3\delta_T)3\delta_T
<
\frac18.
\end{equation}
This bound holds for every nonidentity Pauli, the maximum remains bounded by $<1/8$. If $\ket{\psi}\in P_n^{\mathrm{yes}}$, then $M(\psi)\leq1/4$, so $\widetilde M(v)<3/8$. If $\ket{\psi}\in P_n^{\mathrm{no}}$, then $M(\psi)\geq3/4$, so $\widetilde M(v)>5/8$. The threshold $1/2$ therefore correctly separates the two promise sets.

The tester also runs in time polynomial in the length of the \amplistwords. The entries of $v$ are rational complex numbers, so each value $\abs{\bra{v}P\ket{v}}^2$ can be computed exactly. Each $n$-qubit Pauli is a $2^n\times2^n$ matrix with entries in $\{0,\pm1,\pm i\}$. More specifically, each Pauli permutes the computational-basis states up to phases in $\{\pm1,\pm i\}$, so computing one expectation $\bra{v}P\ket{v}$ requires $O(2^n)$ arithmetic operations. There are $4^n-1$ nonidentity Paulis, so computing $\widetilde M(v)$ requires $O(8^n)$ arithmetic operations, with polynomial overhead in $T+n$ for exact rational arithmetic. Since the \amplistwords has length $\Theta(2^nT)$, this running time is polynomial in the total input length.

\medskip
\noindent\textbf{Largeness.}
Fix a nonidentity Pauli $P$. Its $+1$ and $-1$ eigenspaces have equal dimension, so $\bra{\psi}P\ket{\psi}$ measures the imbalance of the state's weight between two equal-dimensional halves of the Hilbert space. For a Haar-random state, this imbalance is sharply concentrated around zero. In particular, Example~54 of \cite{mele2024haar} shows that for every $\varepsilon>0$,
$\Pr_{\psi\sim\Haar_n}[\abs{\bra{\psi}P\ket{\psi}}\geq\varepsilon]\leq2\exp(-2^n\varepsilon^2/(18\pi^3))$. Taking $\varepsilon=1/2$, and using that $\abs{\bra{\psi}P\ket{\psi}}^2\geq1/4$ if and only if $\abs{\bra{\psi}P\ket{\psi}}\geq1/2$, we obtain
$\Pr_{\psi\sim\Haar_n}[\abs{\bra{\psi}P\ket{\psi}}^2\geq1/4]\leq2\exp(-2^n/(72\pi^3))\leq\exp(-\Omega(2^n))$. There are $4^n-1$ nonidentity Paulis, so taking a union bound gives
$\Pr_{\psi\sim\Haar_n}[M(\psi)\geq1/4]\leq2(4^n-1)\exp(-2^n/(72\pi^3))\leq\exp(-\Omega(2^n))$.
Hence a Haar-random state lies in $P_n^{\mathrm{yes}}$ with overwhelming probability.

\medskip
\noindent\textbf{Usefulness.}
Fix $n$. Suppose a Clifford circuit acting on an all-zero input has reduced output $\sigma$ on the designated $n$-qubit output register, after discarding any ancillary output qubits, and let $\ket{\psi}$ be a pure target state satisfying
$D(\ket{\psi}\!\bra{\psi},\sigma)=\frac12\norm{\ket{\psi}\!\bra{\psi}-\sigma}_1\leq\eps(n)$,
where $\eps(n)\leq1/16$. We show that this forces $\ket{\psi}$ to lie in $P_n^{\mathrm{no}}$.

The joint output state before discarding the ancillary qubits is a pure stabilizer state. A reduced state of a pure stabilizer state is a normalized projector onto a stabilizer subspace. Thus $\sigma=\Pi/r$, where $\Pi$ is a projector onto a stabilizer subspace of rank $r$. We first show that the approximation guarantee forces $r=1$. Consider the two-outcome measurement with projectors $\ket{\psi}\!\bra{\psi}$ and $I-\ket{\psi}\!\bra{\psi}$. It accepts $\ket{\psi}$ with probability $1$ and $\sigma$ with probability $\bra{\psi}\sigma\ket{\psi}$. Since a measurement cannot increase trace distance, $D(\ket{\psi}\!\bra{\psi},\sigma)\geq1-\bra{\psi}\sigma\ket{\psi}$. Since $\sigma=\Pi/r$ and $\Pi$ is a projector,
$\bra{\psi}\sigma\ket{\psi}
=
\frac{1}{r}\bra{\psi}\Pi\ket{\psi}
\leq\frac1r$.
Therefore
$D(\ket{\psi}\!\bra{\psi},\sigma)\geq1-\frac1r$. If $r\geq2$, then $1-1/r\geq1/2$, so
$D(\ket{\psi}\!\bra{\psi},\sigma)\geq1/2$, contradicting $\eps(n)\leq1/16$. Therefore $r=1$, and $\sigma$ is itself a pure stabilizer state.

A pure stabilizer state is stabilized by a nonidentity Pauli, so there exists $P\in\calP_n\setminus\{I\}$ such that $\abs{\operatorname{Tr}(P\sigma)}=1$. Since $\norm{P}_{\mathrm{op}}=1$, trace-norm duality gives
$\frac{1}{2}\abs{\operatorname{Tr}(P\sigma)-\bra{\psi}P\ket{\psi}}
\leq D(\sigma,\ket{\psi}\!\bra{\psi})
\leq\eps(n)$.
Hence $\abs{\bra{\psi}P\ket{\psi}}\geq1-2\eps(n)\geq7/8$, and therefore
$M(\psi)\geq(7/8)^2=49/64>3/4$. Thus $\ket{\psi}\in P_n^{\mathrm{no}}$. Since $n$ was arbitrary, this holds for every positive $n$, and in particular proves the required eventual membership. Hence the property is $\eps$-useful against Clifford state preparation.
\end{proof}

\subsection{Approximate Pauli degree: CAT states and felinity}
\label{ssec:example-approximate-pauli-degree}

Anshu, Dong, Ou, and
Yao~\cite{anshu2025computational} show that states prepared by
constant-depth $\mathsf{QAC}^0$ circuits with limited ancilla admit low-degree approximations in operator norm. For any state that does not admit such an approximation this implies a circuit lower bound. 

We formalize this as a ``high approximate degree'' property, and explain why CAT states and, more generally,  pure states with high \emph{felinity} \cite{gretta2026paritynotinqac0iff} satisfy this property.
We then show that the
certificate is natural. Throughout, the approximation
is in operator norm; this is not a concentration statement about
squared Pauli coefficients.

\paragraph{Distance from low-degree operators}
Let $\mathcal V_k$ be the subspace of all $n$-qubit operators that are a sum of Hermitian operators that act nontrivially on at most $k$ qubits. Equivalently, we could define it as the span of all Paulis of weight at most $k$.

For a pure state $|\psi\rangle$, write
$\psi=|\psi\rangle\langle\psi|$ and define $\Delta_k(\psi)
  =\inf_{A\in\mathcal V_k}\|\psi-A\|_\mr{op}$,
where $\|\cdot\|_\mr{op}$ denotes operator norm. Thus a large value of
$\Delta_k(\psi)$ means that no low-degree operator approximates the
state's density matrix well. Note that the approximating operator $A$ need not be positive or have unit trace.

\paragraph{CAT states.}
Let
$\ket{\mathrm{CAT}_n}
=(\ket{0^n}+\ket{1^n})/\sqrt2$
and
$\rho_{\mathrm{CAT}_n}
=\ket{\mathrm{CAT}_n}\!\bra{\mathrm{CAT}_n}$.
Its projector has coherence between two strings that differ on every
qubit: $\langle0^n|\rho_{\mathrm{CAT}_n}|1^n\rangle=\frac12$.
No Pauli string of weight less than $n$ can contribute to this matrix
element. Such a string has an identity factor on some qubit, giving
a factor $\langle0|I|1\rangle=0$. Consequently, $\langle0^n|A|1^n\rangle=0$ for every $A\in\mathcal V_k, k<n$.

The link to approximate degree is that operator norm bounds every
matrix element between normalized vectors. Hence every such $A$
satisfies $\|\rho_{\mathrm{CAT}_n}-A\|_\mr{op}
  \geq
  \bigl|\langle0^n|(\rho_{\mathrm{CAT}_n}-A)|1^n\rangle\bigr|
  =\frac12$.
The degree-zero operator $I/2$ attains this error, so $\Delta_k(\mathrm{CAT}_n)=\frac12$ for every $0\leq k<n$.
Thus approximation error strictly below $1/2$ requires Pauli degree $n$.

\paragraph{Felinity.}
The CAT-state argument, which uses coherence between the specific complementary pair $\ket{0^n}$ and $\ket{1^n}$, extends more generally to arbitrary complementary computational-basis pairs through the notion of felinity.
For an $n$-qubit state $\rho$, write
$p_x=\langle x|\rho|x\rangle$ and let $\bar x$ be the bitwise
complement of $x$. The \emph{felinity} introduced by Gretta, Gupta,
and Joshi~\cite{gretta2026paritynotinqac0iff} is
\[
  \mathcal F_n(\rho)
  =2\sum_{x\in\{0,1\}^n}p_xp_{\bar x}.
\]
For a pure state, a large felinity guarantees a large off-diagonal
matrix element between some complementary pair $(x, \bar{x})$. Just as the CAT state,
no operator of degree less than $n$ can reproduce that matrix element.

\begin{lemma}[Felinity implies high approximate Pauli degree]
\label{lem:felinity-approximate-degree}
Let $\psi=|\psi\rangle\langle\psi|$ be an $n$-qubit pure state.
For every $0\leq k<n$,
\[
  \Delta_k(\psi)
  \geq\max_x\sqrt{p_xp_{\bar x}}
  \geq\frac{\mathcal F_n(\psi)}2.
\]
\end{lemma}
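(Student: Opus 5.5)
The proof follows the CAT-state argument directly, applied to an arbitrary complementary pair $(x,\bar x)$. First we prove $\Delta_k(\psi)\geq\sqrt{p_xp_{\bar x}}$ for every fixed $x$, and then we take the maximum over $x$. After that we compare the maximum with the felinity by an averaging argument.

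For the first inequality, fix $x\in\{0,1\}^n$ and $A\in\mathcal V_k$ with $k<n$. Expand $A$ in Pauli strings of weight at most $k$. Each such string $P=\bigotimes_i P_i$ gives $\langle x|P|\bar x\rangle=\prod_i\langle x_i|P_i|\bar x_i\rangle$. Since $k<n$, some factor $P_i$ is $I$, and $\langle x_i|I|1-x_i\rangle=0$. Hence $\langle x|A|\bar x\rangle=0$ by linearity.

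The operator norm dominates every matrix element between unit vectors, so
\[
\|\psi-A\|_{\mathrm{op}}\geq\bigl|\langle x|(\psi-A)|\bar x\rangle\bigr|=|\langle x|\psi\rangle\langle\psi|\bar x\rangle|=\sqrt{p_x p_{\bar x}},
\]
because $|\langle x|\psi\rangle|^2=p_x$ for a pure state. Taking the infimum over $A$ and then the maximum over $x$ gives the first bound. Note that $x\neq\bar x$ always holds since $n\geq1$; the argument only needs $x$ and $\bar x$ to differ in every position.

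For the second inequality, write $m=\max_x\sqrt{p_xp_{\bar x}}$. Then
\[
\mathcal F_n(\psi)=2\sum_x p_xp_{\bar x}=2\sum_x\sqrt{p_xp_{\bar x}}\cdot\sqrt{p_xp_{\bar x}}\leq 2m\sum_x\sqrt{p_xp_{\bar x}}.
\]
By Cauchy--Schwarz and the fact that $x\mapsto\bar x$ is a bijection, $\sum_x\sqrt{p_xp_{\bar x}}\leq\sqrt{\sum_xp_x}\sqrt{\sum_xp_{\bar x}}=1$. This gives $\mathcal F_n(\psi)\leq 2m$, i.e.\ $m\geq\mathcal F_n(\psi)/2$.

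No step here is a real obstacle. The only point needing care is this final averaging step: bounding a sum of squares by the maximum times the sum, and using Cauchy--Schwarz to control $\sum_x\sqrt{p_xp_{\bar x}}$, rather than the weaker estimate $\mathcal F_n\leq 2^{n+1}m^2$.
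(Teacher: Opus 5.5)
Your proposal is correct and matches the paper's proof: you show $\langle x|A|\bar x\rangle=0$ for every $A\in\mathcal V_k$ with $k<n$, bound $\|\psi-A\|_{\mathrm{op}}$ below by the single matrix element $|\langle x|\psi|\bar x\rangle|=\sqrt{p_xp_{\bar x}}$, and then use $\sum_x q_x^2\leq(\max_x q_x)\sum_x q_x$ together with Cauchy--Schwarz to get $\sum_x q_x\leq 1$. The only difference is that you spell out the Pauli-string factorization explicitly, whereas the paper points back to its CAT-state discussion for that step.
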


\begin{proof}
For every $A\in\mathcal V_k$ and every $x$, the argument above gives
$\langle x|A|\bar x\rangle=0$. Since $\psi$ is pure, $\|\psi-A\|_\mr{op}
  \geq|\langle x|\psi|\bar x\rangle|
  =\sqrt{p_xp_{\bar x}}$.
Taking the maximum over $x$ and the infimum over $A$ proves the first
inequality. For the second inequality, set $q_x=\sqrt{p_xp_{\bar x}}$. Cauchy--Schwarz gives
$\sum_xq_x\leq1$, so
\[
  \frac{\mathcal F_n(\psi)}2
  =\sum_xq_x^2
  \leq\left(\max_xq_x\right)\sum_xq_x
  \leq\max_xq_x.
\]
\end{proof}

\paragraph{A natural property from approximate Pauli degree.} We now describe our natural property that applies to these states. 
Let $\ell_n=\lfloor n/10\rfloor$ and define
\begin{equation}
  P_n^{\mathrm{yes}}
  =\{|\psi\rangle:\Delta_{\ell_n}(\psi)\geq1/4\},
  \qquad
  P_n^{\mathrm{no}}
  =\{|\psi\rangle:\Delta_{\ell_n}(\psi)\leq1/8\}.
  \label{eq:example-approximate-degree-promise}
\end{equation}
CAT states belong to the YES-set, as do pure states
with felinity at least $1/2$. 

\medskip
\noindent\textbf{Constructivity.}
To decide the promise property from an \amplistwords, we need to efficiently estimate $\Delta_{\ell_n}(\psi)$, the distance from the input state to the space of low-degree operators. Since this distance is defined by minimizing an operator-norm error over $\mathcal V_{\ell_n}$, the minimization can be formulated as a semidefinite program.

Given an \amplistwords with rational amplitude vector $v$, if $v=0$, reject. Otherwise, form
$\widehat\rho=vv^\dagger/\langle v,v\rangle$. The density-matrix representation keeps all input entries
rational. Compute the optimum of
\begin{equation}
\begin{aligned}
  \text{minimize}\quad &u\\
  \text{subject to}\quad
  &-uI\preceq\widehat\rho-
       \sum_{|P|\leq\ell_n}c_PP\preceq uI,\\
  &-2\leq c_P\leq2,\qquad 0\leq u\leq2,
\end{aligned}
\label{eq:example-approximate-degree-sdp}
\end{equation}
where all coefficients $c_P$ are real. The  constraints are included only to make the feasible region
explicitly bounded; they do not change the optimum. Indeed,
$I/2\in\mathcal V_{\ell_n}$ and $\left\|\widehat\rho-\frac I2\right\|_{\mathrm{op}}=\frac12$, 
so the optimum is at most $1/2$. Consequently, an optimal approximant $A$ may be chosen with
\[
    \|A\|_{\mathrm{op}}
    \leq
    \|\widehat\rho-A\|_{\mathrm{op}}
    +\|\widehat\rho\|_{\mathrm{op}}
    \leq\frac32.
\]
Its Pauli coefficients therefore satisfy
\[
    |c_P|
    =\frac{|\operatorname{Tr}(PA)|}{N}
    \leq\|A\|_{\mathrm{op}}
    \leq\frac32.
\]
Thus the looser bounds $|c_P|\leq2$ and $u\leq2$ preserve the optimum.

The program has at most $N^2$ variables and two $N\times N$
positive-semidefinite constraints, so its description size is polynomial
in $N$ and in the input bit length. The constraints make the feasible
region bounded. Since $0\preceq\widehat\rho\preceq I$, the choice
$c_P=0$ and $u=3/2$ is strictly feasible, with spectral slack at
least $1/2$.
Standard semidefinite optimization therefore approximates the optimum
to additive error $1/64$ in time polynomial in $N$ and the input bit
length~\cite{vandenberghe1996semidefinite}.

We next bound the error introduced by truncating the \amplistwords. If each real and
imaginary part is truncated to $T$ fractional bits, let
$\eta=\sqrt{2N}\,2^{-T}$, so
$\norm{v-\ket{\psi}}_2\leq\eta$.
The reverse triangle inequality gives
$\abs{\norm{v}_2-1}\leq\eta$.
For $\eta<1$, define the normalized truncated state
$\widehat{\ket{\psi}}\coloneq v/\norm{v}_2$.
Then
$\norm{\widehat{\ket{\psi}}-\ket{\psi}}_2
\leq
\abs{1-\norm{v}_2}
+\norm{v-\ket{\psi}}_2
\leq2\eta$. Since
$\widehat{\rho}=\widehat{\ket{\psi}}\widehat{\bra{\psi}}$,
\[
\|\widehat\rho-\psi\|_{\mr{op}}
\leq
\|\widehat\rho-\psi\|_1
\leq
2\norm{\widehat{\ket{\psi}}-\ket{\psi}}_2
\leq4\eta.
\]
The triangle inequality further implies
\[
\left|
\inf_{A\in\mathcal V_{\ell_n}}\|\widehat{\rho}-A\|_{\mathrm{op}}
-
\inf_{A\in\mathcal V_{\ell_n}}\|\psi-A\|_{\mathrm{op}}
\right|
\leq
\|\widehat{\rho}-\psi\|_{\mr{op}}.
\]
Therefore, replacing $\psi$ by $\widehat\rho$ changes the
optimal approximation error by at most $4\eta$.

Thus $T\geq n+10$, together with the numerical error $1/64$, gives
total error less than $1/32$ in our estimation of $\Delta_{\ell_n}(\psi)$.
Accepting when the computed value is at least $3/16$ separates the
two promise sets. The running time is polynomial in the full
\amplistwords length for every such $T$.

\medskip
\noindent\textbf{Largeness.}
For a Haar-random state $\ket{\psi}$, let
\[
L(\psi)=\sum_{1\leq|P|\leq\ell_n}
|\langle\psi|P|\psi\rangle|^2.
\]
For every nonidentity Pauli $P$,
$\mathbb E|\langle\psi|P|\psi\rangle|^2=1/(N+1)$.
Let
$M\coloneq\sum_{j\leq n/10}\binom nj3^j$.
Since
$h_2(1/10)+(1/10)\log_2 3<0.628$,
we have
$M\leq2^{0.628n}$, so there are at most $2^{0.628n}$ nonidentity Paulis of weight at most $\ell_n$. Therefore,
\begin{align}
    \mathbb E \sbra{L(\psi)}=\frac{M}{N+1}=2^{-\Omega(n)}. \label{eq:degree-example:haar-concentration}
\end{align}
Now we will show that if $\psi$ were approximately low degree: $\Delta_{\ell_n}(\psi) < 1/4$ then it must also have some concentration of low degree Pauli mass: $L(\psi) \geq 1/10$.

Suppose that $\Delta_{\ell_n}(\psi)<1/4$, so some
$A=\sum_Pc_PP\in\mathcal V_{\ell_n}$ satisfies
$\|\psi-A\|_{\mathrm{op}}<1/4$.
Then
\begin{align*}
    1 - \bra{ \psi} A \ket{\psi} \leq \norm{\psi - A }_\mr{op} < 1/4.
\end{align*}
so $\langle\psi|A|\psi\rangle\geq3/4$. Also $\frac{1}{N} \Tr( A - \psi) \leq \norm{A - \psi}_\mr{op} \leq \frac{1}{4}$, so $\Tr(A)/N \leq 1/4 + 1/N$.
And by triangle inequality
$\|A\|_{\mathrm{op}}\leq5/4$.
Hence, by Pauli orthogonality and Cauchy--Schwarz,
\begin{align*}
    \frac{3}{4} \leq \bra{\psi} A \ket{\psi} = c_I + \sum_{1\leq|P|\leq\ell_n} c_P \bra{\psi} P \ket{\psi} 
    &\leq \frac{\Tr(A)}{N} + \pbra{\sum_{1\leq|P|\leq\ell_n} |c_P|^2}^{1/2} \cdot \pbra{\sum_{1\leq|P|\leq\ell_n} |\bra{\psi} P \ket{\psi}|^2}^{1/2} \\
    &\leq \frac{1}{4} + \frac{1}{N} + \frac{5}{4} \cdot \sqrt{L(\psi)}.
\end{align*}
In the last inequality we used Pauli orthogonality, which gives
$\sum_P |c_P|^2=\operatorname{Tr}(A^2)/N\leq\norm{A}_{\mathrm{op}}^2\leq(5/4)^2$.
Therefore, $L(\psi) \geq \pbra{\frac{2N-4}{5N}}^2$. 
For $N\geq16$, this implies $L(\psi)\geq1/10$. Therefore, by
Markov's inequality and \Cref{eq:degree-example:haar-concentration},
\[
\Pr_{\psi\sim\mathrm{Haar}}
[\Delta_{\ell_n}(\psi)<1/4]
\leq
\Pr[L(\psi)\geq1/10]
\leq
10\,\mathbb E[ L(\psi)]
=
2^{-\Omega(n)}.
\]
Thus the YES-set has Haar measure $1-2^{-\Omega(n)}$.

\medskip
\noindent\textbf{Usefulness}  against $\QAC^0$ with limited ancilla.
We apply the state-synthesis theorem of Anshu, Dong, Ou, and
Yao~\cite[Theorem~6.4 and
Corollary~6.5]{anshu2025computational}.
In the regime of fixed depth and $O(n)$ ancillary qubits, the cited
result gives an $o(n)$-degree Pauli approximation with operator-norm
error $o(1)$ for any pure target prepared with fidelity at least
$1-e^{-n}$, even when auxiliary output qubits are discarded.

Let $\psi=\ket{\psi}\!\bra{\psi}$, and let $\sigma$ be the
circuit's reduced output. If $\frac12\|\sigma-\psi\|_1\leq e^{-n}$, then 
$\langle\psi|\sigma|\psi\rangle
    =\operatorname{Tr}(\psi\sigma)
    \geq 1-e^{-n}$.
Thus the fidelity hypothesis of the cited approximation theorem is
satisfied. The theorem therefore gives an operator
$A_n\in\mathcal V_{d_n}$ such that $d_n=o(n)$ and $\|\psi-A_n\|_{\mathrm{op}}=o(1)$. Since $\ell_n=\lfloor n/10\rfloor$, for all sufficiently large $n$ we
have $d_n\leq\ell_n$ and
$\|\psi-A_n\|_{\mathrm{op}}\leq1/8$. Therefore $\Delta_{\ell_n}(\psi)
    \leq\|\psi-A_n\|_{\mathrm{op}}
    \leq\frac18$
and hence $\ket{\psi}\in P_n^{\mathrm{no}}$.

\paragraph{Constant felinity.}
The thresholds $1/4$ and $1/8$ here were chosen only for convenience. More generally, for any fixed $f_0>0$, the same largeness, constructivity, and usefulness arguments apply after replacing these thresholds by $f_0/4$ and $f_0/8$. By Lemma~\ref{lem:felinity-approximate-degree}, every pure state with $\mathcal F_n(\psi)\geq f_0$ satisfies $\Delta_{\ell_n}(\psi)\geq f_0/2$ and therefore lies in the new YES-set. 

Thus, although high felinity does not itself satisfy the largeness condition, it implies the broader property of high approximate Pauli degree, which is natural.

\subsection{Unique ground states of local Hamiltonians}\label{ssec:example-unique-ground-state}

Another example of a natural proof comes from local Hamiltonians.
A shallow circuit state is automatically the unique ground state of
a local Hamiltonian: if
$\ket{\psi}=U\ket{0^n}$ and $U$ is a depth-$d$ circuit of two-qubit
gates, then
\begin{equation}
    H_U
    =
    \frac1n\sum_{i=1}^n
    U\bigl(I-\ket0\!\bra0_i\bigr)U^\dagger
\end{equation}
has $\ket{\psi}$ as its unique ground state.  Each term acts on at
most $2^d$ qubits, and $H_U$ has operator norm at most one and
spectral gap $1/n$. Thus, a robust obstruction to being the unique ground state of such a local, gapped Hamiltonian gives a shallow-circuit lower bound.

Let $\mathcal P_{\leq k}$ denote the nonidentity Pauli strings of
weight at most $k$, and let
$m_k=|\mathcal P_{\leq k}|=n^{O(k)}$ for constant $k$.
For $\rho_\psi=\ket\psi\!\bra\psi$, define
\begin{equation}
    d_k(\psi)
 =
 \min_{\substack{\sigma\succeq0,\ \Tr\sigma=1\\
                  \Tr(\rho_\psi\sigma)\leq1/4}}
 \max_{P\in\mathcal P_{\leq k}}
 \left|\Tr[P(\sigma-\rho_\psi)]\right|.
\end{equation}
Thus $d_k(\psi)$ measures how well a state that is globally far from
$\ket\psi$ can reproduce all of its $k$-local Pauli expectation
values. In particular, $d_k(\psi)=0$ whenever there is a state with
overlap at most $1/4$ with $\ket\psi$ and exactly the same $k$-local
expectation values.

For fixed $k$, consider the promise property
\begin{equation}
    P_n^{\mathrm{yes}}
 =
 \{\ket\psi:d_k(\psi)\leq2^{-n/5}\},
 \qquad
 P_n^{\mathrm{no}}
 =
 \left\{\ket\psi:
 d_k(\psi)\geq\frac{1}{2nm_k}\right\}.
\end{equation}
For sufficiently large $n$ these sets are disjoint.

\paragraph{Constructivity.}
The quantity $d_k(\psi)$ is the optimum of a semidefinite program.
Given $\rho_\psi$, introduce variables $\sigma$ and $t$ and minimize
$t$ subject to
\begin{equation}
    \sigma\succeq0,\qquad
    \Tr\sigma=1,\qquad
    \Tr(\rho_\psi\sigma)\leq\frac14,\qquad
    -t\leq\Tr[P(\sigma-\rho_\psi)]\leq t
\end{equation}
for every $P\in\mathcal P_{\leq k}$. For constant $k$ there are only $n^{O(k)}$ Pauli constraints and the
matrix dimension is $2^n$, so the SDP can be solved in time polynomial
in the length of the complete amplitude list.  The inverse-polynomial
gap between the YES and NO thresholds also leaves ample room for the
finite-precision error in the input and the numerical optimization.
Hence the property is constructive.

\paragraph{Largeness.}
For a Haar-random state, take
\[
    \sigma=\frac{I-\rho_\psi}{2^n-1}.
\]
This state is orthogonal to $\ket\psi$, and for every nonidentity
Pauli $P$,
\[
 \left|\Tr[P(\sigma-\rho_\psi)]\right|
 =
 \frac{2^n}{2^n-1}
 |\langle\psi|P|\psi\rangle|.
\]
For Haar-random $\ket\psi$,
$\mathbb E|\langle\psi|P|\psi\rangle|^2=1/(2^n+1)$.
Since there are only $m_k=\poly(n)$ Paulis of weight at most $k$,
a union bound and Markov's inequality give
\[
 \Pr_{\psi\sim\mathrm{Haar}}
 \left[
   \max_{P\in\mathcal P_{\leq k}}
   |\langle\psi|P|\psi\rangle|
   >2^{-n/5}
 \right]
 =2^{-\Omega(n)}.
\]
Thus $d_k(\psi)\leq2^{-n/5}$ with probability
$1-2^{-\Omega(n)}$, so the YES-set is Haar-large.

\paragraph{Usefulness.}
Suppose $\ket\psi$ is the unique ground state of a $k$-local
Hamiltonian $H$ with $\|H\|_{\rm op}\leq1$ and spectral gap
$\gamma$.  For every state $\sigma$ satisfying
$\Tr(\rho_\psi\sigma)\leq1/4$, $\Tr(H\sigma)-\langle\psi|H|\psi\rangle
 \geq\frac34\gamma$.
Writing $H=\sum_Pc_PP$, we have $|c_P|\leq1$, and therefore
\[
 \frac34\gamma
 \leq
 \sum_{P\in\mathcal P_{\leq k}}
 |c_P|\,
 |\Tr[P(\sigma-\rho_\psi)]|
 \leq
 m_k d_k(\psi).
\]
Hence $d_k(\psi)\geq\frac{3\gamma}{4m_k}$.

Now let $\ket\psi=U\ket{0^n}$ for a depth-$d$ circuit of two-qubit
gates and set $k=2^d$.  The Hamiltonian $H_U$ above is $k$-local,
has norm at most one, and has gap $1/n$.  Consequently
\[
    d_k(\psi)\geq\frac{3}{4nm_k}
    >\frac{1}{2nm_k},
\]
so every such shallow-circuit state lies in $P_n^{\mathrm{no}}$.
The promise property is therefore useful against depth-$d$
two-qubit state preparation.

\paragraph{Example: CAT states.}
For
\[
 \ket{\mathrm{CAT}_\pm}
 =
 \frac{\ket{0^n}\pm\ket{1^n}}{\sqrt2},
\]
the two orthogonal states have identical reduced density matrices on
every proper subset of the qubits.
Thus, taking $\sigma=\ket{\mathrm{CAT}_-}\!\bra{\mathrm{CAT}_-}$ in the definition of $d_k$, we obtain $d_k(\mathrm{CAT}_+)=0$ for every $k<n$.
Hence, $\ket{\mathrm{CAT}_+}$ lies in the YES-set for every fixed $k<n$, recovering the familiar constant-depth lower bound for CAT-state preparation through a natural property.

\paragraph{Approximate preparation with discarded ancillas.}
The usefulness statement extends to the reduced-output model used
elsewhere in this paper. Suppose a depth-$d$ circuit on $n+a$ qubits
has reduced output $\omega$ satisfying
$D(\omega,\rho_\psi)\leq\epsilon$. Its global pure output
$\ket{\Phi}$ is the unique ground state of a
$2^d$-local Hamiltonian of norm at most one and gap $1/(n+a)$.
By Uhlmann's theorem, there is an ancilla state $\ket{\eta}$ such that
$\ket{\Phi}$ is within $O(\sqrt{\epsilon})$ of
$\ket{\psi}\ket{\eta}$. 
Fixing the ancillas to \(\ket{\eta}\), define the effective output Hamiltonian \(H_{\mathrm{eff}}=(I\otimes\langle\eta|)H_U(I\otimes|\eta\rangle)\). Each local term of \(H_U\) remains supported only on the output qubits in its original support, so \(H_{\mathrm{eff}}\) is still \(2^d\)-local. The preceding argument then gives
\[
 d_{2^d}(\psi)
 \geq
 \frac{1}{m_{2^d}}
 \left(
   \Omega\!\left(\frac1{n+a}\right)
   -O(\sqrt{\epsilon})
 \right).
\]
Hence, for polynomially many ancillas and
$\epsilon=e^{-\Omega(n)}$, we have
$d_{2^d}(\psi)\geq1/\poly(n)$. Choosing, for example, YES and NO
thresholds $2^{-n/5}$ and $2^{-n/6}$ therefore makes the property
useful even when arbitrary polynomially many ancillary outputs may
be discarded.

\subsection{Mutual information lower bounds}

Our next example is based on the mutual-information lower bound of Parham~\cite{parham2025mh} against state preparation in the first level $\AxCQ{1}$ of the magic hierarchy, which consists of an arbitrary-size Clifford circuit followed by a $\QNC^0$ circuit. The lower bound applies to states for which the mutual information between every pair of qubits is bounded below, while the mutual information between sufficiently small disjoint subsets of qubits is bounded above.

For disjoint subsets $A,B\subseteq[n]$ of qubit indices, recall that their mutual information is defined by
$I_\psi(A:B)=S(\rho_A)+S(\rho_B)-S(\rho_{AB})$,
where $\rho_A$, $\rho_B$, and $\rho_{AB}$ are the corresponding reduced states and
$S(\rho)=-\operatorname{Tr}(\rho\log_2\rho)$
is the von Neumann entropy in bits.

Theorem~3.6 of~\cite{parham2025mh} considers an $n$-qubit state $\ket{\psi}$ for which there exist integers $k\geq0$ and $s\geq1$ and parameters $k<\alpha\leq\beta<k+1$ satisfying
\begin{enumerate}
    \item $I_\psi(i:j)\geq\alpha$ for every pair of distinct qubits $i,j$.
    \item $I_\psi(A:B)\leq\beta$ for every pair of disjoint subsets $A,B\subseteq[n]$ with $|A|,|B|<s$.
\end{enumerate}
The theorem gives lower bounds on any approximate $\AxCQ{1}$ preparation of a state satisfying these conditions. We define a promise property by choosing explicit parameters in the case $k=0$ and introducing gaps around the two mutual-information conditions.

\begin{definition}[Mutual information promise property]
\label{def:mutual-info-promise-property}
For each $n$, set
$\alpha_n^{yes}\coloneq 2^{-2n}$,
$\alpha_n^{no}\coloneq 2^{-2n-1}$,
$\beta_n^{yes}\coloneq 1/4$,
$\beta_n^{no}\coloneq 1/2$, and
$s_n\coloneq \max\{2,\lfloor n/4\rfloor\}$.

The mutual information promise property
$P_{\mathrm{MI}}=\{(P_n^{yes},P_n^{no})\}_{n\in\NN}$
is defined by
\begin{align}
P_n^{yes}
\coloneq
\Bigl\{
\ket{\psi}\in S(\CC^{2^n}) :
&\ I_\psi(i:j)\geq \alpha_n^{yes}
\text{ for every } i\neq j,
\nonumber\\
&\ I_\psi(A:B)\leq \beta_n^{yes}
\text{ for all disjoint } A,B\subseteq[n]
\text{ with } |A|,|B|<s_n
\Bigr\},
\label{eq:mut-info-property-yes}
\end{align}
and
\begin{align}
P_n^{no}
\coloneq
\Bigl\{
\ket{\psi}\in S(\CC^{2^n}) :
&\ I_\psi(i:j)\leq \alpha_n^{no}
\text{ for some } i\neq j
\nonumber\\
&\ \text{or }\
I_\psi(A:B)\geq \beta_n^{no}
\text{ for some disjoint } A,B\subseteq[n]
\text{ with } |A|,|B|<s_n
\Bigr\}.
\label{eq:mut-info-property-no}
\end{align}
\end{definition}
Since $\alpha_n^{no}<\alpha_n^{yes}$ and $\beta_n^{yes}<\beta_n^{no}$, the sets $P_n^{yes}$ and $P_n^{no}$ are disjoint. Both sets are invariant under global phase.

\begin{claim}[Mutual information property]\label{clm:mutual-info-main-claim}
The mutual information promise property of Definition~\ref{def:mutual-info-promise-property} is natural in the sense of Definition~\ref{def:natural-promise-property} and is $\varepsilon_{\mathrm{MI}}$-useful against clean-output $\AxCQ{1}$ state preparation, where
$\varepsilon_{\mathrm{MI}}(n)\coloneq2^{-4n-10}$.
\end{claim}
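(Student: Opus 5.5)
I will verify the three requirements of Definition~\ref{def:natural-promise-property} separately, in the order constructivity, largeness, usefulness, as in the preceding examples.

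\textbf{Constructivity.} Given $\al_T(\psi)$ with $T\geq\tau(n)$ for a suitable polynomial $\tau$, the tester forms the normalized truncated vector $\widehat\psi=v/\norm{v}_2$ and computes all reduced density matrices $\rho_A,\rho_B,\rho_{AB}$ for $\abs{A},\abs{B}<s_n$. There are at most $3^n$ pairs of disjoint subsets, and each reduced state is a partial trace computable in $\poly(2^n)$ arithmetic operations, so the total count stays polynomial in the list length. Entropies are computed by diagonalizing to precision $2^{-\poly(n)}$ in $\poly(2^n)$ time.

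The tester accepts iff the computed pairwise mutual informations all exceed $(\alpha^{yes}_n+\alpha^{no}_n)/2$ and all small-subset mutual informations are below $3/8$. Correctness needs a continuity bound. The truncation error satisfies $\norm{\widehat\psi-\psi}_2\leq 2\sqrt{2N}2^{-T}$, which bounds the trace distance of every reduced state. Fannes--Audenaert then gives an entropy change of at most $\delta\log_2 2^n+h_2(\delta)$ for each term. I need this below $2^{-2n-3}$, and since $h_2(\delta)\approx\delta\log(1/\delta)$, taking $T\geq 3n+20$ suffices. This gap of order $2^{-2n}$ is why the thresholds were chosen exponentially small but with a constant-factor separation.

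\textbf{Largeness.} For Haar-random $\psi$, the conditions $I(A:B)\leq1/4$ for $\abs{A},\abs{B}<n/4$ follow from Page-type concentration. Since $\abs{A\cup B}<n/2$, the reduced state $\rho_{AB}$ is exponentially close to maximally mixed in trace distance with probability $1-2^{-\Omega(2^{n/2})}$, for instance via Lévy's lemma or $\mathbb{E}\norm{\rho_{AB}-I/d}_1\leq\sqrt{d^2/N}$. Then all three entropies are within $o(1)$ of maximal, so $I(A:B)=o(1)$, and a union bound over $3^n$ pairs finishes this part.

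The harder half is the lower bound $I(i:j)\geq 2^{-2n}$. I would use Pinsker: $I(i:j)\geq \frac{1}{2\ln 2}\norm{\rho_{ij}-\rho_i\otimes\rho_j}_1^2$. The correlation $\abs{\langle Z_iZ_j\rangle-\langle Z_i\rangle\langle Z_j\rangle}$ is, for Haar states, of order $2^{-n/2}$ and anticoncentrated, so it exceeds $2^{-n}$ with constant probability per pair. However, requiring this for all $\binom n2$ pairs simultaneously only gives probability roughly $c^{n^2}$, which is too small for largeness $2^{-cn}$. This is the main obstacle.

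To get around it, I would instead show each pair fails with probability at most $O(n^{-3})$ and take a union bound. To do so I would use several Pauli correlators, say $\{X,Y,Z\}^{\otimes 2}$, each a Haar-distributed quadratic form of variance $\Theta(2^{-n})$. The probability that all of them are below $2^{-n}$ in magnitude is then $O(2^{-n/2})^{\Omega(1)}$ by small-ball estimates on the Gaussian-like joint density, which is much smaller than $n^{-3}$. Given that, $I(i:j)\geq\Omega(2^{-2n})$, and the constant in $\alpha^{yes}_n$ absorbs the loss. The yes-set then has probability $1-o(1)$.

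\textbf{Usefulness.} I would reduce to Theorem~3.6 of~\cite{parham2025mh} with $k=0$, $\alpha=\alpha^{no}_n$ and $\beta=\beta^{no}_n$, taken in contrapositive form. If a state is $\varepsilon_{\mathrm{MI}}$-approximately prepared by clean-output $\AxCQ{1}$, then either some pair has $I(i:j)$ near zero or some small pair of sets has $I(A:B)\geq1/2$.

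The approximation error $2^{-4n-10}$ perturbs mutual informations by at most $O(n\,2^{-4n})+h_2(2^{-4n})$ by Fannes--Audenaert. This is below half the gap $\alpha^{yes}_n-\alpha^{no}_n$, so the prepared state's statement transfers to $\psi$ and places it in $P^{no}_n$. The delicate part is checking that the cited theorem's quantitative form matches these parameters, namely that $s_n=\lfloor n/4\rfloor$ lies within its lightcone regime, and adjusting constants if it does not.
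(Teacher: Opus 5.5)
Your constructivity argument is fine; it is essentially the paper's, with a different $\tau$. The largeness argument for the regional condition $I(A:B)\le 1/4$, however, fails as written. Since $\abs{A},\abs{B}<s_n=\lfloor n/4\rfloor$, the union $R=A\cup B$ can have up to $r=n/2-2$ qubits. At that size your own bound $\sqrt{d^2/N}=2^{r-n/2}$ equals $1/4$, which is not exponentially small. Pushing a constant trace distance through Fannes--Audenaert on $r$ qubits loses $\Theta(r)=\Theta(n)$ bits of entropy, so nothing like $I(A:B)\le 1/4$ follows. Your intermediate claim $I(A:B)=o(1)$ is also false there: by Page's formula the entropy deficit of $\rho_R$, and hence $I(A:B)$ for balanced $A,B$, is of order $2^{2r-n}$, a nonzero constant when $r=n/2-O(1)$. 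You need an entropy-level concentration statement rather than a trace-distance one. The paper uses Hayden--Leung--Winter (Cor.~VI.1): except with probability $\exp(-\Omega(2^n/n^2))$, every $R$ with $\abs{R}\le n/2$ simultaneously satisfies $S(\rho_R)\ge \abs{R}-\alpha-2^{2\abs{R}-n}/\ln 2$. With $\alpha=1/8$ and $2^{2r-n}\le 1/16$, this gives $I(A:B)\le r-S(\rho_R)<1/4$ for every tested pair at once.

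Two further points need repair. First, for the pairwise condition your ``main obstacle'' is a miscalculation. Since $(1+\langle Z_iZ_j\rangle)/2\sim\mathrm{Beta}(N/2,N/2)$, the variable $\langle Z_iZ_j\rangle$ has density at most $\sqrt N$, so $\Pr[\abs{\langle Z_iZ_j\rangle}\le\delta]\le 2\delta\sqrt N$. At $\delta\approx 2^{-n}$ this is $O(2^{-n/2})$ per pair, not a constant. A single correlator plus a union bound over $\binom{n}{2}$ pairs therefore suffices, and your joint small-ball estimate for several correlators is both unproved and unnecessary. What you do have to handle is the product term $\langle Z_i\rangle\langle Z_j\rangle$ in the connected correlator. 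Markov's inequality on $\Expect\langle Z_i\rangle^2=1/(N+1)$ makes it at most $N^{-3/4}$ outside probability $2N^{-1/4}$, giving per-pair failure $O(N^{-1/4})$. Also, $\alpha^{yes}_n=2^{-2n}$ is fixed, so no free constant absorbs the $1/(2\ln 2)$ from Pinsker; the correlator threshold must be $\sqrt{2\ln 2}\,2^{-n}$, which anticoncentration handles at no cost.

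Second, for usefulness, Theorem~3.6 of~\cite{parham2025mh} already concerns the target $\psi_n$ with preparation error $\delta_n$. The real work is checking that each of its alternatives fails. The one that pins down the parameter is $\delta_n\le 2\varepsilon_{\mathrm{MI}}=2^{-4n-9}<2^{-3n-8}<\bigl(2^{-2n-1}/(1+3e)\bigr)^{\ln 4}$. This is exactly why $\varepsilon_{\mathrm{MI}}=2^{-4n-10}$, so ``adjusting constants'' is not available. You could instead apply the exact-preparation version to the prepared $\phi_n$ and transfer by Fannes--Audenaert. In that case, compare the slack with the theorem's parameters, not with half of $\alpha^{yes}_n-\alpha^{no}_n$: run the theorem with $\alpha$ below $\alpha^{no}_n$ and $\beta$ above $1/2$ by the slack. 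Otherwise you only obtain $I_{\psi_n}(i:j)<\alpha^{no}_n+\text{slack}$, which lies in the promise gap rather than in $P^{no}_n$.
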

In the following sections, we first prove that the mutual information promise property satisfies largeness and constructivity, and then prove its usefulness against clean-output $\AxCQ{1}$ state preparation.

\paragraph{Largeness.}

Let $N\coloneq2^n$. The definition of $P_n^{yes}$ imposes two conditions, which we analyze separately. Let $E_L$ be the event that $I_\phi(i:j)\geq N^{-2}=2^{-2n}$ for every pair of distinct qubits $i,j$, and let $E_U$ be the event that $I_\phi(A:B)\leq1/4$ for every pair of disjoint subsets $A,B\subseteq[n]$ with $|A|,|B|<s_n$. By Definition~\ref{def:mutual-info-promise-property}, $\ket{\phi}\in P_n^{yes}$ exactly when both $E_L$ and $E_U$ occur.

We first show that a Haar-random state satisfies the pairwise mutual-information condition $E_L$ with high probability.

\begin{claim}[Pairwise mutual information lower bound]
\label{clm:mut-info-lower-bound}
For all sufficiently large $n$,
$\Pr_{\phi\sim\Haar_n}[E_L^c]
=
O(n^22^{-n/4})
=
o(1)$.
\end{claim}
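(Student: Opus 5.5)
The plan is to reduce the mutual-information lower bound to an anticoncentration statement about one explicit correlation function. Then we apply a Gaussian polynomial anticoncentration inequality and take a union bound over the $\binom n2$ pairs.

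\textbf{Step 1: from mutual information to a correlator.} Fix distinct qubits $i,j$. Since $I_\phi(i:j)=D(\rho_{ij}\,\|\,\rho_i\otimes\rho_j)$ (relative entropy in bits), Pinsker's inequality gives
\[
I_\phi(i:j)\;\geq\;\frac{1}{2\ln 2}\,\norm{\rho_{ij}-\rho_i\otimes\rho_j}_1^2 .
\]
Trace-norm duality with the observable $Z_iZ_j$, which has $\norm{Z_iZ_j}_{\mathrm{op}}=1$, gives
\[
\norm{\rho_{ij}-\rho_i\otimes\rho_j}_1\;\geq\;\abs{c_{ij}(\phi)},\qquad c_{ij}(\phi)\coloneq\bra{\phi}Z_iZ_j\ket{\phi}-\bra{\phi}Z_i\ket{\phi}\bra{\phi}Z_j\ket{\phi}.
\]
Hence the event $E_L^c$ is contained in the union over pairs $i\neq j$ of the events
\[
\abs{c_{ij}(\phi)}<\sqrt{2\ln 2}\,2^{-n}.
\]
It therefore suffices to show that each of these events has probability $O(2^{-n/4})$.

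\textbf{Step 2: Gaussian representation.} Write $\ket{\phi}=g/\norm{g}$ with $g\sim\calN^{\CC}(0,1)^N$. Then
\[
c_{ij}(\phi)=f(g)/\norm{g}^4,\qquad f(g)\coloneq\norm{g}^2\bra{g}Z_iZ_j\ket{g}-\bra{g}Z_i\ket{g}\bra{g}Z_j\ket{g},
\]
and $f$ is a homogeneous degree-$4$ real polynomial in the $2N$ real Gaussian coordinates. By the same Chernoff bound used in Claim~\ref{lem:good-event-probability}, $\norm{g}^2\leq 2N$ except with probability $e^{-\Omega(N)}$. On that event, $\abs{c_{ij}}<\sqrt{2\ln2}\,N^{-1}$ forces $\abs{f(g)}< 4\sqrt{2\ln 2}\,N$.

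\textbf{Step 3: variance of $f$.} Compute $\mathbb E[f^2]$ via Wick's theorem. I expect $\mathrm{Var}(f)=\Theta(N^3)$. This matches the heuristic that $c_{ij}$ is typically of order $N^{-1/2}$: the two-point term $\norm{g}^2\bra{g}Z_iZ_j\ket{g}$ contributes $\Theta(N^2)\cdot\Theta(N)$. The only subtlety is checking that the subtracted product term does not cancel this leading order. It does not, because $Z_iZ_j$, $Z_i$ and $Z_j$ are mutually orthogonal traceless Paulis.

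\textbf{Step 4: anticoncentration and the union bound.} Apply the Carbery--Wright inequality, $\Pr[\abs{f}\leq t\sqrt{\mathrm{Var} f}]\leq C\,t^{1/4}$ for degree-$4$ polynomials, with $t=\Theta(N/N^{3/2})=\Theta(N^{-1/2})$. This directly gives a per-pair bound of order $2^{-n/8}$.

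To reach the stated $2^{-n/4}$, I would first try to lower the effective degree by conditioning. Condition on the norms and on the $\bra{g}Z_i\ket{g}$, $\bra{g}Z_j\ket{g}$ statistics. Concretely, condition on the Gaussian components in the blocks where $Z_i$ and $Z_j$ agree versus disagree, so that the remaining randomness enters $f$ only through a degree-$2$ form. Carbery--Wright for that form gives $t^{1/2}=2^{-n/4}$.

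\textbf{Main obstacle.} The hard step is exactly this combination of the variance lower bound (or the conditional variance after conditioning) with the right anticoncentration exponent. The rest is routine. Once the per-pair bound holds, a union bound over the $O(n^2)$ pairs, plus the $e^{-\Omega(N)}$ norm failure, yields $\Pr[E_L^c]=O(n^2 2^{-n/4})=o(1)$.
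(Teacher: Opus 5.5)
\textbf{Gap: the proposal proves only the weaker rate.} Your Steps 1--3 are sound. Step 1 is the same reduction the paper uses; it imports $I_\phi(i:j)\geq C_{ij}^2/(2\ln 2)$ from Wolf et al. Your variance guess is also right: in fact $\mathrm{Var}(f)=N^3+2N^2$. But degree-$4$ Carbery--Wright then gives only a per-pair bound of order $2^{-n/8}$, so $\Pr[E_L^c]=O(n^22^{-n/8})$. That is still $o(1)$, which is all the largeness lemma uses. It is not the rate $O(n^22^{-n/4})$ the claim asserts, and the step you propose to close this gap does not work as described.

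Conditioning on $\norm{g}^2$, $\bra{g}Z_i\ket{g}$ and $\bra{g}Z_j\ket{g}$ means conditioning on values of quadratic forms. After that the remaining randomness is no longer Gaussian, so Carbery--Wright is unavailable. Conditioning on the Gaussian coordinates in the blocks where $Z_i$ and $Z_j$ agree, or in those where they disagree, does not lower the degree either. Write $a_b=\sum_{x:(x_i,x_j)=b}\abs{g_x}^2$ for the four block masses. A direct expansion gives
\[
f(g)=4\pbra{a_{00}a_{11}-a_{01}a_{10}},
\]
so fixing either pair of blocks still leaves a product of two independent quadratic forms, which has degree $4$.

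\textbf{How to fix it, and how the paper argues.} This identity is the missing idea, and with it your plan goes through. Condition on the coordinates in three of the four blocks, say all but the $(0,0)$ block. Then $f=4a_{11}a_{00}-4a_{01}a_{10}$ is affine in the single quadratic form $a_{00}$, with conditional variance $4a_{11}^2N$. Since $a_{11}\sim\mathrm{Gamma}(N/4,1)$ satisfies $a_{11}\geq N/8$ except with probability $e^{-\Omega(N)}$, degree-$2$ Carbery--Wright gives the per-pair bound $O(N^{-1/4})$ you wanted. Alternatively, $a_{00}\sim\mathrm{Gamma}(N/4,1)$ has density $O(N^{-1/2})$ and must land in an interval of length $O(1)$, which even gives $O(N^{-1/2})$.

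The paper avoids Gaussian-polynomial machinery altogether. It writes $C_{ij}=X-UV$ with $X=\bra{\phi}Z_iZ_j\ket{\phi}$ and uses the exact law $(1+X)/2\sim\mathrm{Beta}(N/2,N/2)$. That density is at most $\sqrt N$, so $\Pr[\abs{X}\leq\delta]\leq2\delta\sqrt N$. It then makes $\abs{UV}\leq N^{-3/4}$ except with probability $2N^{-1/4}$, using Markov's inequality with $\mathbb{E}[U^2]=1/(N+1)$. Balancing these two losses is exactly what produces the $N^{-1/4}$ rate.
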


\begin{proof}
We first establish the two facts about Haar-random Pauli expectations used below. Fix $P\in\calP_n\setminus\{I\}$ and let
$X_P\coloneq\bra{\phi}P\ket{\phi}$ for $\ket{\phi}\sim\Haar_n$.

First, we show an anti-concentration bound for Haar-random Pauli expectations. Every nonidentity Pauli has eigenvalues $+1$ and $-1$, each with multiplicity $N/2$. In an eigenbasis of $P$, the squared amplitudes of a Haar-random state are Dirichlet$(1,\ldots,1)$ distributed~\cite{zyczkowski2001induced}, so
$(1+X_P)/2\sim\operatorname{Beta}(N/2,N/2)$.
Consequently, $X_P$ has density
$f_N(x)=\frac{\Gamma(N)}{2^{N-1}\Gamma(N/2)^2}(1-x^2)^{N/2-1}$
for $-1<x<1$. This density is maximized at zero. To bound its maximum, write $N=2m$ and define
$a_m\coloneq\binom{2m}{m}/4^m$.
Then
$f_N(0)=ma_m$.
Moreover,
$a_1\leq1/\sqrt2$ and
$a_{m+1}=a_m(2m+1)/(2m+2)
\leq a_m\sqrt{(m+1)/(m+2)}$,
so induction gives
$a_m\leq1/\sqrt{m+1}<1/\sqrt m$.
Hence
$f_N(0)\leq\sqrt N$, and therefore, for every $0<\delta\leq1$,
$\Pr[\abs{X_P}\leq\delta]\leq2\delta\sqrt N$.

We also use the Haar second-moment identity $\mbb{E}[X_P^2]=1/(N+1)$, from Example~51 of~\cite{mele2024haar}.

Now fix distinct qubits $i$ and $j$. Define their connected $Z$-correlation by
$C_{ij}\coloneq
\bra{\phi}Z_iZ_j\ket{\phi}
-
\bra{\phi}Z_i\ket{\phi}\bra{\phi}Z_j\ket{\phi}$.
Equation~(5) of~\cite{wolf2008arealaws}, after converting the mutual information to bits, gives
$I_\phi(i:j)\geq C_{ij}^2/(2\ln2)$.
Thus, if $I_\phi(i:j)<N^{-2}$, then
$\abs{C_{ij}}<\delta_N$, where
$\delta_N\coloneq\sqrt{2\ln2}/N$. Set
$X\coloneq\bra{\phi}Z_iZ_j\ket{\phi}$,
$U\coloneq\bra{\phi}Z_i\ket{\phi}$, and
$V\coloneq\bra{\phi}Z_j\ket{\phi}$,
so that $C_{ij}=X-UV$.
Define the good event
$G\coloneq\{\abs{U}\leq N^{-3/8},\abs{V}\leq N^{-3/8}\}$.
On $G$, $\abs{UV}\leq N^{-3/4}$. Thus, if $\abs{C_{ij}}<\delta_N$ and $G$ occurs, then
$\abs{X}\leq\abs{C_{ij}}+\abs{UV}
<\delta_N+N^{-3/4}$.
Equivalently,
\begin{equation}
\{\abs{C_{ij}}<\delta_N\}
\subseteq
\{\abs{X}\leq\delta_N+N^{-3/4}\}
\cup
G^c.
\label{eq:connected-correlation-event-split}
\end{equation}
For sufficiently large $n$, $\delta_N+N^{-3/4}\leq1$. Applying the anti-concentration bound above to the nonidentity Pauli $Z_iZ_j$ gives
$\Pr[\abs{X}\leq\delta_N+N^{-3/4}]
\leq
2(\delta_N+N^{-3/4})\sqrt N
=
O(N^{-1/4})$. For the second event,
$\mbb{E}[U^2]=\mbb{E}[V^2]=1/(N+1)$.
Markov's inequality gives
$\Pr[\abs{U}>N^{-3/8}]
=
\Pr[U^2>N^{-3/4}]
\leq
N^{3/4}\mbb{E}[U^2]
\leq
N^{-1/4}$,
and the same holds for $V$. Thus
$\Pr[G^c]\leq2N^{-1/4}$. Combining these bounds with Eq.~\ref{eq:connected-correlation-event-split} gives
$\Pr[\abs{C_{ij}}<\delta_N]=O(N^{-1/4})$,
and therefore
$\Pr[I_\phi(i:j)<N^{-2}]=O(N^{-1/4})$.
A union bound over the $\binom n2$ pairs gives
$\Pr[E_L^c]
=
O(n^2N^{-1/4})
=
O(n^22^{-n/4})
=
o(1)$.
\end{proof}
We next show that a Haar-random state satisfies the small-region mutual-information condition $E_U$ with high probability.

\begin{claim}[Small-region mutual information upper bound]
\label{clm:mut-info-upper-bound}
For all sufficiently large $n$,
$\Pr_{\phi\sim\Haar_n}[E_U^c]
\leq
\exp\pbra{-\Omega(2^n/n^2)}
=
o(1)$.
\end{claim}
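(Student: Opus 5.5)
We aim to show that for a Haar-random $n$-qubit state, every pair of disjoint subsets $A,B$ with $\abs{A},\abs{B}<s_n\le n/4$ has $I_\phi(A:B)\le 1/4$, except with probability $\exp(-\Omega(2^n/n^2))$. The plan is to fix one such pair, bound $I_\phi(A:B)$ through near-maximal entropies of small subsystems, apply measure concentration on the sphere, and finish with a union bound over all pairs.

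\textbf{Entropy reduction.} Write $a=\abs{A}$, $b=\abs{B}$. Then $\abs{AB}=a+b<n/2$, so $AB$ is a minority of the qubits, and $d_{AB}=2^{a+b}\le 2^{n/2}$. Since $S(\rho_A)\le a$ and $S(\rho_B)\le b$,
\[
I_\phi(A:B)\le (a+b)-S(\rho_{AB}).
\]
It therefore suffices to show that $S(\rho_{AB})\ge \abs{AB}-1/4$ with overwhelming probability. We will control $S(\rho_{AB})$ through the collision entropy, using $S(\rho)\ge S_2(\rho)=-\log_2\Tr\rho^2$. Thus it is enough to show
\[
\Tr\rho_{AB}^2\le 2^{1/4}\,2^{-\abs{AB}}.
\]

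\textbf{Expectation and concentration.} For a bipartition into $C=AB$ and its complement $\bar C$, the standard Haar formula gives
\[
\mathbb E\Tr\rho_C^2=\frac{d_C+d_{\bar C}}{Nd_C/\,\cdots}
\]
more precisely $\mathbb E\Tr\rho_C^2=(d_C+d_{\bar C})/(N+1)$. This equals $d_C^{-1}\bigl(1+d_C^2/N\bigr)\cdot N/(N+1)$. Because $d_C^2\le 2^{n}$ only at the extreme case $a+b=n/2$, while here $a+b\le n/2-2$, we get $d_C^2/N\le 1/16$. Hence the mean is at most $1.07\,d_C^{-1}$, comfortably below $2^{1/4}d_C^{-1}\approx1.19\,d_C^{-1}$.

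Next, the map $\phi\mapsto\Tr\rho_C^2$ should concentrate via Lévy's lemma on the real sphere $S^{2N-1}$. Its Lipschitz constant is at most $4\sqrt{\Tr\rho_C^2}\cdot$(const), and we use the crude bound $L\le 4$. To turn the fluctuation into a relative error, we need it to be of order $0.1/d_C\ge 2^{-n/2}/10$. Lévy's lemma then gives a failure probability of
\[
\exp\bigl(-c N (2^{-n/2})^2/L^2\bigr)=\exp(-\Omega(1)),
\]
which is too weak. Getting a sharper Lipschitz constant is therefore the main obstacle.

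\textbf{Sharpening the Lipschitz bound.} We will instead use $f(\phi)=\sqrt{\Tr\rho_C^2}=\norm{\rho_C}_2$, which is $2$-Lipschitz. It has median $\approx d_C^{-1/2}$, and we need deviations of order $0.05\,d_C^{-1/2}$. This yields a failure probability of
\[
\exp\bigl(-cN\,d_C^{-1}\bigr)\le \exp\bigl(-c2^{n/2}\bigr).
\]
This is better than the claimed $\exp(-\Omega(2^n/n^2))$ only in some regimes. If matching the stated rate is needed, we will instead bound the entropies through the Pauli expectations of weight at most $a+b$, each of which concentrates as $\exp(-\Omega(2^n\varepsilon^2))$ by Example~54 of \cite{mele2024haar}. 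Using Pauli-weight smallness with $\varepsilon\approx 1/n$ gives the $2^n/n^2$ exponent, provided the entropy deficit can be controlled by $d_C\sum_P\langle P\rangle^2$. Either bound suffices for $o(1)$.

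\textbf{Union bound.} There are at most $3^n$ pairs $(A,B)$. Multiplying the per-pair failure probability by $3^n$ still leaves $\exp(-\Omega(2^{n/2}))$, or the stated rate, which is $o(1)$. This proves that $E_U$ holds with high probability.
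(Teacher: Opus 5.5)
You have a genuine gap, and it is the rate. Your purity argument is sound as far as it goes. With $C=A\cup B$, the chain $I_\phi(A:B)\le|C|-S(\rho_C)\le|C|-S_2(\rho_C)$, the mean bound $\mathbb{E}\Tr\rho_C^2\le1.07/d_C$, and L\'evy's lemma for the $2$-Lipschitz map $\phi\mapsto\norm{\rho_C}_2$ give a per-pair failure probability $\exp(-\Omega(N/d_C))\le\exp(-\Omega(2^{n/2}))$. This survives the union bound over $3^n$ pairs, so you have proved $\Pr[E_U^c]=o(1)$, which is all Lemma~\ref{lem:mut-info-largeness} needs. It does not prove the claimed rate: since $2^{n/2}\ll2^n/n^2$, your bound is asymptotically much weaker than $\exp(-\Omega(2^n/n^2))$, not ``better in some regimes.''

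No sharper Lipschitz estimate can repair this, because the collision-entropy event is genuinely too likely. Fix a product vector $\ket{u}\ket{w}$ across $C:\bar C$. Since $\Tr\rho_C^2\ge\langle u|\rho_C|u\rangle^2\ge|\langle\phi|u,w\rangle|^4$, the event $S_2(\rho_C)\le|C|-1/4$ holds whenever $|\langle\phi|u,w\rangle|^2\ge2^{1/8}d_C^{-1/2}$. For Haar $\phi$ this has probability $(1-2^{1/8}d_C^{-1/2})^{N-1}\ge\exp(-O(Nd_C^{-1/2}))$. For the largest tested $C$ ($|C|\approx n/2$) that is $\exp(-O(2^{3n/4}))$, far above $\exp(-c\,2^n/n^2)$.

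Your Pauli fallback controls the same quantity $d_C\Tr\rho_C^2-1=\sum_{P\neq I}\langle P\rangle^2$ (over Paulis supported in $C$), so it hits the same wall. It also fails on its own terms. There are up to $4^{|C|}-1\approx2^{n-4}$ such Paulis, so $|\langle P\rangle|\lesssim1/n$ bounds the sum only by about $2^{n-4}/n^2$. The pointwise bound you actually need, $\langle P\rangle^2\lesssim2^{-n}$, makes the tail bound $\exp(-\Omega(2^n\varepsilon^2))$ trivial.

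The missing idea is to concentrate the von Neumann entropy directly, as the paper does via Corollary~VI.1 of \cite{hayden2006aspects}. The map $\phi\mapsto S(\rho_C)$ is $O(\log d_C)=O(n)$-Lipschitz on the sphere. Its mean is at least $|C|-\beta_C$ with $\beta_C=2^{2|C|-n}/\ln2\le1/(16\ln2)<1/8$ for every tested $C$. You then need only a constant deviation $\alpha=1/8$, so L\'evy's lemma gives $\exp(-\Omega(N\alpha^2/n^2))$ per subsystem. A union bound over at most $2^n$ subsystems gives $\exp(-\Omega(2^n/n^2))$. The spike events that defeat the purity route are harmless here: a squared overlap $p$ with a product vector lowers $S(\rho_C)$ by only about $p\log_2(pd_C)$ bits.
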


\begin{proof}
Fix disjoint subsets $A,B\subseteq[n]$ with $|A|,|B|<s_n$, and let
$\rho_A$ and $\rho_B$ denote the corresponding reduced states of $\ket{\phi}$. Assume both subsets are nonempty. Let
$R\coloneq A\cup B$,
$r\coloneq|R|=|A|+|B|$, and let $\rho_R$ denote the reduced state on $R$.
Since
$S(\rho_A)\leq|A|$
and
$S(\rho_B)\leq|B|$,
we have
$I_\phi(A:B)\leq r-S(\rho_R)$.
It is therefore enough to show that every relevant $R$ satisfies
$S(\rho_R)>r-1/4$. 

For all sufficiently large $n$,
$s_n=\lfloor n/4\rfloor$, and hence
$r\leq2s_n-2\leq n/2-2$.
Corollary~VI.1 of~\cite{hayden2006aspects} implies that, for every fixed $\alpha>0$, except with probability
$2^{n-1}2^{-\Omega(2^n\alpha^2/n^2)}$,
every subsystem $R\subseteq[n]$ simultaneously satisfies
$S(\rho_R)\geq x_R-\alpha-\beta_R$,
where
$x_R\coloneq\min\{|R|,n-|R|\}$
and
$\beta_R\coloneq(\ln2)^{-1}2^{2x_R-n}$. For every subsystem relevant to $E_U$, we have
$x_R=r$
and
$\beta_R\leq1/(16\ln2)<1/8$.
Taking $\alpha=1/8$, except with probability
$\exp\pbra{-\Omega(2^n/n^2)}$,
every relevant $R$ simultaneously satisfies
$S(\rho_R)>r-1/4$.
Hence
$I_\phi(A:B)<1/4$
for every tested pair $A,B$, which proves
$\Pr[E_U^c]\leq\exp\pbra{-\Omega(2^n/n^2)}=o(1)$.
\end{proof}

\begin{lemma}[Largeness of the mutual information promise property]
\label{lem:mut-info-largeness}
The mutual information promise property of
Definition~\ref{def:mutual-info-promise-property}
satisfies the largeness condition of
Definition~\ref{def:natural-promise-property}.
\end{lemma}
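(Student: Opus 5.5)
The plan is to combine the two preceding claims by a union bound over the two ways a Haar-random state can fail to be in $P_n^{yes}$. By Definition~\ref{def:mutual-info-promise-property}, $\ket{\phi}\in P_n^{yes}$ exactly when both $E_L$ and $E_U$ occur. Here $E_L$ is stated with threshold $2^{-2n}=\alpha_n^{yes}$, and $E_U$ is stated with threshold $1/4=\beta_n^{yes}$ over the same family of pairs $|A|,|B|<s_n$. This gives
\[
\Pr_{\phi\sim\Haar_n}\sbra{\ket{\phi}\notin P_n^{yes}}
\leq
\Pr[E_L^c]+\Pr[E_U^c].
\]

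Next I would substitute the two bounds. Claim~\ref{clm:mut-info-lower-bound} gives $\Pr[E_L^c]=O(n^2 2^{-n/4})$, and Claim~\ref{clm:mut-info-upper-bound} gives $\Pr[E_U^c]\leq\exp(-\Omega(2^n/n^2))$, both for all sufficiently large $n$. Their sum is $o(1)$, so for all sufficiently large $n$ it is at most $1/2$. Hence $\Pr[\ket{\phi}\in P_n^{yes}]\geq 1/2\geq 2^{-cn}$ for any constant $c>0$, for instance $c=1$, which is exactly the largeness condition in Definition~\ref{def:natural-promise-property}.

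Measurability of $P_n^{yes}$, which the constructivity clause assumes, also needs a check. Each mutual information $I_\psi(A:B)$ is a continuous function of $\ket\psi$, since reduced states depend continuously on the state and von Neumann entropy is continuous on finite-dimensional density matrices. So $P_n^{yes}$ is a finite intersection of closed sets, hence closed and Haar-measurable.

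I do not expect a real obstacle, since all the analytic content sits in the two earlier claims. The only step needing care is the bookkeeping just described: confirming that the events $E_L$ and $E_U$ use precisely the YES-thresholds and the subset-size bound $s_n$ of the definition, which they do by construction.
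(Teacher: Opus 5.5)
Your proposal is correct and matches the paper's proof: both identify $P_n^{yes}=E_L\cap E_U$, union-bound the failure probabilities from Claims~\ref{clm:mut-info-lower-bound} and~\ref{clm:mut-info-upper-bound} to get acceptance probability $1-o(1)$, and conclude it is at least $1/2\geq 2^{-cn}$ for all sufficiently large $n$. Your continuity argument showing that $P_n^{yes}$ is closed, and hence Haar-measurable, is a valid detail that the paper leaves implicit.
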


\begin{proof}
Since $P_n^{yes}=E_L\cap E_U$, a union bound and
Claims~\ref{clm:mut-info-lower-bound} and~\ref{clm:mut-info-upper-bound} give
$\Pr_{\phi\sim\Haar_n}[\ket{\phi}\notin P_n^{yes}]=o(1)$.
Therefore
$\Pr_{\phi\sim\Haar_n}[\ket{\phi}\in P_n^{yes}]=1-o(1)$.
In particular, this probability is at least $1/2\geq2^{-n}$ for all sufficiently large $n$.
\end{proof}

\paragraph{Constructivity.}

To prove constructivity, we give a deterministic classical tester that decides the promise property from a sufficiently precise \amplistwords. We'll have to bound two sources of error, from truncating and renormalizing the \amplistwords, and from numerically evaluating the mutual informations of the resulting state. We will make use of the two following facts for bounding these errors.

\begin{claim}[Continuity of mutual information]
\label{clm:mutual-info-continuity}
Let $\rho$ and $\sigma$ be $n$-qubit states satisfying
$D(\rho,\sigma)\leq\eps_{\mathrm{tr}}$, where
$\eps_{\mathrm{tr}}\in[0,1/2]$ is the trace-distance error bound. Then, for every pair of nonempty disjoint subsets $A,B\subseteq[n]$,
$\abs{I_\rho(A:B)-I_\sigma(A:B)}
\leq
2\eps_{\mathrm{tr}}\pbra{|A|+|B|}
+3h_2(\eps_{\mathrm{tr}})$,
where
$h_2(x)\coloneq-x\log_2x-(1-x)\log_2(1-x)$.
\end{claim}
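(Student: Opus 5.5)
We will prove the bound term by term, starting from the expansion $I_\rho(A:B)=S(\rho_A)+S(\rho_B)-S(\rho_{AB})$ and the analogous expansion for $\sigma$. The triangle inequality then gives
\[
\abs{I_\rho(A:B)-I_\sigma(A:B)}
\leq
\abs{S(\rho_A)-S(\sigma_A)}
+\abs{S(\rho_B)-S(\sigma_B)}
+\abs{S(\rho_{AB})-S(\sigma_{AB})}.
\]
It therefore suffices to bound each entropy difference by a continuity estimate whose dimension-dependent part scales with the number of qubits of the corresponding subsystem.

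The first step is to pass from the global states to the marginals. Partial trace is a quantum channel, and trace distance is contractive under channels. Hence
\[
D(\rho_A,\sigma_A),\quad D(\rho_B,\sigma_B),\quad D(\rho_{AB},\sigma_{AB})
\]
are all at most $\eps_{\mathrm{tr}}$.

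The second step applies the Fannes--Audenaert inequality in its standard form. For states on a $d$-dimensional space with $D\leq t\leq 1-1/d$, it gives $\abs{S(\rho)-S(\sigma)}\leq t\log_2(d-1)+h_2(t)$. We use the weaker consequence $\abs{S(\rho)-S(\sigma)}\leq t\log_2 d+h_2(t)$.

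Since $h_2$ is nondecreasing on $[0,1/2]$, we may substitute $t=\eps_{\mathrm{tr}}$ in place of the actual distance. The marginals $A$, $B$, and $AB$ have $\log_2 d$ equal to $|A|$, $|B|$, and $|A|+|B|$ respectively. This yields the three bounds
\[
\eps_{\mathrm{tr}}|A|+h_2(\eps_{\mathrm{tr}}),\qquad
\eps_{\mathrm{tr}}|B|+h_2(\eps_{\mathrm{tr}}),\qquad
\eps_{\mathrm{tr}}(|A|+|B|)+h_2(\eps_{\mathrm{tr}}).
\]
Summing them gives exactly $2\eps_{\mathrm{tr}}(|A|+|B|)+3h_2(\eps_{\mathrm{tr}})$, which is the claimed bound.

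The main obstacle is the hypothesis range of Fannes--Audenaert. The inequality requires $t\leq 1-1/d$, and this can fail when $d=2$, i.e. for a single-qubit marginal, and only if $\eps_{\mathrm{tr}}>1/2$. Our assumption $\eps_{\mathrm{tr}}\leq 1/2=1-1/2\leq 1-1/d$ covers every case, because $A$ and $B$ are nonempty and so $d\geq 2$.

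We also need the monotonicity of $h_2$ on $[0,1/2]$ to replace the true distance by the upper bound $\eps_{\mathrm{tr}}$. The term $t\log_2 d$ is monotone in $t$ as well, so this replacement is harmless.

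If one prefers to avoid Audenaert's sharp constant, the original Fannes bound would give a slightly different form. We will cite Audenaert's version, since its constants match the statement directly.
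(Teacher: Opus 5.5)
Your proposal is correct and follows essentially the same route as the paper: contractivity of trace distance under partial trace, the Fannes--Audenaert inequality on each of the marginals $A$, $B$, $AB$ (with $\log_2 d$ equal to $|A|$, $|B|$, $|A|+|B|$), and the triangle inequality on $I(A:B)=S(\rho_A)+S(\rho_B)-S(\rho_{AB})$. You spell out two details the paper leaves implicit: the hypothesis range $t\leq 1-1/d$, and the monotonicity that lets you substitute $\eps_{\mathrm{tr}}$ for the actual distance.
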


\begin{proof}
Partial trace cannot increase trace distance, so the reduced states on $A$, $B$, and $AB$ are each at trace distance at most $\eps_{\mathrm{tr}}$. By the Fannes--Audenaert inequality~\cite{audenaert2007sharp},
$\abs{S(\rho_R)-S(\sigma_R)}
\leq
\eps_{\mathrm{tr}}|R|+h_2(\eps_{\mathrm{tr}})$
for $R\in\{A,B,AB\}$.
Applying the triangle inequality to
$I(A:B)=S(\rho_A)+S(\rho_B)-S(\rho_{AB})$
and using that $A$ and $B$ are disjoint gives the claimed bound.
\end{proof}
\begin{claim}[Efficient numerical evaluation of mutual information]
\label{clm:efficient-mutual-info-evaluation}
Let $\omega$ be an $n$-qubit density matrix whose entries have rational real and imaginary parts of bit length at most $\ell$. Given disjoint subsets $A,B\subseteq[n]$ and an integer $r\geq1$, there is a deterministic classical algorithm that computes
$\widetilde I_\omega(A:B)$ satisfying
$\abs{\widetilde I_\omega(A:B)-I_\omega(A:B)}\leq2^{-r}$
in time $\poly(2^n,\ell,r)$.
\end{claim}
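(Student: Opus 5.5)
We plan to compute the three reduced density matrices $\omega_A$, $\omega_B$, $\omega_{AB}$ exactly by rational arithmetic, compute each entropy to additive accuracy $2^{-r}/3$, and combine. \textbf{Partial traces.} Each reduced density matrix is obtained from $\omega$ by summing $O(2^n)$ entries per output entry. The reduced matrices therefore have dimension at most $2^n$ and rational entries of bit length $\poly(n,\ell)$, and they are computed in time $\poly(2^n,\ell)$. Since $I_\omega(A:B)=S(\omega_A)+S(\omega_B)-S(\omega_{AB})$, it suffices to approximate $S(\tau)$ to within $2^{-r}/3$ for a Hermitian PSD rational matrix $\tau$ of dimension $d\leq 2^n$ and unit trace.

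\textbf{Eigenvalues.} We will approximate the eigenvalues $\lambda_1,\dots,\lambda_d$ of $\tau$ to additive precision $2^{-p}$ using a standard polynomial-time Hermitian eigenvalue algorithm. Concretely, we isolate the roots of the characteristic polynomial, which can be computed exactly over the rationals in $\poly(d,\ell)$ time. The real roots are then refined by bisection or Sturm sequences to $p$ bits in time $\poly(d,\ell,p)$. This is the route we will cite, since it is deterministic with explicit bit complexity. Finally, we clip the approximations to $[0,1]$ and compute $\sum_i \eta(\tilde\lambda_i)$, where $\eta(x)=-x\log_2 x$, evaluating each logarithm to $p$ bits.

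\textbf{Error analysis — the main obstacle.} The difficulty is that $\eta$ is not Lipschitz near $0$, so small eigenvalues are the delicate case. We handle it with the modulus of continuity: for $|x-y|\leq\delta\leq 1/e$ in $[0,1]$, we have $|\eta(x)-\eta(y)|\leq -\delta\log_2\delta + O(\delta)$. Summing over $d\leq 2^n$ eigenvalues gives a total error of $O(2^n\, p\, 2^{-p})$, plus the log-evaluation error, which is also $O(2^n 2^{-p})$. We therefore choose $p=r+2n+O(\log(n+r))$, so that the total error is at most $2^{-r}/3$. Because $p=\poly(n,r)$, every step runs in $\poly(2^n,\ell,r)$ time. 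Combining the three entropy estimates then yields $|\widetilde I_\omega(A:B)-I_\omega(A:B)|\leq 2^{-r}$.
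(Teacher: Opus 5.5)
Your argument is correct, but it computes the spectrum and controls the entropy error differently from the paper.

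\textbf{How the paper does it.} The paper also forms $\omega_A,\omega_B,\omega_{AB}$ exactly. It then cites a deterministic numerical eigensolver for rational Hermitian matrices and approximates every eigenvalue to error $\delta_{\mathrm{spec}}/d_R$, with $\delta_{\mathrm{spec}}=2^{-3(n+r+2)}$. It clips and renormalizes to get a genuine probability vector within total variation distance $\delta_{\mathrm{spec}}$ of the true spectrum. Finally it applies the Fannes--Audenaert inequality to the Shannon entropies and evaluates the logarithms with multiple-precision arithmetic.

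\textbf{How you do it.} You compute the characteristic polynomial exactly and isolate and refine its real roots. You then bound the error coordinatewise via $|\eta(x)-\eta(y)|\le\eta(|x-y|)$, summed over the $d\le 2^n$ eigenvalues. Since the true eigenvalues already lie in $[0,1]$, clipping suffices and no renormalization or global continuity inequality is needed. This route is more self-contained, using only exact algebraic algorithms and an elementary inequality. The paper's is shorter because the cited eigensolver returns all $d_R$ eigenvalues with multiplicity directly.

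\textbf{Multiplicities.} This is the one step you should make explicit. Sturm sequences and bisection locate the \emph{distinct} roots, but $S(\tau)=\sum_i\eta(\lambda_i)$ needs each eigenvalue counted with multiplicity. Those multiplicities can be very large here: for instance, $\omega_{AB}=\omega$ when $A\cup B=[n]$, and $\omega$ may have rank one. Either fix works in polynomial time: take a square-free decomposition of the characteristic polynomial via repeated gcds with its derivative, or count eigenvalues below a threshold $c$ through the inertia of $\tau-cI$.

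\textbf{Bit lengths.} The partial-trace entries can have bit length $O(2^n\ell)$ rather than $\poly(n,\ell)$, since you add up to $2^n$ rationals with unrelated denominators. This is harmless, because it stays within the $\poly(2^n,\ell)$ time budget.
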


\begin{proof}
If either $A$ or $B$ is empty, the algorithm outputs zero. Otherwise, it suffices to approximate
$S(\omega_A)$, $S(\omega_B)$, and $S(\omega_{AB})$. Fix $R\in\{A,B,AB\}$. The reduced density matrix $\omega_R$ can be formed exactly by partial trace. It has dimension
$d_R=2^{|R|}\leq2^n$
and rational entries of polynomial bit length. Deterministic finite-precision algorithms approximate the eigenvalues of a rational Hermitian matrix in time polynomial in the matrix dimension, input bit length, and requested precision~\cite{sobczyk2025hermitian}. Let
$\lambda^{(R)}=(\lambda_1,\ldots,\lambda_{d_R})$
be the eigenvalues of $\omega_R$ and set
$\delta_{\mathrm{spec}}\coloneq2^{-3(n+r+2)}$.
Approximate each $\lambda_i$ to additive error at most
$\delta_{\mathrm{spec}}/d_R$.
The resulting eigenvalue vector is within
$\delta_{\mathrm{spec}}$
of $\lambda^{(R)}$ in $\ell_1$ distance. After clipping negative entries to zero and renormalizing, we obtain a probability vector
$\widehat{\lambda}^{(R)}$
within
$2\delta_{\mathrm{spec}}$
in $\ell_1$ distance, hence within
$\delta_{\mathrm{spec}}$
in total variation distance. Let
$H(p)\coloneq-\sum_i p_i\log_2p_i$
denote Shannon entropy. Since von Neumann entropy is the Shannon entropy of the eigenvalues, Fannes--Audenaert~\cite{audenaert2007sharp} gives
$\abs{H(\widehat{\lambda}^{(R)})-S(\omega_R)}
\leq
n\delta_{\mathrm{spec}}+h_2(\delta_{\mathrm{spec}})
\leq
2^{-r-3}$.
The finite sum defining
$H(\widehat{\lambda}^{(R)})$
can also be evaluated to error at most
$2^{-r-3}$
using standard multiple-precision algorithms for the logarithm~\cite{brent1976fast}. Thus each entropy is approximated to error at most
$2^{-r-2}$,
and combining the three entropy estimates gives mutual-information error less than
$2^{-r}$.
\end{proof}

\begin{lemma}[Constructivity of the mutual information promise property]
\label{lem:mut-info-constructivity}
The mutual information promise property of
Definition~\ref{def:mutual-info-promise-property}
satisfies the constructivity condition of
Definition~\ref{def:natural-promise-property}.
\end{lemma}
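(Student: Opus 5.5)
To prove the lemma, I will build a deterministic tester $F$ with precision threshold $\tau(n)=Cn$ for a suitable constant $C$ (say $\tau(n)=6n+20$). Given $1^n$ and $\al_T(\psi)$ with $T\geq\tau(n)$, the tester first forms the rational truncated vector $v$. It rejects if $v=0$. Otherwise it sets $\omega=vv^\dagger/\langle v,v\rangle$, which is an exact rational density matrix with entries of bit length $\poly(n,T)$. For every pair of distinct qubits $i\neq j$, and every pair of disjoint $A,B\subseteq[n]$ with $|A|,|B|<s_n$, it calls \Cref{clm:efficient-mutual-info-evaluation} with precision $r=2n+4$. It accepts iff every pairwise estimate is at least $\tfrac34\cdot2^{-2n}$ and every small-region estimate is at most $3/8$. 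The number of subset pairs is at most $3^n$, and each call costs $\poly(2^n,T)$. The total running time is therefore polynomial in the amplitude-list length $\Theta(2^nT)$.

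The first error source is the distance from $\omega$ to $\psi$. As in the approximate-degree example, we have $\norm{v-\ket\psi}_2\leq\eta\coloneq\sqrt{2N}\,2^{-T}$, the normalized vector is within $2\eta$ of $\ket\psi$, and hence $D(\omega,\psi)\leq 2\eta\eqqcolon\eps_{\mathrm{tr}}$. With $T\geq 6n+20$ this gives $\eps_{\mathrm{tr}}\leq 2^{-5n-18}$. \Cref{clm:mutual-info-continuity} then bounds the mutual-information error by $2\eps_{\mathrm{tr}}n+3h_2(\eps_{\mathrm{tr}})$. The point needing care is that the pairwise thresholds are exponentially small, so the entropy term $h_2$ must be controlled. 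Using $h_2(x)\leq 2x\log_2(1/x)$ for small $x$, we get $h_2(\eps_{\mathrm{tr}})=O(n\,2^{-5n})$, which is far below $2^{-2n-4}$ for large $n$. The numerical error $2^{-r}=2^{-2n-4}$ is of the same size. Together, the total additive error is at most $2^{-2n-3}$ for sufficiently large $n$. This is also at most $1/16$, which handles the $\beta$ thresholds easily.

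It remains to check the separation of YES and NO states. For a YES state, each $I_\psi(i:j)\geq2^{-2n}$, so each estimate is at least $2^{-2n}-2^{-2n-3}>\tfrac34 2^{-2n}$. Each small-region estimate is at most $1/4+1/16<3/8$, so the tester accepts. For a NO state, either some $I_\psi(i:j)\leq 2^{-2n-1}$, giving an estimate of at most $2^{-2n-1}+2^{-2n-3}<\tfrac34 2^{-2n}$, or some small-region value is at least $1/2$, giving an estimate of at least $1/2-1/16>3/8$. In both cases the tester rejects. The tester is deterministic, and the finitely many small $n$ can be hardcoded or absorbed by enlarging $\tau$. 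The main obstacle is the one already flagged: keeping the Fannes--Audenaert $h_2$ term below the exponentially small $\alpha$ gap. This is what forces $\tau(n)$ to be a large linear function rather than a constant, and I would verify that inequality first.
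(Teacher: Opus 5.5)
Your proposal is correct and follows essentially the same route as the paper. It uses the same midpoint tester (thresholds $3\cdot 2^{-2n-2}$ and $3/8$), combines the truncation and renormalization bound with \Cref{clm:mutual-info-continuity} and \Cref{clm:efficient-mutual-info-evaluation} at precision $r=2n+4$ to get total error $2^{-2n-3}$ (a quarter of the $\alpha$-gap), and uses the same $3^n\cdot\poly(2^n,T)$ running-time count. The only differences are cosmetic: you take $\tau(n)=6n+20$ instead of the paper's $4n+11$, and your error bounds actually hold for every $n\geq 1$, so no hardcoding of small $n$ is needed.
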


\begin{proof}
The two promise gaps are
$\Delta_{\alpha,n}
=
\alpha_n^{yes}-\alpha_n^{no}
=
2^{-2n-1}$
and
$\Delta_\beta
=
\beta_n^{no}-\beta_n^{yes}
=
1/4$.
Let
$\Delta_n\coloneq2^{-2n-1}$
denote the smaller gap, and define the midpoint thresholds
$m_{\alpha,n}\coloneq3\cdot2^{-2n-2}$
and
$m_\beta\coloneq3/8$. If either $A$ or $B$ is empty, then $I_\psi(A:B)=0$, so the pair automatically satisfy the regional condition and do not need to be tested. 

Given $1^n$ and a truncated \amplistwords $\al_T(\psi)$, the tester proceeds as follows.
\begin{enumerate}
    \item Interpret $\al_T(\psi)$ as a vector $v\in\CC^{2^n}$ and let
    $q\coloneq\norm{v}_2^2$.
    If $q=0$, reject. Otherwise define
    $\widehat{\rho}\coloneq vv^\dagger/q$.

    \item For every pair of distinct qubits $i,j$ and every pair of nonempty disjoint subsets $A,B\subseteq[n]$ with $|A|,|B|<s_n$, compute approximations
    $\widetilde I_{\widehat{\rho}}(i:j)$ and
    $\widetilde I_{\widehat{\rho}}(A:B)$.

    \item Accept if
    $\widetilde I_{\widehat{\rho}}(i:j)\geq m_{\alpha,n}$
    for every $i\neq j$ and
    $\widetilde I_{\widehat{\rho}}(A:B)\leq m_\beta$
    for every tested pair $A,B$. Reject otherwise.
\end{enumerate}

We first bound the error caused by truncating the input state. By Definition~\ref{def:state-amp-list},
$\norm{v-\ket{\psi}}_2
\leq
2^{(n+1)/2-T}
\eqqcolon
\delta_T$.
The reverse triangle inequality gives
$\abs{1-\norm{v}_2}\leq\delta_T$.
Hence, for
$\widehat{\ket{\psi}}\coloneq v/\norm{v}_2$,
$\norm{\widehat{\ket{\psi}}-\ket{\psi}}_2
\leq2\delta_T$.
Let
$\eps_T\coloneq2\delta_T=2^{(n+3)/2-T}$.
Then $\widehat{\rho}$ is the density matrix corresponding to
$\widehat{\ket{\psi}}$, and
$D(\widehat{\rho},\ket{\psi}\!\bra{\psi})\leq\eps_T$. For $\eps_T\leq1/2$, Claim~\ref{clm:mutual-info-continuity} gives, for every tested pair,
$\abs{I_{\widehat{\rho}}(A:B)-I_\psi(A:B)}
\leq
4n\eps_T+3h_2(\eps_T)$.
Set
$\tau(n)\coloneq4n+11$.
For every $T\geq\tau(n)$, using
$h_2(x)\leq2x\log_2(1/x)$
and
$\Delta_n=2^{-2n-1}$ gives $\frac{4n\eps_T+3h_2(\eps_T)}{\Delta_n}
\leq
2^{-3n/2-17/2}(25n+57)
\leq
\frac18$. Thus the error due to truncating the input state satisfies
$\abs{I_{\widehat{\rho}}(A:B)-I_\psi(A:B)}
\leq\Delta_n/8$. For the numerical evaluation error, choose
$r_n\coloneq2n+4$,
so that
$2^{-r_n}=\Delta_n/8$.
The entries of
$\widehat{\rho}=vv^\dagger/q$
have rational real and imaginary parts of bit length polynomial in $T$ and $n$, so Claim~\ref{clm:efficient-mutual-info-evaluation} gives
$\abs{\widetilde I_{\widehat{\rho}}(A:B)-I_{\widehat{\rho}}(A:B)}
\leq\Delta_n/8$.
Combining the truncation and numerical evaluation errors by the triangle inequality, every estimate used by the tester satisfies
$\abs{\widetilde I_{\widehat{\rho}}(A:B)-I_\psi(A:B)}
\leq\Delta_n/4$.

Since $\Delta_n$ is the smaller of the two promise gaps, this error is strictly less than half the distance between the yes- and no-thresholds in either condition. Thus, if $\ket{\psi}\in P_n^{yes}$, every pairwise estimate remains above $m_{\alpha,n}$ and every regional estimate remains below $m_\beta$. If $\ket{\psi}\in P_n^{no}$, a pair or pair of subsets witnessing this remains on the rejecting side of the corresponding midpoint. Therefore the tester accepts every state in $P_n^{yes}$ and rejects every state in $P_n^{no}$.

Finally, let $L$ be the total input length. Since the \amplistwords contains $2^n$ complex amplitudes represented to $T$ bits of precision,
$L=\Theta(n+2^nT)$.
There are $O(n^2)$ qubit pairs and at most $3^n$ ordered disjoint subset pairs. Each mutual information can be evaluated in
$\poly(2^n,T,n)$
time by Claim~\ref{clm:efficient-mutual-info-evaluation}. Since
$3^n\leq(2^n)^2$
and
$2^n,T,n=O(L)$,
the total running time is $\poly(L)$.
\end{proof}

\paragraph{Usefulness.} We now apply the lower bound from \cite[Theorem~3.6]{parham2025mh} to show that the mutual information promise property is useful against clean-output $\AxCQ{1}$ state preparation.
\begin{lemma}[Usefulness of the mutual information promise property]
\label{lem:mutual-info-usefulness}
The mutual information promise property of
Definition~\ref{def:mutual-info-promise-property}
is $\varepsilon_{\mathrm{MI}}$-useful against clean-output $\AxCQ{1}$ state preparation, where
$\varepsilon_{\mathrm{MI}}(n)\coloneq2^{-4n-10}$.
\end{lemma}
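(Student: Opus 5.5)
We argue by contraposition against \cite[Theorem~3.6]{parham2025mh}. Suppose a clean-output $\AxCQ{1}$ circuit family (an arbitrary Clifford circuit followed by a $\QNC^0$ circuit, with ancillas returned to $\ket{0}$) prepares states $\sigma_n$ with $\frac12\tnorm{\ket{\psi_n}\bra{\psi_n}-\sigma_n}\leq\varepsilon_{\mathrm{MI}}(n)$. We must show that $\ket{\psi_n}\in P_n^{no}$ for all sufficiently large $n$. Suppose instead that $\ket{\psi_n}\notin P_n^{no}$. Then by \Cref{def:mutual-info-promise-property} two conditions hold. First, $I_{\psi}(i:j)>\alpha_n^{no}=2^{-2n-1}$ for every pair $i\neq j$. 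Second, $I_\psi(A:B)<\beta_n^{no}=1/2$ for all disjoint $A,B$ with $|A|,|B|<s_n$. These are exactly the hypotheses of \cite[Theorem~3.6]{parham2025mh} with $k=0$, $s=s_n$, $\alpha=2^{-2n-1}$, and $\beta=1/2$ (and we may shrink $\beta$ slightly so that $\alpha\leq\beta<1$).

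The plan is then to recall the quantitative form of that theorem. It says that any approximate $\AxCQ{1}$ preparation of such a state, in the clean-output setting, must either have error larger than some explicit function of $(\alpha,\beta,s,n)$ or violate the circuit-structure constraint. We expect that function to be of the form $\mathrm{poly}(\alpha)$ divided by something at most exponential in $n$. With $\alpha=2^{-2n-1}$ this should give a required error threshold of order $2^{-4n-O(1)}$, which is why $\varepsilon_{\mathrm{MI}}=2^{-4n-10}$ is chosen. The key steps are as follows.

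\begin{enumerate}
\item Instantiate the parameters and check that $s_n=\max\{2,\lfloor n/4\rfloor\}$ meets the theorem's requirement on the region size, namely that the $\QNC^0$ lightcones are smaller than $s_n$ once $n$ is large, since the depth is constant.
\item Using the continuity bound of \Cref{clm:mutual-info-continuity}, transfer the mutual-information conditions from $\psi_n$ to the prepared state $\sigma_n$. At trace distance $\varepsilon_{\mathrm{MI}}$ the perturbation is $O(n\,2^{-4n})+3h_2(2^{-4n-10})$. This is far below $1/4$, so the upper bound $\beta$ survives with room to spare. The pairwise lower bound of order $2^{-2n}$ also survives, since $h_2(2^{-4n-10})\approx(4n+10)2^{-4n-10}\ll2^{-2n-2}$.
\item Apply the exact version of the lower bound to $\sigma_n$. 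The Clifford-then-$\QNC^0$ structure forces a pair of qubits with zero mutual information, or a pair of small regions with mutual information at least $1$ (an integer-valued stabilizer contribution). This contradicts the transferred bounds.
\end{enumerate}

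The main obstacle is step~2 combined with the pairwise condition. The pairwise lower bound is only exponentially small, so continuity must be controlled at the level of $2^{-2n}$. The $h_2$ term is the delicate one, and we must verify that $3h_2(\varepsilon)+4\varepsilon<2^{-2n-2}$ at $\varepsilon=2^{-4n-10}$. This holds because $h_2(\varepsilon)\leq 2\varepsilon\log_2(1/\varepsilon)=O(n2^{-4n})$.

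If \cite[Theorem~3.6]{parham2025mh} is stated with its own approximation tolerance instead of exactly, we would simply verify that $\varepsilon_{\mathrm{MI}}$ is below that tolerance for these parameters and skip step~2. Clean output is used so that $\sigma_n$ is itself the pure output of a unitary $\AxCQ{1}$ circuit, which the cited theorem requires. We conclude that $\ket{\psi_n}\in P_n^{no}$ for all large $n$, which is $\varepsilon_{\mathrm{MI}}$-usefulness.
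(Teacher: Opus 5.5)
Your argument is correct, but it takes a different route from the paper. The paper never transfers the promise conditions to the prepared state. It feeds $\alpha=2^{-2n-1}$, $\beta=1/2$, $s=s_n$ directly into the quantitative form of \cite[Theorem~3.6]{parham2025mh}. That theorem says one of $\delta_n>(\alpha/(1+3e))^{\ln 4}$, $d_n\geq\frac14\log s_n$, or $d_n=\Omega(\log(1/\delta_n))$ must hold, where $\delta_n$ is the trace-norm error. The paper checks that all three fail when $\delta_n\leq 2^{-4n-9}$, and handles $\delta_n=0$ separately by the exact argument. The first check uses $(\alpha/(1+3e))^{\ln4}>2^{-3n-8}$, not the $2^{-4n-O(1)}$ threshold you guessed.

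You instead apply \Cref{clm:mutual-info-continuity} only on the promise regions (pairs, and $|A|+|B|<2s_n$). This gives, for the exactly prepared pure state, $I_{\phi_n}(i:j)>0$ for all pairs and $I_{\phi_n}(A:B)<1$ on small regions. After that you need only the exact dichotomy, and it does hold:
\begin{itemize}
\item Take data qubits $i,j$ whose doubled lightcones $F(L(i))$, $F(L(j))$ under the depth-$d$ layer are disjoint. Such a $j$ exists once $n$ exceeds roughly $16^d$.
\item Then $I_{\phi_n}(i:j)\leq I_\Phi(L(i):L(j))\in\ZZ_{\geq0}$ for the stabilizer state $\Phi$.
\item If this integer is at least $1$, running the circuit backwards gives mutual information at least $1$ between $F(L(i))$ and $F(L(j))$, each of size at most $4^d<s_n$.
\item Clean output is exactly what lets you drop the ancilla qubits in these regions without changing the mutual information.
\end{itemize}
So in your step~1 the relevant size is that of the doubled lightcones, not single lightcones. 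Also, no shrinking of $\beta$ is needed, since $\beta=1/2<1$ already.

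What your route buys is a self-contained treatment of the approximation. There is no special case $\delta_n=0$, and no $\log(1/\delta_n)$ alternative, because continuity is never applied on lightcone-sized regions. Since you only need positivity and a bound below $1$, it would also tolerate error up to about $2^{-2n}/O(n)$ rather than $2^{-4n-10}$. What the paper's route buys is brevity: the lemma reduces to a citation plus arithmetic.
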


\begin{proof}
Let $\{\ket{\psi_n}\}$ be a pure-state family that is
$\varepsilon_{\mathrm{MI}}$-approximately cleanly preparable by $\AxCQ{1}$.
For each $n$, let $C_n=U_nK_n$ be a clean approximating circuit, where $K_n$ is a Clifford circuit and $U_n$ is a $\QNC^0$ circuit of depth $d_n\leq d$ for some constant $d$ independent of $n$. Thus, for some number $a_n$ of work qubits,
$C_n\ket{0^{n+a_n}}
=
\ket{\phi_n}\ket{0^{a_n}}$,
and
$D(\ket{\psi_n}\!\bra{\psi_n},\ket{\phi_n}\!\bra{\phi_n})
\leq\varepsilon_{\mathrm{MI}}(n)$.

Suppose, for contradiction, that
$\ket{\psi_n}\notin P_n^{no}$
for infinitely many $n$. For every such $n$, Definition~\ref{def:mutual-info-promise-property} gives
$I_{\psi_n}(i:j)>2^{-2n-1}$
for every $i\neq j$, and
$I_{\psi_n}(A:B)<1/2$
for every pair of disjoint subsets $A,B\subseteq[n]$ with
$|A|,|B|<s_n$. 
We apply the $a=0$ case of \cite[Theorem~3.6]{parham2025mh} with
$k=0$, $\alpha=2^{-2n-1}$, $\beta=1/2$, and $s=s_n$.
The original $a_n$ work qubits are the qubits required to be returned to $\ket{0}$.

Let
$\delta_n\coloneq
\norm{\ket{\psi_n}\!\bra{\psi_n}-\ket{\phi_n}\!\bra{\phi_n}}_1$.
Since our approximation error is measured in trace distance,
$\delta_n\leq2\varepsilon_{\mathrm{MI}}(n)=2^{-4n-9}$.
The resulting lower bound implies that at least one of
$\delta_n>\left(\frac{2^{-2n-1}}{1+3e}\right)^{\ln4}$,
$d_n\geq\frac14\log s_n$,
or
$d_n=\Omega\pbra{\log(1/\delta_n)}$
must hold. For the first alternative,
$1+3e<16$ and $\ln4<3/2$ give
$\left(2^{-2n-1}/(1+3e)\right)^{\ln4}
>
2^{-3n-8}
>
2^{-4n-9}
\geq\delta_n$.
The second alternative fails for sufficiently large $n$ because
$s_n=\Theta(n)$ while $d_n\leq d$ is bounded.
If $\delta_n>0$, then
$\delta_n\leq2^{-4n-9}$
implies
$\log(1/\delta_n)\geq4n+9$,
so the third alternative also fails for sufficiently large $n$.
The case $\delta_n=0$ is ruled out by the exact-preparation argument preceding Theorem~3.6 in~\cite{parham2025mh}. Therefore
$\ket{\psi_n}\in P_n^{no}$
for all sufficiently large $n$, proving the required eventual membership and hence
$\varepsilon_{\mathrm{MI}}$-usefulness.
\end{proof}

Thus combining \Cref{lem:mut-info-largeness}, \Cref{lem:mut-info-constructivity}, and \Cref{lem:mutual-info-usefulness}, we have proved \Cref{clm:mutual-info-main-claim}.
This example is formulated in the clean-output preparation model, which is required for the usefulness argument above. A corresponding natural-proofs barrier therefore uses a \PRAL family that is also preparable in the clean-output model.

\subsection{Discussion and non-examples}

\paragraph{Non-example: Exact state lower bounds via transcendental numbers} 
Jia and Wolf construct explicit states and unitaries that require exponential-size circuits to prepare/implement \emph{exactly} \cite{jia2023hay}. Here we explain their proof technique and explain why it is not natural. 
They consider a state that has a large set of algebraically independent numbers: a collection of numbers is algebraically
independent if it satisfies no nonzero polynomial relation with
rational coefficients. 
Since amplitudes of a state are some polynomial expression in the entries of the individual gates, in order to have exponentially many independent algebraically independent numbers, we must have exponentially many gates. 

Their proof does not fit into our framework as a natural proof because the algebraic property of containing many algebraically-independent entries is \emph{not constructible}. Exact algebraic relations cannot, in general, be determined from numerical approximations. With only finite precision, we cannot even tell if an entry is irrational, regardless of runtime. 

\paragraph{Finding a natural property containing the nonnatural property}
In the example in \Cref{ssec:example-approximate-pauli-degree}, high felinity itself is not a natural property because it does not satisfy the largeness condition. Nevertheless by \Cref{lem:felinity-approximate-degree}, high felinity implies another property, high approximate Pauli degree, which does. Thus the lower bound for high-felinity states can still be viewed as a natural proof, using high approximate Pauli degree as the certificate.

 \providecommand{\unittodo}[1]{\par\noindent\textbf{[TODO: #1]}\par}

\section{Natural proofs for unitary implementation}
\label{sec:unitary-natural-proofs}
In this section we extend our natural-proofs framework to lower bounds
for unitary implementation. The abstract barrier carries over directly:
pseudorandom finite-precision matrix descriptions supported on
efficiently implementable unitaries would rule out natural unitary
lower-bound properties. Unlike in the state-preparation setting,
however, we do not instantiate this pseudorandom object from a standard
cryptographic assumption.

We show below that attempting to reproduce the statistical random-function stage of our state
construction for full unitary matrices would already imply a positive solution to the unitary synthesis problem, via the relaxed-net argument of Brakerski and Yuen~\cite{brakerski2026scalablePRU}. We therefore treat the unitary barrier conditionally, give an example of a known unitary lower-bound technique that is natural, and discuss its relation to scalable pseudorandom unitaries and unitary synthesis.

\subsection{Natural unitary properties and a conditional barrier}
\label{ssec:unitary-natural-definition}

Write $N=2^n$, and let $\matdesc_T(U)$ denote the full $N\times N$
matrix obtained by truncating the real and imaginary parts of every
entry of $U$ to $T$ fractional bits. When supplied to a classical
algorithm, this matrix is represented using a fixed standard binary
encoding of length $O(N^2(T+1))$.

\begin{definition}[Natural unitary promise property]
\label{def:unitary-natural-property}
A unitary promise property consists of disjoint sets
$P_n^{\mathrm{yes}},P_n^{\mathrm{no}}\subseteq\mathrm U(2^n)$,
with both sets invariant under global phase. It is \emph{natural}
if it satisfies:
\begin{enumerate}
    \item \emph{Constructivity.} There is a deterministic classical
    algorithm and a polynomially bounded precision threshold
    $\tau(n)$ such that, for every $T\geq\tau(n)$, the algorithm
    accepts $\matdesc_T(U)$ for every YES-unitary and rejects it
    for every NO-unitary, in time polynomial in the total input length.

    \item \emph{Largeness.} The YES-set is Haar-measurable and has
    probability at least $2^{-cn}$ under Haar measure on
    $\mathrm U(2^n)$, for some constant $c>0$ and all sufficiently
    large $n$.
\end{enumerate}
Outside the two promise sets the algorithm's behavior is unrestricted.
\end{definition}

The barrier we state below is agnostic to the particular notion of approximation for implementing the unitary. Fix any notion of $\epsilon$-approximation that is invariant under global phase, and fix the notion of implementation of an $n$-qubit unitary by a circuit in $\mathcal C$, e.g. whether ancillas are allowed. A natural unitary property is $\epsilon$-useful relative to this implementation notion if every unitary family approximately implementable by polynomial-size circuits in $\mathcal C$ belongs to the NO-set for all sufficiently large $n$.

The unitary analogue of a \PRAL family is a family with pseudorandom
finite-precision matrix descriptions in the following sense.

\begin{definition}[Pseudorandom finite-precision matrix descriptions]
\label{def:pseudorandom-unitary-matrix-descriptions}
A family
$\mathcal D^{\mathrm U}
=\{D^{\mathrm U}_{n,\lambda,T}\}_{n,\lambda,T\in\NN}$
of distributions on $\mathrm U(2^n)$ has
\emph{pseudorandom finite-precision matrix descriptions} if, for every
probabilistic polynomial-time classical distinguisher $A$ and every
pair of polynomially bounded functions
$\lambda,T:\NN\to\NN$,
\begin{align*}
\abs{
    \Pr_{U\gets D^{\mathrm U}_{n,\lambda(n),T(n)}}
    \sbra{
        A\pbra{
            1^n,1^{\lambda(n)},\matdesc_{T(n)}(U)
        }=1
    }
    -
    \Pr_{V\sim\operatorname{Haar}(\mathrm U(2^n))}
    \sbra{
        A\pbra{
            1^n,1^{\lambda(n)},\matdesc_{T(n)}(V)
        }=1
    }
}
\leq 2^{-\lambda(n)}
\end{align*}
for all sufficiently large $n$.
The running time of $A$ is measured with respect to its total input
length. Efficient sampling from $D^{\mathrm U}_{n,\lambda,T}$ is not
required.
\end{definition}

We say that such a family is $\eps$-approximately \emph{efficiently implementable} if there there exists some polynomial $p$ such that every unitary in the support of $D^{\mathrm U}_{n,\lambda,T}$ can be $\eps$-approximately implemented by a circuit of size at most $p(n, \lambda, T)$.

\begin{proposition}[Conditional unitary natural-proofs barrier]
\label{prop:unitary-natural-barrier}
Fix a notion of $\epsilon$-approximate unitary implementation. If a
family with pseudorandom finite-precision matrix descriptions is efficiently
$\epsilon$-approximately implementable by polynomial-size circuits in
$\mathcal C$, then no natural unitary property is $\epsilon$-useful against
$\mathcal C$ relative to this implementation notion.
\end{proposition}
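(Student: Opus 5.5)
The proof will mirror the argument of \Cref{thm:quantum-natural-proofs-barrier} essentially verbatim, with $\al_T$ replaced by $\matdesc_T$ and Haar measure on states replaced by Haar measure on $\mathrm U(2^n)$. Suppose for contradiction that some natural unitary promise property $\{P_n^{\mathrm{yes}},P_n^{\mathrm{no}}\}$ is $\epsilon$-useful against $\mathcal C$, and let its constructivity algorithm $F$, precision threshold $\tau$, and largeness constant $c>0$ be given. We fix $T(n)=\tau(n)$ and $\lambda(n)=\ceil{cn}+1$, both polynomially bounded, and define the distinguisher $A(1^n,1^{\lambda},M)=F(1^n,M)$. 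This is a deterministic polynomial-time classical algorithm in the total input length, so it is a legitimate adversary for \Cref{def:pseudorandom-unitary-matrix-descriptions}.

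The first step is to show that, for all sufficiently large $n$, every unitary in the support of $D^{\mathrm U}_{n,\lambda(n),T(n)}$ lies in $P_n^{\mathrm{no}}$. Since $\lambda$ and $T$ are polynomial in $n$, efficient implementability gives a single polynomial $r(n)=p(n,\lambda(n),T(n))$ bounding the size of an $\epsilon$-approximate $\mathcal C$-implementation of every supported unitary. If infinitely many $n$ had a supported unitary outside $P_n^{\mathrm{no}}$, we would build a family $\{U_n\}$ choosing such a bad unitary at those $n$ and an arbitrary supported unitary elsewhere. This family is $\epsilon$-approximately implementable by size-$r(n)$ circuits in $\mathcal C$ yet not eventually in $P_n^{\mathrm{no}}$, contradicting usefulness. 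This diagonal selection step is the only place needing care. We must also check that the chosen notion of approximation and implementation is applied per unitary, so that the mixed family still qualifies. Phase invariance of the promise sets and of the approximation notion ensures that no ambiguity arises from global phases.

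With that in hand, constructivity gives $\Pr_{U\gets D^{\mathrm U}}[A=1]=0$ for large $n$. Largeness gives $\Pr_{V\sim\mathrm{Haar}}[A=1]\geq\Pr[V\in P_n^{\mathrm{yes}}]\geq 2^{-cn}>2^{-\lambda(n)}$. The advantage of $A$ therefore exceeds $2^{-\lambda(n)}$ for all sufficiently large $n$, contradicting \Cref{def:pseudorandom-unitary-matrix-descriptions}. I expect no genuine obstacle beyond the bookkeeping in the diagonal step. In particular, the argument never uses the specific approximation notion beyond its role in the definition of usefulness, which is why the proposition can stay agnostic to it.
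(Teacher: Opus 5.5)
Your proposal is correct and is the same argument the paper uses. The paper's proof simply says to rerun the state-preparation barrier (Theorem~\ref{thm:quantum-natural-proofs-barrier}) with $\matdesc_T$ in place of $\al_T$. You carry out that rerun explicitly: you set $T(n)=\tau(n)$ and $\lambda(n)=\lceil cn\rceil+1$, use the diagonal selection step to show every supported unitary is eventually in $P_n^{\mathrm{no}}$, and compare acceptance probability $0$ on the family against at least $2^{-cn}>2^{-\lambda(n)}$ under Haar.
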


\begin{proof}
The recognition algorithm for a useful property rejects every
supported unitary, but accepts a Haar-random unitary with the
probability guaranteed by largeness. Choosing the security guarantee
for these descriptions stronger than this acceptance gap contradicts
pseudorandomness, exactly as in the state-preparation argument.
\end{proof}

The underlying possibility was raised by Aaronson, who asked whether
polynomial-size quantum circuits can give rise to matrices that look
Haar-random even to exponential-time classical algorithms with access
to their entries~\cite[Section~3.2]{aaronson2016barbados}.
This proposition isolates the pseudorandom object needed for such a
barrier. Unlike our main state-preparation theorem, it does not by
itself construct that object from a cryptographic assumption.

\subsection{Example: Pauli spectrum concentration}
\label{ssec:unitary-natural-examples}
Nadimpalli, Vasconcelos, Parham and Yuen \cite{nadimpalli2024pauli} showed that for $\QAC^0$ circuits with sublinear ancillas, the Pauli spectrum of their single output observables is concentrated on low-degree Pauli strings. For an $n$-qubit  unitary $U$, let $Z_1$ denote the Pauli $Z$ operator acting on the first qubit, and consider the conjugated observable $A_U=U^\dagger Z_1U$. The Pauli expansion of $A_U$ is
\[
 A_U=U^\dagger Z_1U=\sum_{P\in\mathcal P_n}a_P(U)P,
 \qquad
 a_P(U)=\frac{1}{N}\operatorname{Tr}(P U^\dagger Z_1U).
\]
The \emph{Pauli weight above degree $k$} is defined as
\[
 W_{>k}(U)=\sum_{\operatorname{wt}(P)>k}|a_P(U)|^2.
\]
Nadimpalli, Vasconcelos, Parham and Yuen show that for $\QAC^0$ circuits with sublinear ancillas, $W_{>k}(U)$ is exponentially small in $n$ for $k=\lfloor n/2\rfloor$. This gives a natural certificate for unitaries that cannot be implemented by such circuits.

\emph{Property:} Let $P^{\mathrm{yes}}_n$ be the set of unitaries with $W_{>n/2}(U)\geq 1/2$, and $P^{\mathrm{no}}_n$ be the set of unitaries with $W_{>n/2}(U)\leq 1/4$. 

\emph{Constructivity:} $W_{>n/2}(U)$ can be computed from the full matrix description of $U$, up to the precision of the description. This is because we can for each $P \in \calP_n$ compute $a_P(U)$ from the matrix description of $U$ and then sum over all $P$ with weight greater than $n/2$.

\emph{Largeness.}
When $U$ is Haar-random, the expected value of the mass on each of the non-identity Paulis is the same $\Expect_U\abs{a_P(U)}^2 = \frac{1}{4^n -1 }$.  Therefore the expected Pauli mass on weights at most $k$ is $\sum_{j=1}^k\binom{n}{j} 3^j/(4^n-1) \leq e^{-\Omega(n)}$. Since the total Pauli
mass is one, Markov's inequality implies
\[
    \Pr_{U\sim\mathrm{Haar}}
    \left[W_{>k}(U)\geq\frac12\right]
    \geq 1-e^{-\Omega(n)}.
\]

\subsection{Implications for unitary synthesis}
\label{ssec:unitary-synthesis}

Our state construction first produces statistically Haar-like
amplitude lists using a random function, and then replaces that
function by a PRF. A corresponding construction for unitary
matrices would have an additional consequence: a resolution to the Aaronson-Kuperberg unitary synthesis problem \cite{aaronson2007quantum}. 

Brakerski and Yuen~\cite{brakerski2026scalablePRU} show that
scalable ROM-PRUs imply unitary synthesis. The following
observation derives the same conclusion directly from statistical
closeness of finite-precision matrix descriptions.

\begin{proposition}[Statistical matrix descriptions imply
oracle unitary synthesis]
\label{prop:statistical-matrices-synthesis}
Suppose that, for each $n$, a uniform quantum oracle algorithm
$\mathcal A_n$ uses polynomially many queries to a Boolean oracle
$f:\zo^{m(n)}\to\zo$, where $m(n)=\poly(n)$.
Assume that, for every $f$, its induced channel approximates
an $n$-qubit unitary $U_f$ to diamond-norm error at most $1/100$.

Let $D_n$ be the distribution of $U_f$ for uniformly random $f$.
If, at precision $T=n+10$,
\[
 \TVD\bigl(\matdesc_T(D_n),
           \matdesc_T(\Haar_n^{\mathrm U})\bigr)
 \leq\frac14,
\]
where $\Haar_n^{\mathrm U}$ is Haar measure on $\mathrm U(2^n)$,
then unitary synthesis is possible with polynomial query
complexity, polynomial Boolean-oracle input length, and
diamond-norm error at most $1/10$.
\end{proposition}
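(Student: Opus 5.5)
We transfer the Haar-random synthesis problem to the oracle world by a coupling argument, then apply a relaxed-net argument in the style of Brakerski and Yuen~\cite{brakerski2026scalablePRU}.

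\textbf{Coupling.} The TVD bound gives a coupling of $\matdesc_T(U_f)$ for random $f$ with $\matdesc_T(V)$ for Haar $V$ under which the two descriptions agree except with probability $1/4$. Precision $T=n+10$ is exactly enough that equal descriptions force close unitaries. Every entry of $U_f-V$ then has real and imaginary parts differing by at most $2^{-T}$. Hence the Frobenius, and thus operator, norm of the difference is at most $\sqrt{2}\,N2^{-T}=\sqrt2\cdot 2^{-10}$. This gives diamond distance at most $2\|U_f-V\|_{\mathrm{op}}\le 2^{-8}$.

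Consequently, for a Haar-random target $V$, the posterior over $f$ given $\matdesc_T(V)$ yields an oracle $f$ with $\|\mathcal A_n^f-V\|_\diamond\le 1/100+2^{-8}$, except with probability $1/4$ over $V$. So a random $V$ is synthesized with probability at least $3/4$ by some oracle drawn from a distribution depending only on $V$.

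\textbf{Worst case.} The synthesis problem asks that every $n$-qubit unitary $W$ be implementable with some oracle. We would reduce worst-case to Haar-average by random self-reducibility, which is the relaxed-net step. Given $W$, pick Haar-random $V$; then $V' = VW^\dagger$ is also Haar-random. With probability at least $3/4\cdot 3/4>1/2$ over $V$, both $V$ and $V'$ are good. Fix such a $V$, and let $f_1,f_2$ be the corresponding oracles. Define $g(b,x)=f_b(x)$, which raises the input length by one bit and leaves the query complexity polynomial.

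The algorithm runs $\mathcal A_n^{f_1}$ and then the inverse circuit $(\mathcal A_n^{f_2})^{-1}$. This realizes $V'^\dagger V=W$ up to roughly twice the per-call error, about $2(1/100+2^{-8})<1/10$. Inverting requires $\mathcal A_n$ to be a unitary oracle algorithm whose channel is close to a unitary, so that running it backwards approximates $U_f^\dagger$. We would argue this via Stinespring: the algorithm, including its ancillas, is a unitary circuit with queries, and its reversal has the same query count. Diamond closeness to a unitary implies the ancilla nearly decouples, so the reversed algorithm restricted to the input approximates the inverse. This step needs care and is a secondary obstacle.

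\textbf{Main obstacle.} The key remaining issue is making the diamond-norm error budget work through inversion with the constants given; the numbers $1/100$ and $2^{-8}$ leave enough room. The second issue is purely existential: we need that some good $V$ exists for every $W$, which is exactly the averaging argument above. No efficiency in choosing $V$ is required, since unitary synthesis only asks for the existence of the oracle.
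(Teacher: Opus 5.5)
Your coupling step is fine and matches the paper's: with probability at least $3/4$, a Haar unitary lies within operator norm $\sqrt2\,2^{-10}$ of some $U_f$.

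The worst-case step has a genuine gap. Running the reversed circuit $(\mathcal A_n^{f_2})^{-1}$ does not in general implement anything close to $U_{f_2}^\dagger$. The hypothesis only controls the channel $\rho\mapsto\mathrm{Tr}_{\mathrm{anc}}[Q(\rho\otimes\ket{0}\bra{0})Q^\dagger]$, and the algorithm may leave junk in its ancillas. Approximate decoupling gives $Q(\ket{\psi}\ket{0})\approx U_f\ket{\psi}\otimes\ket{\phi}$ for some fixed junk $\ket{\phi}$. So $Q^\dagger$ inverts $U_f$ only when its ancilla starts in $\ket{\phi}$, not in $\ket{0}$.

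For a concrete failure, take a $Q$ with $\ket{\psi}\ket{0}\mapsto U\ket{\psi}\ket{1}$ and $\ket{\psi}\ket{1}\mapsto\ket{\psi}\ket{0}$. Its forward channel is exactly $U$, yet its reversal started from $\ket{0}$ implements the identity. You could repair this by first producing $\ket{\phi}$ with a dummy forward run. But that goes through Stinespring continuity, which in general only gives $O(\sqrt{\eps})$ closeness of dilations from diamond error $\eps$. With $\eps\approx 1/100$ this is already of order $1/10$, so your claim that the constants ``leave enough room'' is unsupported.

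The missing idea is that no inversion is needed, because Haar measure is invariant under $U\mapsto U^\dagger$. Given $W$, sample a Haar $U$; then both $U^\dagger$ and $UW$ are Haar distributed. The two ``good'' events are dependent, so use a union bound rather than your product $3/4\cdot 3/4$. This gives probability at least $1-1/4-1/4=1/2$ that both lie within $1/100$ of $\mathcal S=\{U_f\}$.

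That yields $V_1,V_2\in\mathcal S$ with $\|V_1V_2-W\|_{\mathrm{op}}\le\|V_1-U^\dagger\|_{\mathrm{op}}+\|V_2-UW\|_{\mathrm{op}}\le 2/100$. You implement $V_1V_2$ by two forward runs, first with $f_2$ and then with $f_1$, under your selector oracle $g(b,x)=f_b(x)$. The total diamond error is at most $4/100+2/100<1/10$. This is exactly the paper's argument.
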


\begin{proof}
Let $\mathcal S=\{U_f:f:\zo^{m(n)}\to\zo\}$, and let
$\mathcal B$ be the set of its truncated matrix descriptions.
Since $\matdesc_T(D_n)$ is supported on $\mathcal B$,
statistical closeness gives
\[
 \Pr_{U\sim\Haar_n^{\mathrm U}}
 [\matdesc_T(U)\in\mathcal B]\geq\frac34.
\]
Two unitaries with the same truncated description satisfy
\[
 \|U-V\|_{\mathrm{op}}
 \leq\|U-V\|_{\mathrm F}
 \leq2\sqrt2\,2^{n-T}<\frac1{100}.
\]
Thus a Haar-random unitary is within operator-norm distance
$1/100$ of $\mathcal S$ with probability at least $3/4$.

Following the relaxed-net composition argument
of~\cite[Corollary~3.5]{brakerski2026scalablePRU}, fix any
target unitary $W$ and sample Haar-random $U$.
Both $U^\dagger$ and $UW$ are Haar distributed, so with positive
probability both are within $1/100$ of $\mathcal S$.
There are therefore $V_1,V_2\in\mathcal S$ satisfying
\[
 \|V_1V_2-W\|_{\mathrm{op}}
 \leq\|V_1-U^\dagger\|_{\mathrm{op}}
     +\|V_2-UW\|_{\mathrm{op}}
 \leq\frac2{100}.
\]

Combine the Boolean oracles corresponding to $V_1,V_2$
using a selector bit, and run their implementations in the
order that implements $V_1V_2$.
Operator-norm error $\delta$ between unitaries gives
diamond-norm error at most $2\delta$ between their channels.
Including the two implementation errors, the resulting
channel is within $4/100+2/100<1/10$ of the channel of $W$.
The construction uses polynomially many queries to a Boolean
oracle of input length $m(n)+1$.
\end{proof}

This argument relies essentially on statistical closeness, since membership in $\mathcal B$ need not be efficiently decidable. It therefore shows that the statistical random-function strategy would imply unitary synthesis, not that unitary synthesis follows from our computational matrix-description criterion.

\printbibliography
\end{document}